\newtheorem{definition}{Definition}
\newtheorem{lemma}{Lemma}
\newtheorem{proposition}{Proposition}
\newtheorem{condition}{Condition}
\newcommand{\anc}{\text{Anc}}
\newcommand{\ant}{\text{Ant}}
\tikzstyle{ndint} = [draw, semithick, shape=rectangle, minimum size=20pt,inner sep=0pt]
\tikzstyle{ndout} = [draw, semithick, shape=circle, minimum size=20pt,inner sep=0pt]
\tikzstyle{ndlat} = [draw, semithick, shape=circle, minimum size=20pt,inner sep=0pt, fill=lightgray]
\tikzstyle{ndsel} = [draw, semithick, shape=circle, minimum size=20pt,inner sep=0pt, fill=lightgray, double]
\tikzstyle{arint} = [style={->,>=Latex,thick}] 
\tikzstyle{arout} = [style={->,>=Latex,thick}] 
\tikzstyle{arlout} = [style={->,>=Latex,thick}] 
\tikzstyle{arlat} = [style={<->,>=Latex,thick}] 
\tikzstyle{out} = [style={o->,>=stealth',thick,draw,fill}] 
\tikzstyle{hut} = [style={<-,>=Latex,style=semithick}]
\tikzstyle{tut} = [style=semithick]
\tikzstyle{tuh} = [style={->,>=Latex,style=semithick}]
\tikzstyle{huh} = [style={<->,>=Latex,style=semithick}]
\tikzstyle{hus} = [style={<-{Rays[n=6]},>=Latex,style=semithick}]
\tikzstyle{tus} = [style={-{Rays[n=6]},style=semithick}]
\tikzstyle{sut} = [style={{Rays[n=6]}-,style=semithick}]
\tikzstyle{suh} = [style={{Rays[n=6]}->,>=Latex,style=semithick}]
\tikzstyle{sus} = [style={{Rays[n=6]}-{Rays[n=6]},style=semithick}]
\newcommand{\tuh}{\mathrel{\rightarrow}}
\newcommand{\hut}{\mathrel{\leftarrow}}
\newcommand{\huh}{\mathrel{\leftrightarrow}}
\newcommand{\hus}{\mathrel{{\leftarrow\mkern-11mu\ast}}}
\newcommand{\suh}{\mathrel{{\ast\mkern-11mu\to}}}
\newcommand{\sut}{\mathrel{{\ast\mkern-11mu\relbar\mkern-9mu\relbar}}}
\newcommand{\tus}{\mathrel{{\relbar\mkern-11mu\relbar\mkern-11mu\ast}}}
\newcommand{\sus}{\mathrel{{\ast\mkern-11mu\relbar\mkern-9mu\relbar\mkern-11mu\ast}}}
\newcommand{\tut}{\mathrel{{\relbar\mkern-9mu\relbar}}}
\title{$\sigma$-Maximal Ancestral Graphs}
\author[1]{\href{mailto:<binghuayao71210@gmail.com>}{Binghua Yao}{}}
\author[1]{\href{mailto:<J.M.Mooij@uva.nl>}{Joris M. Mooij}{}}
\affil[1]{%
    Korteweg-de Vries Institute\\
    University of Amsterdam\\
    Amsterdam, The Netherlands
}
\begin{document}
\maketitle

\begin{abstract}%
Maximal Ancestral Graphs (MAGs) provide an abstract representation of Directed Acyclic Graphs (DAGs) with latent (selection) variables. These graphical objects encode information about ancestral relations and d-separations of the DAGs they represent. This abstract representation has been used amongst others to prove the soundness and completeness of the FCI algorithm for causal discovery, and to derive a do-calculus for its output. One significant inherent limitation of MAGs is that they rule out the possibility of cyclic causal relationships. In this work, we address that limitation. We introduce and study a class of graphical objects that we coin ``$\sigma$-Maximal Ancestral Graphs'' (``$\sigma$-MAGs''). We show how these graphs provide an abstract representation of (possibly cyclic) Directed Graphs (DGs) with latent (selection) variables, analogously to how MAGs represent DAGs. We study the properties of these objects and provide a characterization of their Markov equivalence classes.
\end{abstract}

\section{Introduction}\label{sec:intro}
Maximal Ancestral Graphs (MAGs) were introduced by \citet{richardson2002ancestral} to provide an abstract representation of Directed Acyclic Graphs (DAGs) with latent variables (where the latent variables can either be marginalized out, or conditioned out in case of selection bias). MAGs encode information about ancestral relations and d-separations of the DAGs they represent. Key results in the theory of MAGs are various characterizations of their Markov equivalence classes \citep{spirtes1996polynomial,Ali++2005,ZhaoZhengLiu2005,AliRichardsonSpirtes2009}. MAGs provide an appropriate level of abstraction for various complex causal reasoning tasks. For example, MAGs have been used to prove the soundness and completeness of the FCI algorithm for causal discovery \citep{SMR1995,Ali++2005,Zhang2008_JMLR}, and to derive a causal do-calculus for its output \citep{Zhang2008_AI}.

One inherent limitation of MAGs is that they rule out the possibility of cyclic causal relationships.
Both feedback loops and the merging of variables may lead to directed cycles in the causal graph, and this requires the consideration of directed graphs (DGs) rather than directed acyclic graphs (DAGs).
A first attempt to generalize MAGs to cyclic models are the Maximal Almost Ancestral Graphs (MAAGs) introduced by \citet{Strobl2015}.
While MAAGs do represent ancestral relations of DGs with latent (selection) variables, they fail to generalize the other key feature of MAGs: MAAGs do \emph{not} capture the $d$-separation properties of the DGs that they represent.
Another approach was taken by \citet{claassen2023establishing}.
Already in \citeyear{richardson1997characterization},
\citeauthor{richardson1997characterization} characterized Markov equivalence for DGs (without latent (selection) variables), providing necessary and sufficient conditions for when two DGs entail the same $d$-separations.
Building on this, \citet{claassen2023establishing} proposed Cyclic Maximal Ancestral Graphs (CMAGs) as MAG-like representations of DGs without latent (selection) variables. 
While CMAGs contain enough information to characterize the Markov equivalence of the DGs they represent, it is not clear at present whether the $d$-separation properties can be read off easily from the CMAG.
Furthermore, it is not clear how CMAGs can be generalized to allow for latent (selection) variables.
In conclusion, these proposals provide only partial solutions at best.

Another limitation of existing approaches is that they focus on representing $d$-separations.
While the $d$-separation Markov property holds for DAG models (e.g., acylic structural causal models and causal Bayesian networks \citep{Lauritzen90}) and specific DG models (e.g., linear structural equation models \citep{Spirtes94,Koster96}), it fails to hold for DG models in general \citep{spirtes95,neal2000}.

Recently, a more general theory of cyclic structural causal models (SCMs) has been developed \citep{BFPM21}. The subclass of \emph{simple} SCMs was identified as having particularly convenient mathematical properties, while providing a substantive extension of the class of acyclic SCMs. In particular, the $d$-separation Markov property for acyclic SCMs extends to a more generally applicable $\sigma$-separation Markov property for simple SCMs \citep{forre2017markov,BFPM21}. 
In general, a $\sigma$-separation in the DG of a simple SCM implies the corresponding $d$-separation, but not vice versa.
The $d$-separation Markov property, which may imply additional conditional independence relations, only holds for certain subclasses of simple SCMs, in particular for acyclic SCMs, linear SCMs, and SCMs with discrete variables \citep{forre2017markov,BFPM21}.

In this context, a natural question is whether the notion of MAGs can be extended to represent the $\sigma$-separation properties of (possibly cyclic) DGs with latent (selection) variables. In this work, we answer this question affirmatively and in full generality.

Our contributions are as follows. We introduce and study a class of graphical objects that we coin ``$\sigma$-Maximal Ancestral Graphs'' (``$\sigma$-MAGs''). We show how these graphs provide an abstract representation of (possibly cyclic) directed graphs with latent (selection) variables, similarly to how MAGs represent DAGs with latent (selection) variables. By drawing inspiration from \citep{spirtes1996polynomial}, we build up the theory of $\sigma$-MAGs culminating in a characterization of their Markov equivalence classes.

In this paper, we define how a \( \sigma \)-MAG represents a Directed Mixed Graph (DMG), rather than a DG with latent variables. The reason is that marginalizing over latent variables in DGs is already well-established: DGs can be marginalized into DMGs by a graphical procedure known as ``marginalization'' or ``latent projection'' \citep{BFPM21}. To avoid redundancy and to streamline our exposition, we focus directly on the abstraction from DMGs to \( \sigma \)-MAGs, which highlights our main contribution: capturing the effects of conditioning on latent selection variables in cyclic settings.

\section{Preliminaries}
In this work, we propose a new class of mixed graphs, referred to as \emph{$\sigma$-Maximal Ancestral Graphs} ($\sigma$-MAGs), which provide a representation for sets of Directed Mixed Graphs (DMGs). We begin by establishing some terminologies.

$\sigma$-MAGs have multiple edge types. In addition to the three edge types for DMGs (\(\tuh, \hut, \huh\)), there is an undirected edge (\(\tut\)). Each edge \( a \sus b \) has two \emph{edge marks}, one at each node, with each edge mark being either a \emph{tail} or an \emph{arrowhead}. We use the ``\(*\)'' symbol to denote any of these two edge marks. Thus, the notation \( a \sus b \) represents all four possible edge types between \( a \) and \( b \). Edges of the form \( a \hus b \) and \( b \suh a \) are called \emph{into} \( a \). Edges of the form \( a \tus b \) and \( b \sut a \) are called \emph{out of} \( a \).
In order to define $\sigma$-MAGs, we extend the common definitions in DMGs to accommodate the new set of edge types.

\begin{definition}
Let $\mathcal{H}=(\mathcal{V},\mathcal{E})$ be a mixed graph with nodes $\mathcal{V}$ and edges $\mathcal{E}$ of the types $\{\tuh, \hut, \huh,\tut\}$. Let $a,b,c,v_0,v_n\in\mathcal{V}$.
\begin{enumerate}
    \item If there is an edge $a\sus b$ between $a$ and $b$, we say that $a$ and $b$ are \emph{adjacent} in $\mathcal{H}$.
    \item A triple of distinct nodes $(a, b, c)$ in $\mathcal{H}$ is called \emph{unshielded} if $b$ is adjacent to both $a$ and $c$ in $\mathcal{H}$, but $a$ is not adjacent to $c$ in $\mathcal{H}$.
    \item A \emph{walk} between $v_0$ and $v_n$ in $\mathcal{H}$ is a finite alternating sequence of nodes and edges:\footnote{Without extra explanations, we consider a walk (path) between $v_0$ and $v_n$ in the form of $v_0\sus v_1\sus\cdots \sus v_{n-1}\sus v_n$.}
    $$
    \langle v_0,e_1,v_1,\ldots,v_{n-1},e_n,v_n\rangle,
    $$
    for some $n\geq0$, such that for every $i=1,\ldots,n$, we have $e_i=v_{i-1}\sus v_i\in\mathcal{E}$.
    \item A walk is called a \emph{path} if no node appears more than once on the walk.
    \item A \emph{directed walk (path)} from $v_0$ to $v_n$ in $\mathcal{H}$ is a walk (path) of the form:
    $$
    v_0\rightarrow v_1\rightarrow\cdots\rightarrow v_{n-1}\rightarrow v_n,
    $$
    for some $n\geq0$.
    \item An \emph{anterior walk (path)} from $v_0$ to $v_n$ in $\mathcal{H}$ is a walk (path) of the form:
    $$
    v_0 \tus v_1 \tus \cdots\tus v_{n-1}\tus v_n,
    $$
    for some $n\geq0$.
    \item A \emph{directed cycle} is a directed walk $a\rightarrow\cdots\rightarrow b$, concatenated with the directed edge $b\rightarrow a$.
    \item An \emph{almost directed cycle} is a directed walk $a\rightarrow\cdots\rightarrow b$, concatenated with the bidirected edge $b\huh  a$.
    \item If there is a directed walk from $a$ to $b$ in $\mathcal{H}$, we say that $a$ is an ancestor of $b$ in $\mathcal{H}$, and we write $a\in\anc_\mathcal{H}(b)$. For a set $A\subseteq\mathcal{V}$, we define $\anc_\mathcal{H}(A)=\bigcup_{v\in A}\anc_\mathcal{H}(v)$.
    \item A triple of consecutive nodes $v_{k-1}\sus v_k\sus v_{k+1}$ on a walk in $\mathcal{H}$ is called a \emph{collider} if it is of the form $v_{k-1}\suh v_k \hus v_{k+1}$.
    \item A triple of consecutive nodes $v_{k-1}\sus v_k\sus v_{k+1}$ on a walk in $\mathcal{H}$ is called a \emph{non-collider} if it is of the form $v_{k-1}\sus v_k\tus v_{k+1}$ or $v_{k-1}\sut v_k\sus v_{k+1}$. Furthermore, we also refer to the endpoints $v_0$ and $v_n$ of a walk between $v_0$ and $v_n$ in $\mathcal{H}$ as \emph{non-colliders}.
    \item Let \( v_0, v_n \in \mathcal{V} \) and \( Z \subseteq \mathcal{V} \). If \( \pi \) is a path between \( v_0 \) and \( v_n \) in \( \mathcal{H} \), then the \emph{Collider Distance Sum to Z}  of \( \pi \) is given by  
\[
\sum_{i=0}^{n} l_\mathcal{H}(v_i, Z) \cdot 1_{\{ v_i \text{ is a collider on } \pi \}},
\]
where \( l_\mathcal{H}(v_i,Z) \) denotes the length of the shortest directed path from \( v_i \) to \( Z \) in \( \mathcal{H} \).

\end{enumerate}
\end{definition}

\section{Definition \& Characterization} \label{section:1}
In this section, we introduce the class of $\sigma$-MAGs, which encodes certain ancestral relationships and $\sigma$-connections of DMGs. Leveraging this property, we will later introduce the corresponding separation criterion for $\sigma$-MAGs and characterize their Markov equivalence.

To facilitate the definition of $\sigma$-MAGs, we first define the notion of inducing paths in mixed graphs in analogy to inducing paths in DAGs \citep{VermaPearl1990}. This differs from the related notion of $\sigma$-inducing path in DMGs (Definition~\ref{sigmaSeparation}).

\begin{definition}[Inducing Walk (Path)]
In a mixed graph $\mathcal{H}$ with nodes $ \mathcal{V} $, and edges $ \mathcal{E} $ of the types $\{\rightarrow, \leftarrow, \huh , -\}$, a walk (path) \( \pi \) in \(\mathcal{H}\) between \( v_0 \) and \( v_n \) is called an \emph{inducing walk (path)} if every node on \( \pi \), except for the endpoints, is a collider and belongs to \( \anc_\mathcal{H}(\{v_0, v_n\}) \).  
\end{definition}

Whereas in MAGs, undirected edges appear in case of conditioning (for example when modeling selection bias), in $\sigma$-MAGs they can also encode the existence of directed cycles in the DMGs that the $\sigma$-MAG represents. One can sometimes distinguish the two by looking at the connectivities of the undirected edges in the $\sigma$-MAG, which motivates the following definition. 

\begin{definition}[Complete and Incomplete Neighborhoods]
In a mixed graph $\mathcal{H}$ with nodes $ \mathcal{V} $, and edges $ \mathcal{E} $ of the types $\{\rightarrow, \leftarrow, \huh , -\}$, the \emph{neighborhood} of a node $a$ is the set of nodes connected to $a$ by undirected edges, denoted \(\text{Nbh}_\mathcal{H}(a) = \{v \in \mathcal{V} : a - v \in \mathcal{E}\}\). 

We say that the neighborhood of \(a\) is \emph{complete} if it forms a clique, meaning that for any two nodes \(b, c \in \text{Nbh}_\mathcal{H}(a)\), we have \(b - c \in \mathcal{E}\). Otherwise, the neighborhood of \(a\) is said to be \emph{incomplete}.
\end{definition}

As we will demonstrate later, complete neighborhoods in a $\sigma$-MAG may correspond to strongly connected components (See Definition~\ref{family relationships} for more details) in DMGs, while incomplete neighborhoods must correspond to ancestors of selection variables in DMGs.

We can now state the proposed abstract definition of $\sigma$-MAGs:

\begin{definition}[\(\sigma\)-MAG]  
A mixed graph \( \mathcal{H} \) with nodes \( \mathcal{V} \) and edges $\mathcal{E}$ of the types \( \{\rightarrow, \leftarrow, \huh , -\} \) is called a \emph{\( \sigma \)-maximal ancestral graph (\(\sigma\)-MAG)} if all of the following conditions hold:  
\begin{enumerate}  
    \item Between any two distinct nodes, there is at most one edge, and no node is adjacent to itself.  
    \item $\mathcal{H}$ is \emph{ancestral}: If \( \mathcal{H} \) contains an anterior path \( a \tus \cdots \tus b \) then it does not contain an edge \( a \hus b \).
    \item $\mathcal{H}$ is \emph{\( \sigma \)-maximal}:
    \begin{enumerate}
        \item $\mathcal{H}$ is \emph{maximal}: There is no inducing path between any two distinct non-adjacent nodes.
        \item \( \mathcal{H} \) is \emph{\(\sigma\)-complete}: If \( \mathcal{H} \) contains a triple \( a \suh b \tut c \), then \( a \) and \( c \) must be adjacent in \( \mathcal{H} \). Furthermore, if \( \mathcal{H} \) also contains an edge \( b \tut d \), then \( c \) and \( d \) must be adjacent in \( \mathcal{H} \).
    \end{enumerate}
\end{enumerate}  
\end{definition}

It follows immediately from this definition that MAGs (as defined in \citep{richardson2002ancestral})
can be seen as a subclass of \( \sigma \)-MAGs, namely those that have no edges into an undirected edge (the pattern \( v_i \suh v_j \tut v_k\) cannot occur in a MAG by definition).
One can show that the assumptions concerning such patterns also imply certain orientations.

\begin{restatable}[Fundamental Property of $\sigma$-MAGs]{lemma}{lemmadef} \label{oldDef}
Let \( \mathcal{H} \) be a \( \sigma \)-MAG. If \( \mathcal{H} \) contains a triple of the form \( a \suh b \tut c \), then the edge between \( a \) and \( c \) is of the same type as the edge between \( a \) and \( b \), and the neighborhoods of \( b \) and $c$ are both complete.
\end{restatable}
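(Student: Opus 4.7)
The plan is to derive both conclusions by interleaving the ancestrality and $\sigma$-completeness clauses in the definition of a $\sigma$-MAG. The $\sigma$-completeness clause supplies adjacencies; the ancestrality clause then constrains which marks those adjacencies can carry.

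First, the adjacency of $a$ and $c$ is immediate from the first sentence of $\sigma$-completeness applied to the triple $a \suh b \tut c$. To determine the type of the $a$-$c$ edge I would split on the type of the $a$-$b$ edge. In the case $a \to b$, the anterior path $a \to b \tut c$ combined with ancestrality rules out an arrowhead at $a$ on the $a$-$c$ edge; if the mark at $c$ were also a tail (so $a \tut c$), then the anterior path $b \tut c \tut a$ would force no arrowhead at $b$ on the $a$-$b$ edge, contradicting $a \to b$, so the only remaining option is $a \to c$. In the case $a \huh b$, each of the three orientations $a \tut c$, $a \to c$, and $c \to a$ yields an anterior path between $a$ and $b$ through $c$ that clashes with the arrowheads of $a \huh b$: the first two produce anterior paths from $a$ to $b$ violating the arrowhead at $a$, while the third produces the anterior path $b \tut c \to a$ violating the arrowhead at $b$. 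Hence the $a$-$c$ edge must be $a \huh c$, which matches the type of $a$-$b$.

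For the completeness of the neighborhoods, the key observation after the previous step is that there is now an arrowhead at $c$ in both cases, so both $a \suh b$ and $a \suh c$ hold. For any $d, d' \in \text{Nbh}_\mathcal{H}(b)$ with $d \ne d'$, the triple $a \suh b \tut d$ together with the edge $b \tut d'$ triggers the second sentence of $\sigma$-completeness and yields adjacency of $d$ and $d'$. The two anterior paths $d \tut b \tut d'$ and $d' \tut b \tut d$, combined with ancestrality, force both endpoint marks of the $d$-$d'$ edge to be tails, so $d \tut d'$. Hence $\text{Nbh}_\mathcal{H}(b)$ is complete, and an entirely symmetric argument at $c$, using the derived $a \suh c$, shows that $\text{Nbh}_\mathcal{H}(c)$ is complete.

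The main obstacle is the enumeration in the bidirected case ($a \huh b$): one has to check three alternative orientations of the $a$-$c$ edge and exhibit, for each, a different anterior path that violates ancestrality against the bidirected $a$-$b$ edge. The completeness-of-neighborhood part is comparatively routine, requiring only the observation that the first step of the proof upgrades the conclusion to $a \suh c$, after which exactly the same $\sigma$-completeness and ancestrality argument applies symmetrically at $c$.
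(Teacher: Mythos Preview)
Your proposal is correct and follows essentially the same approach as the paper's proof: both use $\sigma$-completeness to obtain the $a$--$c$ adjacency, then a case split on the type of the $a$--$b$ edge combined with ancestrality to pin down the $a$--$c$ orientation, and finally the second clause of $\sigma$-completeness plus ancestrality (via the two anterior paths through $b$) to force undirected edges among neighbors. Your treatment of the neighborhood completeness is in fact slightly cleaner than the paper's, since you take an arbitrary pair $d, d' \in \text{Nbh}_\mathcal{H}(b)$ and apply $\sigma$-completeness to the triple $a \suh b \tut d$ with the extra edge $b \tut d'$, whereas the paper only explicitly derives $c \tut d$ for each neighbor $d$ and then asserts completeness.
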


Further, it follows that a \( \sigma \)-MAG contains no directed cycles or almost directed cycles.

An example of a mixed graph that is not a valid $\sigma$-MAG is given in Figure~\ref{fig:nonmaximal_ancestral_graph}. Various examples of valid $\sigma$-MAGs are given in Figure~\ref{fig:representing_examples} (on the left).

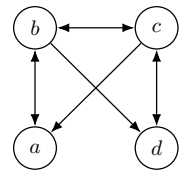
\begin{figure}[t]\centering
  \scalebox{0.8}{\begin{tikzpicture}
    \begin{scope}
        \node[ndout] (X1) at (-1,0) {$a$};
        \node[ndout] (X2) at (-1,2) {$b$};
        \node[ndout] (X3) at (1,2) {$c$};
        \node[ndout] (X4) at (1,0) {$d$};
        \draw[huh] (X1) edge (X2);
        \draw[huh] (X2) edge (X3);
        \draw[huh] (X3) edge (X4);
        \draw[tuh] (X2) edge (X4);
        \draw[tuh] (X3) edge (X1);
    \end{scope}
  \end{tikzpicture}}
  \caption{This mixed graph is not a valid $\sigma$-MAG, because it has an inducing path $a \huh b \huh  c \huh d$ while $a$ and $d$ are non-adjacent.\label{fig:nonmaximal_ancestral_graph}}
\end{figure}

The essential connection between $\sigma$-MAGs and DMGs is obtained by the following definition, which expresses in which way a $\sigma$-MAG ``represents'' a class of DMGs:

\begin{definition}[Representing DMGs by $\sigma$-MAGs] \label{representing}
Let $\mathcal{H}$ be a mixed graph with nodes $ \mathcal{V} $, and edges \(\mathcal{E}\) of the types \(\{\tuh, \hut, \huh, \tut\}\). Let \(\mathcal{G}\) be a DMG with nodes \(\mathcal{V}^+ = \mathcal{V} \cup \mathcal{S}\). We say that \(\mathcal{H}\) represents \(\mathcal{G}\) given \(\mathcal{S}\) if all of the following conditions hold:
\begin{enumerate}
    \item Between any two distinct nodes in $\mathcal{V}$, there is at most one edge, and no edge connects a node to itself in $\mathcal{H}$.
    \item Two distinct nodes \(a,b \in  \mathcal{V}\) are adjacent in \(\mathcal{H}\) if and only if there exists a \(\sigma\)-inducing path between \(a\) and \(b\) given \(\mathcal{S}\) in \(\mathcal{G}\).
    \item If \(a \hus  b\) in \(\mathcal{H}\), then \(a \notin \anc_\mathcal{G}(\{b\} \cup \mathcal{S})\).
    \item If \(a \tus  b\) in \(\mathcal{H}\), then \(a \in \anc_\mathcal{G}(\{b\} \cup \mathcal{S})\).
\end{enumerate}
\end{definition}

The above definition yields a systematic procedure for constructing a $\sigma$-MAG $\mathcal{H}$ from a given DMG $\mathcal{G}$ that includes selection variables. First, the adjacencies in $\mathcal{H}$ are determined based on the existence of $\sigma$-inducing paths given $\mathcal{S}$ between pairs of nodes in $\mathcal{G}$. Next, the orientations of the edges in $\mathcal{H}$ are derived by analyzing the ancestral relationships among the observed nodes in $\mathcal{G}$. Figure~\ref{construction} illustrates this procedure by showing how to construct a $\sigma$-MAG from a Directed Graph (DG) with latent (selection) variables.

\begin{figure}[t]
    \centering
    \subfloat[DG with latent (selection) variables\label{dg}]{
      \scalebox{0.8}{\begin{tikzpicture}
        \node[ndout] (a) at (-2,0) {$a$};
        \node[ndout] (b) at (0,0) {$b$};
        \node[ndout] (c) at (2,0) {$c$};
        \node[ndout] (d) at (0,2) {$d$};
        \node[ndlat] (u) at (-1,1) {$u$};
        \node[ndsel] (s) at (1,1) {$s$};
        \draw[tuh, bend left = 20] (a) edge (b);
        \draw[hut, bend right = 20] (a) edge (b);
        \draw[tuh] (b) edge (s);
        \draw[tuh] (c) edge (s);
        \draw[tuh] (u) edge (d);
        \draw[tuh] (u) edge (b);
      \end{tikzpicture}}
    }\\

    \subfloat[DMG after marginalizing out latent variables\label{dmg}]{
      \scalebox{0.8}{\begin{tikzpicture}
        \node[ndout] (a) at (-2,0) {$a$};
        \node[ndout] (b) at (0,0) {$b$};
        \node[ndout] (c) at (2,0) {$c$};
        \node[ndout] (d) at (0,2) {$d$};
        \node[ndsel] (s) at (1,1) {$s$};
        \draw[tuh, bend left = 20] (a) edge (b);
        \draw[hut, bend right = 20] (a) edge (b);
        \draw[tuh] (b) edge (s);
        \draw[hut] (s) edge (c);
        \draw[huh] (d) edge (b);
      \end{tikzpicture}}
    }\\

    \subfloat[$\sigma$-MAG constructed from the DMG\label{sigmamag}]{
      \scalebox{0.8}{\begin{tikzpicture}
        \node[ndout] (a) at (-2,0) {$a$};
        \node[ndout] (b) at (0,0) {$b$};
        \node[ndout] (c) at (2,0) {$c$};
        \node[ndout] (d) at (0,2) {$d$};
        \draw[tut] (a) edge (b);
        \draw[tut] (b) edge (c);
        \draw[tuh] (b) edge (d);
        \draw[hut] (d) edge (a);
      \end{tikzpicture}}
    }
    \caption{Constructing a $\sigma$-MAG from a DG with latent variable $u$ and latent selection variable $s$.}
    \label{construction}
\end{figure}
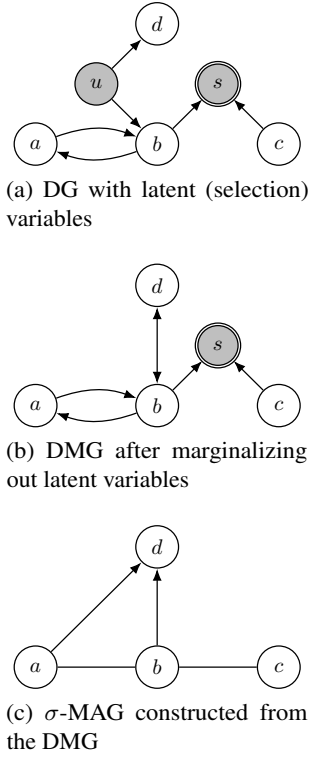

In this example, by marginalizing over latent variables (see Definition~\ref{margilization} for details), we obtain a DMG (Figure~\ref{dmg}) from a DG with latent and selection variables (Figure~\ref{dg}). Then, according to Definition~\ref{representing}, we construct the corresponding $\sigma$-MAG, as shown in Figure~\ref{sigmamag}. For instance, consider the path $a \tuh b \huh d$ in the DMG, which is $\sigma$-inducing given the selection variable $\mathcal{S} = \{s\}$. This implies that $a$ and $d$ are adjacent in the resulting $\sigma$-MAG. Moreover, since $a$ is an ancestor of $s$ in the DMG, while $d$ is not an ancestor of either $a$ or $s$, we can determine the orientation of the edge between $a$ and $d$ in the $\sigma$-MAG to be $a \tuh d$. By repeating this process for all relevant pairs of nodes in the DMG, we can systematically construct the entire $\sigma$-MAG.

Some examples illustrating how $\sigma$-MAGs represent DMGs are shown in Figure~\ref{fig:representing_examples}. In both $\mathcal{H}_1$ and $\mathcal{H}_2$, the neighborhood of \( a \) is complete; the key difference lies in whether there is an arrowhead pointing to \( a \). In particular, $\mathcal{H}_1$ contains the edge \( a \tuh d \), implying that in the DMGs represented by $\mathcal{H}_1$, the set \( \{a,b,c\} \) may either form a strongly connected component or simply be ancestors of the set \( \mathcal{S} \) (as illustrated in $\mathcal{G}_1$ and $\mathcal{G}_2$, respectively). In contrast, $\mathcal{H}_2$ contains the edge \( a \huh b \), and Lemma~\ref{oldDef} guarantees that \( \{a,b,c\} \) must form a strongly connected component, as in $\mathcal{G}_3$ and $\mathcal{G}_4$. Furthermore, if the neighborhood of \( a \) is incomplete, as in \( \mathcal{H}_3 \), then the set \( \{a,b,c\} \) must be a subset of \( \anc_{\mathcal{H}_3}(\mathcal{S}) \) in the corresponding DMGs represented by it (namely, \( \mathcal{G}_5 \) and \( \mathcal{G}_6 \)), although a partial strongly connected component may still exist, as seen in \( \mathcal{G}_6 \).

\begin{figure}[t]
    \centering
    \subfloat[$\mathcal{H}_1$ (left), $\mathcal{G}_1$ (middle), $\mathcal{G}_2$ (right)]{
    \scalebox{0.8}{\begin{tikzpicture}
        \begin{scope}
        \node[ndout] (a) at (-1,0) {$a$};
        \node[ndout] (b) at (1,0) {$b$};
        \node[ndout] (c) at (0,2) {$c$};
        \node[ndout] (d) at (0,-1) {$d$};
        \draw[tut] (a) edge (b);
        \draw[tut] (b) edge (c);
        \draw[tut] (c) edge (a);
        \draw[tuh] (a) edge (d);
        \end{scope}
    \end{tikzpicture}
    \qquad
    \begin{tikzpicture}
        \begin{scope}
        \node[ndout] (a) at (-1,0) {$a$};
        \node[ndout] (b) at (1,0) {$b$};
        \node[ndout] (c) at (0,2) {$c$};
        \node[ndout] (d) at (0,-1) {$d$};
        \draw[tuh] (a) edge (d);
        \draw[tuh, bend left=20] (a) edge (c);
        \draw[tuh, bend left=20] (b) edge (a);
        \draw[tuh, bend left=20] (c) edge (b);
        \end{scope}
    \end{tikzpicture}
    \qquad
    \begin{tikzpicture}
        \begin{scope}
        \node[ndout] (a) at (-1,0) {$a$};
        \node[ndout] (b) at (1,0) {$b$};
        \node[ndout] (c) at (0,2) {$c$};
        \node[ndout] (d) at (0,-1) {$d$};
        \node[ndsel] (s) at (0.5,1) {$s_{bc}$};
        \draw[tuh] (a) edge (c);
        \draw[tuh] (a) edge (b);
        \draw[tuh] (c) edge (s);
        \draw[tuh] (b) edge (s);
        \draw[huh] (d) edge (a);
        \end{scope}
    \end{tikzpicture}}
    }\\

    \subfloat[$\mathcal{H}_2$ (left), $\mathcal{G}_3$ (middle), $\mathcal{G}_4$ (right)]{
      \scalebox{0.8}{\begin{tikzpicture}
        \begin{scope}
        \node[ndout] (a) at (-1,0) {$a$};
        \node[ndout] (b) at (1,0) {$b$};
        \node[ndout] (c) at (0,2) {$c$};
        \node[ndout] (d) at (0,-1) {$d$};
        \draw[tut] (a) edge (b);
        \draw[tut] (b) edge (c);
        \draw[tut] (c) edge (a);
        \draw[huh] (d) edge (a);
        \draw[huh] (d) edge (b);
        \draw[huh] (d) .. controls (2,-0.5) and (2,1.5) .. (c);
        \end{scope}
    \end{tikzpicture}
    \quad
    \begin{tikzpicture}
        \begin{scope}
        \node[ndout] (a) at (-1,0) {$a$};
        \node[ndout] (b) at (1,0) {$b$};
        \node[ndout] (c) at (0,2) {$c$};
        \node[ndout] (d) at (0,-1) {$d$};
        \draw[huh] (d) edge (a);
        \draw[tuh, bend left=20] (a) edge (b);
        \draw[tuh, bend left=20] (a) edge (c);
        \draw[tuh, bend left=20] (b) edge (a);
        \draw[tuh, bend left=20] (c) edge (a);
        \end{scope}
    \end{tikzpicture}
    \qquad
    \begin{tikzpicture}
        \begin{scope}
        \node[ndout] (a) at (-1,0) {$a$};
        \node[ndout] (b) at (1,0) {$b$};
        \node[ndout] (c) at (0,2) {$c$};
        \node[ndout] (d) at (0,-1) {$d$};
        \draw[tuh, bend left=20] (a) edge (c);
        \draw[tuh, bend left=20] (c) edge (b);
        \draw[tuh, bend left=20] (b) edge (a);
        \draw[huh] (d) edge (a);
        \draw[huh] (d) edge (b);
        \end{scope}
    \end{tikzpicture}}
    }\\

    \subfloat[$\mathcal{H}_3$ (left), $\mathcal{G}_5$ (middle), $\mathcal{G}_6$ (right)]{
    \scalebox{0.8}{\begin{tikzpicture}
        \begin{scope}
        \node[ndout] (a) at (-1,0) {$a$};
        \node[ndout] (b) at (1,0) {$b$};
        \node[ndout] (c) at (0,2) {$c$};
        \node[ndout] (d) at (0,-1) {$d$};
        \draw[tut] (a) edge (b);
        \draw[tut] (c) edge (a);
        \draw[tuh] (a) edge (d);
        \end{scope}
    \end{tikzpicture}
    \qquad
    \begin{tikzpicture}
        \begin{scope}
        \node[ndout] (a) at (-1,0) {$a$};
        \node[ndout] (b) at (1,0) {$b$};
        \node[ndout] (c) at (0,2) {$c$};
        \node[ndout] (d) at (0,-1) {$d$};
        \node[ndsel] (s1) at (0,0) {$s_{ab}$};
        \node[ndsel] (s2) at (-0.5,1) {$s_{ac}$};
        \draw[tuh] (a) edge (d);
        \draw[tuh] (a) edge (s1);
        \draw[tuh] (b) edge (s1);
        \draw[tuh] (a) edge (s2);
        \draw[tuh] (c) edge (s2);
        \end{scope}
    \end{tikzpicture}
    \qquad
    \begin{tikzpicture}
        \begin{scope}
        \node[ndout] (a) at (-1,0) {$a$};
        \node[ndout] (b) at (1,0) {$b$};
        \node[ndout] (c) at (0,2) {$c$};
        \node[ndout] (d) at (0,-1) {$d$};
        \node[ndsel] (sa) at (0,0) {$s_{ab}$};
        \draw[tuh] (a) edge (d);
        \draw[tuh] (a) edge (sa);
        \draw[tuh, bend left=20] (a) edge (c);
        \draw[tuh] (b) edge (sa);
        \draw[tuh, bend left=20] (c) edge (a);
        \end{scope}
    \end{tikzpicture}}
    }
    \caption{Examples of $\sigma$-MAGs (left) representing DMGs (middle and right).
    (a) $\mathcal{H}_1$ represents $\mathcal{G}_1$ and represents $\mathcal{G}_2$ given $\mathcal{S} = \{s_{bc}\}$; 
    (b) $\mathcal{H}_2$ represents both $\mathcal{G}_3$ and $\mathcal{G}_4$;
    (c) $\mathcal{H}_3$ represents $\mathcal{G}_5$ given $\mathcal{S} = \{s_{ac}, s_{ab}\}$ and represents $\mathcal{G}_6$ given $\mathcal{S} = \{s_{ab}\}$.}

    \label{fig:representing_examples}
\end{figure}
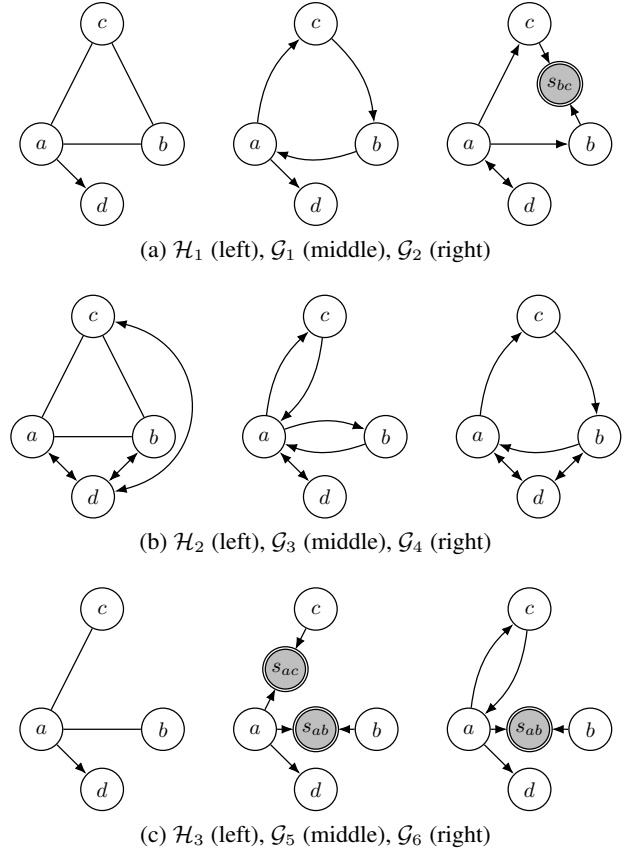

In the rest of this section we will show that the mixed graphs that represent DMGs are characterized as \( \sigma \)-MAGs. The proofs are provided in the Appendix.

A useful property of $\sigma$-MAGs that comes in handy often is:
\begin{restatable}{lemma}{lemmaant} \label{lemmaant}
Let $\mathcal{H}$ be a $\sigma$-MAG. If $\mathcal{H}$ contains an anterior path that starts with a directed edge:
\[
v_0 \rightarrow v_1 \tus  \cdots \tus  v_n
\]
for $n\geq2$, then $v_0$ belongs to the ancestors of $v_n$ in $\mathcal{H}$.
\end{restatable}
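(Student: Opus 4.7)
The plan is to prove Lemma~\ref{lemmaant} by induction on the length $n$ of the anterior path, leveraging Lemma~\ref{oldDef} (the Fundamental Property of $\sigma$-MAGs) to eliminate undirected edges from the front of the path one at a time.

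\textbf{Base case.} For $n=2$, we have $v_0 \to v_1 \tus v_2$. If $v_1 \to v_2$, then $v_0 \to v_1 \to v_2$ is a directed path, so $v_0 \in \anc_\mathcal{H}(v_2)$. Otherwise the second edge is undirected, giving $v_0 \to v_1 \tut v_2$. This is a triple of the form $a \suh b \tut c$ (with the arrowhead at $b=v_1$ being a genuine arrowhead), so Lemma~\ref{oldDef} applies: $v_0$ and $v_2$ are adjacent, and the edge between them is of the same type as $v_0 \to v_1$, namely a directed edge $v_0 \to v_2$. Hence $v_0 \in \anc_\mathcal{H}(v_2)$.

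\textbf{Inductive step.} Assume the statement holds for all anterior paths of length less than $n$ that begin with a directed edge, and consider an anterior path $v_0 \to v_1 \tus v_2 \tus \cdots \tus v_n$ of length $n \geq 3$. There are two cases for the second edge. If $v_1 \to v_2$, then $v_1 \to v_2 \tus \cdots \tus v_n$ is an anterior path of length $n-1$ starting with a directed edge, so the inductive hypothesis yields $v_1 \in \anc_\mathcal{H}(v_n)$; concatenating with $v_0 \to v_1$ gives $v_0 \in \anc_\mathcal{H}(v_n)$. If instead $v_1 \tut v_2$, then applying Lemma~\ref{oldDef} to the triple $v_0 \to v_1 \tut v_2$ yields a directed edge $v_0 \to v_2$, so $v_0 \to v_2 \tus v_3 \tus \cdots \tus v_n$ is an anterior path of length $n-1$ starting with a directed edge, and the inductive hypothesis immediately gives $v_0 \in \anc_\mathcal{H}(v_n)$.

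\textbf{Main obstacle.} The only non-routine step is the case where the path begins with the pattern $v_0 \to v_1 \tut v_2$, where the undirected edge threatens to break the directed chain. This is precisely where the $\sigma$-completeness condition (packaged in Lemma~\ref{oldDef}) does the essential work: it forces the existence of a directed shortcut $v_0 \to v_2$ that sidesteps the undirected edge, allowing the induction to proceed. Once this shortcut is available, the rest of the argument is a straightforward reduction of the path length.
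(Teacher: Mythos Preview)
Your proof is correct and uses essentially the same approach as the paper: an induction that invokes Lemma~\ref{oldDef} on the triple $v_0 \to v_1 \tut v_2$ to produce the directed shortcut $v_0 \to v_2$ whenever the second edge is undirected. The only cosmetic difference is the direction of the induction---the paper inducts forward along the path (showing $v_0 \in \anc_\mathcal{H}(v_k)$ for each $k$, using the last edge of the directed path from $v_0$ to $v_k$ when invoking Lemma~\ref{oldDef}), while you induct on the path length and shorten from the front---but the content is the same.
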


The following result demonstrates that every $\sigma$-MAG can represent at least one DMG, given an additional set of selection nodes.

\begin{restatable}{lemma}{lemmaone} \label{lemma1}
Let \(\mathcal{H}\) be a \(\sigma\)-MAG with nodes \( \mathcal{V} \). Then, there exists a DMG \( \mathcal{G} \) with nodes \( \mathcal{V}^+ = \mathcal{V} \cup \mathcal{S} \), such that \(\mathcal{H}\) represents \( \mathcal{G} \) given \( \mathcal{S} \).
\end{restatable}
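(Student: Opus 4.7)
The plan is to explicitly construct the DMG $\mathcal{G}$ and selection set $\mathcal{S}$ from $\mathcal{H}$, and then verify the four conditions of Definition~\ref{representing}. Partition $\mathcal{V}$ into the connected components of the subgraph of undirected edges of $\mathcal{H}$; call such a component $U$ \emph{cyclic} if some $u \in U$ has an incoming arrowhead in $\mathcal{H}$, and \emph{acyclic} otherwise. Iterated application of Lemma~\ref{oldDef} shows that every cyclic component is a clique (under undirected edges) and that each external arrowhead into $U$ is present into every node of $U$. I define $\mathcal{G}$ by copying the directed and bidirected edges of $\mathcal{H}$ verbatim, replacing each undirected edge $a \tut b$ in a cyclic component by the two directed edges $a \tuh b$ and $b \tuh a$ (so each cyclic component becomes strongly connected in $\mathcal{G}$), and replacing each undirected edge $a \tut b$ in an acyclic component by a fresh selection node $s_{ab}$ together with $a \tuh s_{ab}$ and $b \tuh s_{ab}$. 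Set $\mathcal{S}$ to be the collection of these selection nodes.

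Conditions 1 and 4 are essentially immediate: every edge $a \tus b$ in $\mathcal{H}$ is either directed (and copied), undirected in a cyclic component (giving $a \tuh b$ in $\mathcal{G}$), or undirected in an acyclic component (giving $a \in \anc_\mathcal{G}(\mathcal{S})$ via $s_{ab}$). For Condition~3, suppose $a \hus b$ in $\mathcal{H}$ and, for contradiction, that there is a directed path in $\mathcal{G}$ from $a$ to some element of $\{b\}\cup\mathcal{S}$. This path projects to an anterior walk in $\mathcal{H}$ starting with an edge out of $a$, ending either at $b$ itself or at a node $y$ such that some $x \tut y$ in $\mathcal{H}$ (when the last $\mathcal{G}$-step lands on a selection node $s_{x,y}$). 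I locate the first directed edge on this anterior walk; applying Lemma~\ref{lemmaant} to its tail-vertex yields an ancestral relation in $\mathcal{H}$. This ancestral relation contradicts either the ancestrality of $\mathcal{H}$ (when the walk ends at $b$) or the acyclicity of $y$'s undirected component (when it ends at $y$, since $y$ would then receive an incoming arrowhead). The remaining edge-case, in which every step of the anterior walk is undirected, is impossible: $a \hus b$ forces $a$'s undirected component $U$ to be cyclic, so such a walk stays inside $U$, but by construction the terminal node of the walk must be $b \notin U$ or the endpoint of an undirected edge in an \emph{acyclic} component.

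For Condition~2, the direction ``adjacent in $\mathcal{H}$ $\Rightarrow$ $\sigma$-inducing path in $\mathcal{G}$'' is handled edge-by-edge: directed, bidirected, and cyclic-undirected edges give trivial length-one paths, while an undirected edge $a \tut b$ in an acyclic component gives the length-two collider path $a \tuh s_{ab} \hut b$ whose only non-endpoint is $s_{ab} \in \mathcal{S}$. For the converse, I argue that every $\sigma$-inducing path $\pi$ between observed nodes $a, b$ in $\mathcal{G}$ projects to an inducing walk between $a$ and $b$ in $\mathcal{H}$: each occurrence of a selection node $s_{uv}$ on $\pi$ is necessarily a collider (selection nodes have no children in $\mathcal{G}$) and is replaced by the single edge $u \tut v$, while subwalks inside a cyclic undirected component are contracted using the clique structure. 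The $\sigma$-inducing property of $\pi$ transports the required ``internal collider $\in \anc$'' condition to $\mathcal{H}$ by the same ancestry analysis used in Condition~3. Trimming the resulting inducing walk to an inducing path then contradicts the maximality of $\mathcal{H}$ whenever $a$ and $b$ were assumed non-adjacent, completing the argument.

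The main obstacle is the ``only if'' direction of Condition~2, since it requires translating the $\sigma$-inducing property defined in terms of strongly connected components of $\mathcal{G}$ into the purely combinatorial inducing-walk property in $\mathcal{H}$, while correctly absorbing the cyclic-component subwalks and selection-node insertions introduced by the construction. Most of the bookkeeping already appears in the analysis for Condition~3, so the two verifications share a common technical core centered on Lemma~\ref{lemmaant} and the clique rigidity provided by Lemma~\ref{oldDef}.
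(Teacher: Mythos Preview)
Your construction is genuinely different from the paper's: you partition by undirected-connected components and call a component ``cyclic'' when it receives some arrowhead, whereas the paper decides edge-by-edge according to whether both endpoints have complete neighborhoods. By Lemma~\ref{oldDef} your cyclic components are always cliques with complete neighborhoods, but a clique with no incoming arrowhead is treated differently by the two constructions (you add selection nodes, the paper adds a directed 2-cycle). Your proof strategy for the hard direction of Condition~2 is also different: you propose a direct projection of the $\sigma$-inducing path, while the paper takes a \emph{shortest} $\sigma$-inducing path with minimal collider distance sum and argues by minimality that it already is (essentially) an inducing path in $\mathcal{H}$.

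Conditions 1, 3, 4 and the ``if'' half of Condition~2 are fine. The gap is in the ``only if'' half. Your described projection does not in general yield an inducing walk in $\mathcal{H}$: replacing $u \tuh s_{uv} \hut v$ by the undirected edge $u \tut v$ turns $u$ and $v$ into non-colliders on the projected walk, which is fatal unless they are endpoints. What actually saves you is a fact you never state: any node $u$ in an acyclic component has \emph{no incoming edges at all} in $\mathcal{G}$ (an arrowhead into $u$ in $\mathcal{H}$ would make the component cyclic, and cyclic-replacement edges do not touch acyclic components), so $u$ can only appear on $\pi$ as an endpoint; hence if any selection node occurs then $\pi = a \tuh s_{ab} \hut b$ and $a \tut b$ in $\mathcal{H}$. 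Likewise, ``contracted using the clique structure'' hides a real choice: for a non-endpoint segment inside a cyclic component $U$ you must pick the representative $r \in U$ to lie in $\anc_{\mathcal{H}}(\{a,b\})$, not merely in $\anc_{\mathcal{G}}(\{a,b\})$. This requires following the directed $\mathcal{G}$-path from the segment's collider to $\{a,b\}$ out to the first node where it exits $U$ and applying Lemma~\ref{lemmaant} there (or, if it never exits because $a$ or $b$ itself lies in $U$, using that endpoint as the representative, which is legal in a walk). Once these two points are made explicit your argument goes through, and it is arguably cleaner than the paper's minimality bookkeeping; but as written the projection step is underspecified precisely at the place you flag as ``the main obstacle''.
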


The following elementary properties demonstrate that certain key characteristics of DMGs are encoded in a $\sigma$-MAG that represents them.

\begin{restatable}{lemma}{lemmazero} \label{lemma0}
Let $\mathcal{H}$ be a $\sigma$-MAG that represents a DMG $\mathcal{G}$ given $\mathcal{S}$. Let $a$ and $b$ be distinct nodes in $\mathcal{H}$.
\begin{enumerate}
    \item $a\in\ant_\mathcal{H}(b)$ implies $a\in\anc_\mathcal{G}(\{b\}\cup \mathcal{S})$.
    \item If $\mathcal{H}$ contains the edge $a \suh b$, then there exists a $\sigma$-inducing path given $\mathcal{S}$ in $\mathcal{G}$ between $a$ and $b$ that is into $b$.
    \item If $\mathcal{H}$ contains the edge $a \huh b$, then there exists a $\sigma$-inducing path given $\mathcal{S}$ in $\mathcal{G}$ between $a$ and $b$ that is into both $a$ and $b$.
\end{enumerate}
\end{restatable}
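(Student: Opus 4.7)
The plan is to treat the three claims in turn, in each case exploiting the four conditions of Definition~\ref{representing}.

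For part 1, I would induct on the length $n$ of an anterior path $a = v_0 \tus v_1 \tus \cdots \tus v_n = b$ in $\mathcal{H}$. The base case $n = 1$ is exactly condition 4 of Definition~\ref{representing}: any edge $a \tus b$ in $\mathcal{H}$ forces $a \in \anc_\mathcal{G}(\{b\} \cup \mathcal{S})$. For the inductive step, splitting off the first edge gives $a \in \anc_\mathcal{G}(\{v_1\}\cup\mathcal{S})$, while the inductive hypothesis applied to $v_1 \tus \cdots \tus v_n$ gives $v_1 \in \anc_\mathcal{G}(\{b\}\cup\mathcal{S})$; a short case split (whether each of $a,v_1$ is an ancestor of the next node or already of $\mathcal{S}$), combined with transitivity of $\anc_\mathcal{G}$, yields $a \in \anc_\mathcal{G}(\{b\}\cup\mathcal{S})$.

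For part 2, condition 2 of Definition~\ref{representing} directly supplies some $\sigma$-inducing path $\pi$ in $\mathcal{G}$ between $a$ and $b$ given $\mathcal{S}$, and condition 3 applied to the edge $a \suh b$ (i.e.\ $b \hus a$) gives $b \notin \anc_\mathcal{G}(\{a\} \cup \mathcal{S})$. The goal is to show that $\pi$ must then be into $b$, by contradiction: if the last edge of $\pi$ is out of $b$, then in a DMG it must be $b \to v_{n-1}$, hence $b \in \anc_\mathcal{G}(v_{n-1})$. When $n = 1$ this immediately contradicts $b \notin \anc_\mathcal{G}(a)$. When $n \geq 2$, $v_{n-1}$ is a collider on $\pi$, and the defining property of a $\sigma$-inducing path given $\mathcal{S}$ (Definition~\ref{sigmaSeparation}) forces $v_{n-1}$ to be an ancestor of some node in $\{a,b\}\cup \mathcal{S}$; chaining this with $b \in \anc_\mathcal{G}(v_{n-1})$ again contradicts condition 3. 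Part 3 then follows by applying exactly the same reasoning symmetrically at both endpoints: the edge $a \huh b$ yields, via condition 3, both $a \notin \anc_\mathcal{G}(\{b\}\cup\mathcal{S})$ and $b \notin \anc_\mathcal{G}(\{a\}\cup\mathcal{S})$, so every $\sigma$-inducing path between them must be into both $a$ and $b$.

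The main obstacle will be the degenerate sub-case of part 2 in which $v_{n-1}$ and $b$ lie in the same strongly connected component of $\mathcal{G}$: then $b \in \anc_\mathcal{G}(v_{n-1})$ and $v_{n-1} \in \anc_\mathcal{G}(b)$ coexist without immediately violating condition 3. Resolving this cleanly will depend on the precise formulation of $\sigma$-inducing path in Definition~\ref{sigmaSeparation}; I expect one can either choose $\pi$ to be a shortest $\sigma$-inducing path, or contract cycles within such an SCC to extract a subwalk that is still $\sigma$-inducing but whose final collider is a genuine ancestor of $a$ or of $\mathcal{S}$ outside the SCC of $b$, so that the contradiction with condition 3 goes through.
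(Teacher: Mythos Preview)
Your treatment of part~1 is correct and matches the paper's. For parts~2 and~3, however, the claim that \emph{every} $\sigma$-inducing path between $a$ and $b$ must be into $b$ (respectively into both endpoints) is false. Take the DMG on $\{a,b,c\}$ with edges $a \huh c$, $b \tuh c$, $c \tuh b$, and $\mathcal{S}=\emptyset$: the representing $\sigma$-MAG has $a \huh b$, yet $a \huh c \hut b$ is a $\sigma$-inducing path out of $b$, since the collider $c$ lies in $\anc_{\mathcal{G}}(b)$, which satisfies Definition~\ref{sigmaSeparation} without contradicting $b \notin \anc_{\mathcal{G}}(\{a\}\cup\mathcal{S})$. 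This is precisely the obstacle you flag, and your proposed fixes do not resolve it as stated (both $\sigma$-inducing paths here have length two and one of them is out of $b$; also, $v_{n-1}$ need not be a collider in general---it can be an unblockable non-collider).

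The paper instead proves the existential statement. For part~2 it invokes Lemma~\ref{lemma12.2.5}: if \emph{all} $\sigma$-inducing paths were out of $b$ then $b \in \anc_{\mathcal{G}}(\{a\}\cup\mathcal{S})$, so by contrapositive some path is into $b$. For part~3 it does not merely apply part~2 symmetrically---that would yield one path into $a$ and a possibly different one into $b$---but uses Lemma~\ref{lemma12.2.6} to upgrade a path into $b$ to one into both endpoints, exploiting $a \notin \anc_{\mathcal{G}}(\{b\}\cup\mathcal{S})$. The proofs of both auxiliary lemmas are exactly your ``SCC contraction'' idea (locate where the path sits inside the strongly connected component of the endpoint and splice in a directed path running the other way), so your intuition is right; what is missing is actually carrying that construction out rather than asserting the unsalvageable universal claim.
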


On the other hand, every mixed graph that represents a DMG must be a $\sigma$-MAG.

\begin{restatable}{lemma}{lemmatwo} \label{lemma2}
Let $\mathcal{H}$ be a mixed graph with nodes $ \mathcal{V} $ and edges \(\mathcal{E}\) of the types \(\{\rightarrow, \leftarrow, \huh , -\}\). Let $\mathcal{G}$ be a DMG with nodes $\mathcal{V}^+ = \mathcal{V} \cup \mathcal{S}$. If $\mathcal{H}$ represents $\mathcal{G}$ given $\mathcal{S}$, then $\mathcal{H}$ is a $\sigma$-MAG.
\end{restatable}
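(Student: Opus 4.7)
The plan is to verify the four defining properties of a $\sigma$-MAG in turn. The no-multi-edge and no-self-loop conditions are immediate from Definition~\ref{representing}(1). The remaining three properties (ancestral, maximal, and $\sigma$-complete) I would establish by contradiction, using Lemma~\ref{lemma0} to translate graphical features of $\mathcal{H}$ into $\sigma$-inducing and ancestral features of $\mathcal{G}$, and then invoking Definition~\ref{representing} to derive a contradiction with the hypothesis that $\mathcal{H}$ represents $\mathcal{G}$ given $\mathcal{S}$.

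For the ancestral property, if $\mathcal{H}$ contained an anterior path $a \tus \cdots \tus b$ together with an edge $a \hus b$, then Lemma~\ref{lemma0}(1) would give $a \in \anc_\mathcal{G}(\{b\}\cup\mathcal{S})$, while Definition~\ref{representing}(3) would give $a \notin \anc_\mathcal{G}(\{b\}\cup\mathcal{S})$, a contradiction. For maximality, suppose for contradiction that there is an inducing path $\pi = \langle v_0, v_1, \ldots, v_n\rangle$ in $\mathcal{H}$ between non-adjacent $v_0$ and $v_n$. Each interior $v_i$ is a collider on $\pi$ and lies in $\anc_\mathcal{H}(\{v_0,v_n\})$, hence in $\anc_\mathcal{G}(\{v_0,v_n\}\cup\mathcal{S})$ by Lemma~\ref{lemma0}(1). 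Each edge of $\pi$ has arrowheads at its collider endpoints, so Lemma~\ref{lemma0}(2,3) provides, for each edge, a $\sigma$-inducing path in $\mathcal{G}$ with arrowheads at the collider endpoints. Concatenating these pieces at the colliders of $\pi$ yields a walk in $\mathcal{G}$ between $v_0$ and $v_n$, which I would argue is a $\sigma$-inducing path: the joining (collider) nodes lie in $\anc_\mathcal{G}(\{v_0,v_n\}\cup\mathcal{S})$ as required, and the collider/non-collider conditions along each constituent path are preserved in the concatenation. Definition~\ref{representing}(2) then forces $v_0$ and $v_n$ to be adjacent in $\mathcal{H}$, the contradiction.

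For $\sigma$-completeness, given a triple $a \suh b \tut c$ in $\mathcal{H}$, Lemma~\ref{lemma0}(2) yields a $\sigma$-inducing path $\pi_1$ in $\mathcal{G}$ between $a$ and $b$ that is into $b$; Definition~\ref{representing}(4) applied to $b \tut c$ yields both $b \in \anc_\mathcal{G}(\{c\}\cup\mathcal{S})$ and $c \in \anc_\mathcal{G}(\{b\}\cup\mathcal{S})$. Splicing $\pi_1$ together with directed paths in $\mathcal{G}$ witnessing these mutual ancestral relations produces a $\sigma$-inducing path between $a$ and $c$ in $\mathcal{G}$, forcing $a,c$ adjacent in $\mathcal{H}$ by Definition~\ref{representing}(2). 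The further claim for an additional edge $b \tut d$ follows by the same strategy, performing a case analysis on whether the directed paths witnessing the ancestral relations among $b,c,d$ end at each other or at selection variables, and in each case splicing the resulting directed paths with $\pi_1$ (or with a directed path between $c$ and $d$ through $b$ or $\mathcal{S}$) to construct a $\sigma$-inducing path between $c$ and $d$ in $\mathcal{G}$.

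The main obstacle throughout will be the path-concatenation arguments. Splicing $\sigma$-inducing paths together at colliders, or extending them through short directed paths to or from selection variables, requires carefully verifying the collider and non-collider conditions at every joining node—crucially, that non-collider interior nodes lie in the appropriate strongly connected component, as demanded by the definition of a $\sigma$-inducing path. These compatibility checks constitute the most delicate part of the proof, since $\sigma$-inducing paths impose conditions on non-colliders that are absent in the classical d-inducing path setting, and the cyclic setting introduces structural subtleties not present in the original MAG argument of \citet{richardson2002ancestral}.
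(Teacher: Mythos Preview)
Your approach for the first three properties (no multi-edges, ancestral, maximal) matches the paper's essentially verbatim, including the use of Lemma~\ref{lemma0} and the concatenation argument; the only minor slip is calling the concatenated object a $\sigma$-inducing \emph{path} when it is a priori only a walk---the paper closes this by invoking Proposition~\ref{proposition12.2.3}.

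For $\sigma$-completeness, however, you are missing the key observation that makes the splicing work. From the arrowhead $a \suh b$ and Definition~\ref{representing}(3) you get $b \notin \anc_\mathcal{G}(\{a\}\cup\mathcal{S})$, and in particular $b \notin \anc_\mathcal{G}(\mathcal{S})$. Combined with $b \in \anc_\mathcal{G}(\{c\}\cup\mathcal{S})$ this forces $b \in \anc_\mathcal{G}(c)$; and since $c \in \anc_\mathcal{G}(\{b\}\cup\mathcal{S})$ while $b \in \anc_\mathcal{G}(c)$ and $b \notin \anc_\mathcal{G}(\mathcal{S})$, one also gets $c \in \anc_\mathcal{G}(b)$. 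Hence $b$ and $c$ lie in the \emph{same strongly connected component} of $\mathcal{G}$. This is exactly what is needed: when you append to $\pi_1$ (which is into $b$) a directed path $b \tuh \cdots \tuh c$ lying entirely inside $\text{Sc}_\mathcal{G}(b)$, the node $b$ becomes a non-collider of the form $\suh b \tuh$, and it is \emph{unblockable} precisely because the next node is in $\text{Sc}_\mathcal{G}(b)$. Without the SCC fact, your splicing does not produce a $\sigma$-inducing walk---$b$ could be a blockable non-collider. The ``case analysis on whether the directed paths end at each other or at selection variables'' you propose is therefore unnecessary: the selection-variable cases are ruled out by the arrowhead at $b$.

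The same observation handles the further clause cleanly. From $a \suh b \tut d$ one gets $d \in \text{Sc}_\mathcal{G}(b)$ by the identical argument, so $c$ and $d$ lie in the same SCC, and any directed path from $c$ to $d$ inside that SCC is already a $\sigma$-inducing path (every interior node is an unblockable non-collider). There is no need to involve $\pi_1$ here at all; your suggestion to splice with $\pi_1$ for the $c,d$ adjacency is off-target.
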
 

Finally, we obtain the following result characterizing $\sigma$-MAGs as the mixed graphs that represent DMGs in the sense of Definition~\ref{representing}.

\begin{restatable}{theorem}{thmone} \label{thm1}
Let \( \mathcal{H} \) be a mixed graph with nodes \( \mathcal{V} \) and edges \( \mathcal{E} \) of the types \(\{\rightarrow, \leftarrow, \huh , -\}\). The following equivalence holds:  
\[
\begin{aligned}
&\text{There exists a DMG }\mathcal{G} \text{ with nodes } \mathcal{V}^+ = \mathcal{V} \cup \mathcal{S}, \\
&\quad \text{such that } \mathcal{H} \text{ represents } \mathcal{G} \text{ given }\mathcal{S}. \\
&\iff \mathcal{H} \text{ is a } \sigma\text{-MAG}.
\end{aligned}
\]
\end{restatable}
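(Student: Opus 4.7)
The plan is straightforward: this theorem is essentially the conjunction of Lemma~\ref{lemma1} and Lemma~\ref{lemma2}, which have already been stated and cover the two implications separately. The proof proposal is therefore simply to cite both lemmas in the two directions of the equivalence.

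For the ($\Leftarrow$) direction, I would assume $\mathcal{H}$ is a $\sigma$-MAG, and then apply Lemma~\ref{lemma1} verbatim: it produces a DMG $\mathcal{G}$ with nodes $\mathcal{V}^+ = \mathcal{V} \cup \mathcal{S}$ such that $\mathcal{H}$ represents $\mathcal{G}$ given $\mathcal{S}$. For the ($\Rightarrow$) direction, I would assume the existence of such a DMG $\mathcal{G}$ and apply Lemma~\ref{lemma2}, which directly yields that $\mathcal{H}$ is a $\sigma$-MAG. No further combinatorial work or auxiliary argument is required; the theorem is a one-line consequence of the two lemmas.

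Consequently, the real obstacles lie entirely in proving Lemmas~\ref{lemma1} and \ref{lemma2}, not in the theorem itself. For Lemma~\ref{lemma1}, the main difficulty will be a constructive one: given a $\sigma$-MAG $\mathcal{H}$, one must build a DMG $\mathcal{G}$ together with a set $\mathcal{S}$ of selection variables that realizes every adjacency of $\mathcal{H}$ via a $\sigma$-inducing path and matches the prescribed ancestral relations. A natural attempt is to introduce selection variables at incomplete neighborhoods (since Lemma~\ref{oldDef} shows these must correspond to ancestors of $\mathcal{S}$) and to replace undirected edges in complete neighborhoods by small strongly connected components, then add bidirected edges through fresh latent structure; verifying that the resulting $\sigma$-inducing-path adjacencies are \emph{exactly} those of $\mathcal{H}$, with no spurious new ones, will be the delicate part. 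For Lemma~\ref{lemma2}, the challenge is verifying each defining property of a $\sigma$-MAG in turn (ancestrality, maximality, and $\sigma$-completeness) from the four representation conditions, where Lemma~\ref{lemma0} and the closure of $\sigma$-inducing paths under concatenation will do most of the heavy lifting, with $\sigma$-completeness under the triple $a \suh b \tut c$ likely being the most involved case. Once those lemmas are in hand, Theorem~\ref{thm1} itself is immediate.
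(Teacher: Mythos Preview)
Your proposal is correct and matches the paper's own proof exactly: the paper simply states that Theorem~\ref{thm1} is ``obviously obtained by Lemma~\ref{lemma1} and Lemma~\ref{lemma2}.'' Your additional commentary on where the real work lies (the construction in Lemma~\ref{lemma1} and the case analysis in Lemma~\ref{lemma2}) is accurate and aligns with how the paper actually proves those lemmas.
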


\section{Separation Criterion} \label{section:2}
In this section, we extend the commonly used $m$-separation criterion for MAGs \citep{richardson2002ancestral} to $\sigma$-MAGs and demonstrate how it encodes $\sigma$-separation (See Definition~\ref{sigma separation} for more details) in the represented DMGs.

\begin{definition}[$m$-separation for $\sigma$-MAGs]
Let $\mathcal{H}$ be a $\sigma$-MAG with nodes $\mathcal{V}$. Let $X,Y,Z\subseteq  \mathcal{V}$ be subsets of the nodes.
\begin{enumerate}
    \item Consider a walk (path) $\pi$ in $\mathcal{H}$ with $n\geq 0$ edges:
      $$v_0\sus \cdots\sus v_n.$$
    We say that $\pi$ is \emph{$Z$-$m$-blocked} or \emph{$m$-blocked by $Z$} if any of the following conditions holds:
    \begin{enumerate}
        \item there exists a non-collider $v_k$ on $\pi$ in $Z$, or
        \item there exists a collider $v_k$ on $\pi$ that is not in $\anc_\mathcal{H}(Z)$, or
        \item there exists a subwalk (subpath) of the form \( v_{k-1} \suh v_k - v_{k+1} \) or \( v_{k-1} - v_k \hus  v_{k+1} \) on $\pi$.
    \end{enumerate}
    Conversely, we say that $\pi$ is \emph{$Z$-$m$-open} or \emph{$m$-open given $Z$} if:
    \begin{enumerate}
        \item all non-colliders on $\pi$ are not in $Z$, and
        \item all colliders on $\pi$ are in $\anc_\mathcal{H}(Z)$, and
        \item it contains no subwalk (subpath) of the form \( v_{k-1} \suh v_k - v_{k+1} \) or \( v_{k-1} - v_k \hus  v_{k+1} \).
    \end{enumerate}
    \item We say that \emph{$X$ is $m$-separated from $Y$ given $Z$} if every walk in $\mathcal{H}$ from a node in $X$ to a node in $Y$ is $m$-blocked by $Z$. This is denoted as:
    $$
    X\overset{m}{\underset{\mathcal{H}}{\perp}} Y\mid Z.
    $$
    If this property does not hold, we will write
    $$
    X\overset{m}{\underset{\mathcal{H}}{\not\perp}} Y\mid Z.
    $$
\end{enumerate}
\end{definition}

The following result is frequently employed to simplify proofs and to make the verification of \( m \)-separation more tractable in practice.

\begin{restatable}{proposition}{walkpath} \label{walk eq path}
Let $\mathcal{H}$ be a $\sigma$-MAG. For $Z\subseteq \mathcal{V}$, and $a,b\in\mathcal{V}$, the following are equivalent:
\begin{enumerate}
    \item there exists a $Z$-$m$-open walk between $a$ and $b$ in $\mathcal{H}$;
    \item there exists a $Z$-$m$-open path between $a$ and $b$ in $\mathcal{H}$.
\end{enumerate}
\end{restatable}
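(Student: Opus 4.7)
The implication (2) $\Rightarrow$ (1) is immediate since a path is a walk. For (1) $\Rightarrow$ (2), my plan is a minimality argument: pick a $Z$-$m$-open walk $\pi = \langle v_0, e_1, v_1, \ldots, e_n, v_n \rangle$ between $a$ and $b$ with the smallest possible number of edges, and show by contradiction that $\pi$ is already a path.

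Suppose some vertex is repeated on $\pi$. I would choose indices $i < j$ with $v_i = v_j$ and $j - i$ minimal, so that the closed subwalk $v_i, v_{i+1}, \ldots, v_j$ has no other repetition. Then I form the shortcut $\pi' = \langle v_0, \ldots, v_i, e_{j+1}, v_{j+1}, \ldots, v_n\rangle$, which is strictly shorter than $\pi$. Since $\pi$ and $\pi'$ agree elsewhere, the only triple that needs re-examination is $(v_{i-1}, v_i, v_{j+1})$, built from $e_i$ and $e_{j+1}$ (the boundary cases $i = 0$ or $j = n$ are easier and handled analogously, since $v_i$ then becomes an endpoint of $\pi'$). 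Showing that $\pi'$ is still $Z$-$m$-open would contradict the minimality of $\pi$ and complete the proof.

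I then case-split on the marks of $e_i$ at $v_i$ and of $e_{j+1}$ at $v_j$, using the openness of the original triples $(v_{i-1}, v_i, v_{i+1})$ and $(v_{j-1}, v_j, v_{j+1})$. If the new triple would make $v_i$ a non-collider in $Z$, at least one of $e_i, e_{j+1}$ has a tail at $v_i$, and openness of the corresponding original triple immediately forces $v_i \notin Z$. If $v_i$ becomes a collider in $\pi'$ and either $e_{i+1}$ carried an arrowhead at $v_i$ or $e_j$ carried an arrowhead at $v_j$ in $\pi$, then $v_i$ was already an open collider on $\pi$, so $v_i \in \anc_\mathcal{H}(Z)$ and the new collider is likewise open. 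If the new triple displays a forbidden pattern $v_{i-1} \suh v_i \tut v_{j+1}$, Lemma~\ref{oldDef} applies to the triple $v_{i-1} \suh v_i \tut v_{j+1}$ in $\mathcal{H}$, guaranteeing a direct edge between $v_{i-1}$ and $v_{j+1}$ of the same type as $e_i$; substituting this edge for $e_i, e_{j+1}$ preserves the mark at $v_{i-1}$ and yields an even shorter walk, whose openness at $v_{j+1}$ is checked by a short sub-case analysis that may iterate Lemma~\ref{oldDef}.

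The main obstacle is the remaining subcase, where $v_i$ becomes a collider on $\pi'$ (both $e_i, e_{j+1}$ arrowheads at $v_i$) but was a non-collider in both $\pi$-triples at $v_i$ and $v_j$, and moreover $v_i \notin \anc_\mathcal{H}(Z)$. Openness in $\pi$ combined with the $\sigma$-MAG prohibition against $\suh v \tut$ patterns then forces $e_{i+1}$ to be the directed edge $v_i \to v_{i+1}$ and $e_j$ to be the directed edge $v_{j-1} \hut v_j$. I then walk along the closed subwalk: under the assumption $v_i \notin \anc_\mathcal{H}(Z)$, no internal vertex $v_k$ (for $i < k < j$) can be a collider, since such a collider would lie in $\anc_\mathcal{H}(Z)$ and be reached from $v_i$ by the directed subpath built so far. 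This non-collider status, together with openness and the forbidden-pattern constraint at $v_k$, inductively forces $e_{k+1}$ to be the directed edge $v_k \to v_{k+1}$. But then $e_{j-1}$ and $e_j$ both carry arrowheads at $v_{j-1}$, making $v_{j-1}$ an open collider, so $v_{j-1} \in \anc_\mathcal{H}(Z)$ and therefore $v_i \in \anc_\mathcal{H}(Z)$, the required contradiction. Throughout, the delicate interplay between arrowheads and undirected edges is handled by the $\sigma$-completeness clause of the $\sigma$-MAG definition, as captured by Lemma~\ref{oldDef}.
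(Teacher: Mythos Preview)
Your proposal is correct and takes essentially the same approach as the paper: both shortcut a repeated node and case-split on the collider/non-collider status of the merged vertex, handling the forbidden $\suh v \tut$ pattern via Lemma~\ref{oldDef} and the ``new collider from two old non-colliders'' case by tracing a directed subwalk inside the loop to reach a collider in $\anc_\mathcal{H}(Z)$. The differences---you use a globally shortest walk and minimal $j-i$ where the paper uses first/last occurrences with explicit iteration, and you trace forward through undirected edges where the paper traces backward---are routine technical variants of the same argument.
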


Lemmas~\ref{lemma17a} to \ref{lemma18} establish the relation between $m$-separation in $\sigma$-MAGs and $\sigma$-separation in DMGs, as stated in Theorem~\ref{thm2}.

\begin{restatable}{lemma}{lemmaseventeena} \label{lemma17a}
Let $\mathcal{G}$ be a DMG with nodes $\mathcal{V}^+=\mathcal{V}\cup \mathcal{S}$, and let $\mathcal{H}$ be a $\sigma$-MAG that represents $\mathcal{G}$ given $\mathcal{S}$. Let $a,b\in  \mathcal{V}$ and $Z\subseteq \mathcal{V}$. If $\pi$ is a $m$-open path given $Z$ between $a$ and $b$ in $\mathcal{H}$, then every node on $\pi$ is in $\anc_\mathcal{G}(\{a,b\}\cup Z\cup \mathcal{S})$.
\end{restatable}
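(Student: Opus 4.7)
The plan is to classify each node $v_k$ on the $m$-open path $\pi = v_0 \sus \cdots \sus v_n$ by its role and handle the cases separately. For the endpoints $v_0 = a$ and $v_n = b$, membership in $\anc_\mathcal{G}(\{a,b\})$ is immediate. For an interior collider $v_k$ on $\pi$, $m$-openness gives $v_k \in \anc_\mathcal{H}(Z)$, and since a directed walk is also an anterior walk, Lemma~\ref{lemma0}.1 yields $v_k \in \anc_\mathcal{G}(Z \cup \mathcal{S})$. The substantive case is therefore a non-endpoint non-collider $v_k$.

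For such $v_k$, at least one of the two edges on $\pi$ adjacent to $v_k$ has a tail at $v_k$; without loss of generality, say $v_k \tus v_{k+1}$. The idea is to extract a maximal anterior subpath of $\pi$ of the form $v_k \tus v_{k+1} \tus \cdots \tus v_j$, where $v_j$ is either the endpoint $v_n$ or the first collider on $\pi$ strictly after $v_k$. Granting this, $v_k \in \ant_\mathcal{H}(v_j)$. If $v_j = v_n$, Lemma~\ref{lemma0}.1 gives $v_k \in \anc_\mathcal{G}(\{b\} \cup \mathcal{S})$. Otherwise $v_j$ is an interior collider on $\pi$, so $v_j \in \anc_\mathcal{H}(Z)$ by $m$-openness; concatenating the anterior subpath with a directed walk in $\mathcal{H}$ from $v_j$ into $Z$ produces an anterior walk from $v_k$ to $Z$, and Lemma~\ref{lemma0}.1 again yields $v_k \in \anc_\mathcal{G}(Z \cup \mathcal{S})$. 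Summing over all cases, every node of $\pi$ lies in $\anc_\mathcal{G}(\{a,b\} \cup Z \cup \mathcal{S})$.

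The main obstacle is proving that the anterior run actually propagates one more step: assuming the edge $v_i \sus v_{i+1}$ on $\pi$ has a tail at $v_i$ and that $v_{i+1}$ is a non-endpoint non-collider, the next edge $v_{i+1} \sus v_{i+2}$ must also have a tail at $v_{i+1}$. If the mark at $v_{i+1}$ on the $(v_i, v_{i+1})$ edge is an arrowhead, then non-colliderness at $v_{i+1}$ immediately forces a tail at $v_{i+1}$ on the other side, as required. The delicate subcase is when that mark is itself a tail, so the edge is undirected, $v_i \tut v_{i+1}$; the danger is then that the next edge produces the triple $v_i \tut v_{i+1} \hus v_{i+2}$, but this is precisely the pattern forbidden by condition (c) in the definition of an $m$-open path. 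Hence the anterior extension continues until a collider or endpoint is reached. This \emph{undirected-into-arrowhead} exclusion is the pivot that converts the $m$-open assumption into anterior structure inside $\mathcal{H}$; once that structure is in hand, Lemma~\ref{lemma0}.1 handles the transfer from $\mathcal{H}$ to $\mathcal{G}$. A symmetric argument applies when the available tail at $v_k$ is on the left edge, so we walk leftward instead.
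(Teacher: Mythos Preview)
Your proposal is correct and follows essentially the same approach as the paper's proof: case analysis on endpoint/collider/non-collider, with the non-collider case handled by following the anterior run rightward (or leftward) until hitting the endpoint or the first collider, using condition (c) of $m$-openness to rule out the $\tut\,\hus$ pattern and thereby guarantee the run propagates, then invoking Lemma~\ref{lemma0}.1 to transfer the anterior relation in $\mathcal{H}$ to an ancestral relation in $\mathcal{G}$. The paper phrases the non-collider case slightly more tersely---it directly locates the first edge of the form $v_{l-1}\tus v_l\hus v_{l+1}$ after $v_k$ and argues that $m$-openness forces $v_{l-1}\tuh v_l$, making $v_l$ a collider---but this is the same argument you spell out as a step-by-step propagation.
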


\begin{restatable}{lemma}{lemmaseventeenb} \label{lemma17b}
Let $\mathcal{G}$ be a DMG with nodes $\mathcal{V}^+=\mathcal{V}\cup \mathcal{S}$, and let $\mathcal{H}$ be a $\sigma$-MAG that represents $\mathcal{G}$ given $\mathcal{S}$. Let $a,b\in  \mathcal{V}$ and $Z\subseteq  \mathcal{V}$. If there exists a $Z$-$m$-open path between $a$ and $b$ in $\mathcal{H}$, then there exists a $(Z\cup \mathcal{S})$-$\sigma$-open path in $\mathcal{G}$ between $a$ and $b$.
\end{restatable}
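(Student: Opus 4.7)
The plan is to transform the given \(Z\)-\(m\)-open path \(\pi=\langle a=v_0,\ldots,v_n=b\rangle\) in \(\mathcal{H}\) into a \((Z\cup\mathcal{S})\)-\(\sigma\)-open walk \(\pi^*\) in \(\mathcal{G}\) by replacing each edge of \(\pi\) with a suitable walk in \(\mathcal{G}\), and then to invoke the walk-to-path reduction for \(\sigma\)-separation (the standard analogue of Proposition~\ref{walk eq path}, well-known in the literature on \(\sigma\)-separation) to extract a path.

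First I construct \(\pi^*\). For each edge \(e_i=v_{i-1}\sus v_i\) of \(\pi\), Definition~\ref{representing}.2 supplies a \(\sigma\)-inducing path \(\tau_i\) between \(v_{i-1}\) and \(v_i\) given \(\mathcal{S}\) in \(\mathcal{G}\). I refine the choice of \(\tau_i\) using Lemma~\ref{lemma0}.2--3 so that, whenever \(e_i\) has an arrowhead at an endpoint in \(\mathcal{H}\), \(\tau_i\) is also into that endpoint in \(\mathcal{G}\). Whenever \(e_i\) has a tail at an endpoint \(v_\ell\), Definition~\ref{representing}.4 gives \(v_\ell\in\anc_\mathcal{G}(\{v_{1-\ell}\}\cup\mathcal{S})\); I prepend or append a directed path in \(\mathcal{G}\) witnessing this ancestorship (routing through \(\mathcal{S}\) if necessary) so that the resulting segment \(\pi_i\) leaves \(v_\ell\) along a directed edge. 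Concatenating \(\pi_1,\ldots,\pi_n\) yields a walk \(\pi^*\) from \(a\) to \(b\) in \(\mathcal{G}\).

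Next I verify \((Z\cup\mathcal{S})\)-\(\sigma\)-openness of \(\pi^*\) case by case. Interior nodes of each \(\tau_i\) are colliders on a \(\sigma\)-inducing path and therefore lie in \(\anc_\mathcal{G}(\{v_{i-1},v_i\}\cup\mathcal{S})\); combining with Lemma~\ref{lemma17a} to obtain \(v_{i-1},v_i\in\anc_\mathcal{G}(\{a,b\}\cup Z\cup\mathcal{S})\) shows the collider \(\sigma\)-open condition holds at these nodes. Interior nodes of the auxiliary directed segments sit on a directed chain terminating at a segment endpoint or at a node of \(\mathcal{S}\), and the non-collider \(\sigma\)-open clause referring to strongly connected components handles them. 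At each meeting point \(v_i\) with \(0<i<n\), the construction preserves the mark agreement between \(\pi_i,\pi_{i+1}\) in \(\mathcal{G}\) and \(e_i,e_{i+1}\) in \(\mathcal{H}\), so the collider/non-collider status of \(v_i\) is preserved: non-colliders satisfy \(v_i\notin Z\) by \(m\)-openness of \(\pi\) and \(v_i\notin\mathcal{S}\) since \(v_i\in\mathcal{V}\); colliders satisfy \(v_i\in\anc_\mathcal{H}(Z)\), so Lemma~\ref{lemma0}.1 yields \(v_i\in\anc_\mathcal{G}(Z\cup\mathcal{S})\). The \(m\)-forbidden patterns \(v_{k-1}\suh v_k\tut v_{k+1}\) cannot appear on \(\pi\), and therefore do not arise at any meeting point of \(\pi^*\).

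The main obstacle will be the tail-endpoint case when \(v_\ell\in\anc_\mathcal{G}(\{v_{1-\ell}\}\cup\mathcal{S})\) is witnessed only via \(\mathcal{S}\): the auxiliary directed path then sends \(\pi^*\) through \(\mathcal{S}\), and I must confirm that these traversals through \(\mathcal{S}\) do not \(\sigma\)-block the walk. This is precisely where the flexibility of \(\sigma\)-separation compared to \(d\)-separation (via the strongly-connected-component exemption for non-colliders) is essential, and where the bulk of the technical bookkeeping will concentrate.
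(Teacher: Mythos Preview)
Your argument has a genuine gap at the collider verification step. You correctly observe that interior colliders of each $\tau_i$ lie in $\anc_\mathcal{G}(\{v_{i-1},v_i\}\cup\mathcal{S})$ and, via Lemma~\ref{lemma17a}, that $v_{i-1},v_i\in\anc_\mathcal{G}(\{a,b\}\cup Z\cup\mathcal{S})$. But the $(Z\cup\mathcal{S})$-$\sigma$-open condition for a collider requires membership in $\anc_\mathcal{G}(Z\cup\mathcal{S})$, not merely in $\anc_\mathcal{G}(\{a,b\}\cup Z\cup\mathcal{S})$. These sets differ whenever $a$ or $b$ is not already in $Z$, so $\pi^*$ is not in general $(Z\cup\mathcal{S})$-$\sigma$-open and the argument stalls.

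The paper's proof closes exactly this gap with a minimality trick. It first shows only that the concatenated walk $\mu$ enjoys the weaker property (all colliders in $\anc_\mathcal{G}(\{a,b\}\cup Z\cup\mathcal{S})$; all non-colliders unblockable or outside $Z\cup\mathcal{S}$). It then passes to a walk $\nu$ with this same property but a minimal number of colliders. Any collider on $\nu$ that is an ancestor of $\{a,b\}$ but not of $Z\cup\mathcal{S}$ can be short-circuited by a directed path to $a$ or $b$ avoiding $Z\cup\mathcal{S}$, producing a walk with strictly fewer colliders and contradicting minimality. Hence every collider on $\nu$ lies in $\anc_\mathcal{G}(Z\cup\mathcal{S})$, and $\nu$ is genuinely $\sigma$-open.

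This minimality step also makes your auxiliary directed paths at tail endpoints unnecessary. The paper never attempts to preserve tail marks at the junctions $v_i$; it simply allows a non-collider $v_i$ on $\pi$ to become a collider on $\mu$, and absorbs that via Lemma~\ref{lemma17a} plus the minimality argument. Your plan to route through $\mathcal{S}$ when $v_\ell\in\anc_\mathcal{G}(\mathcal{S})$ but $v_\ell\notin\anc_\mathcal{G}(\text{other endpoint})$ would in any case fail: the segment then does not reach the other endpoint at all, and even if patched, a node of $\mathcal{S}$ sitting on a directed chain is a blockable non-collider in the conditioning set, with no reason for the SCC exemption to apply.
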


\begin{restatable}{lemma}{lemmaeighteen} \label{lemma18}
Let $\mathcal{G}$ be a DMG with nodes $\mathcal{V}^+=\mathcal{V}\cup \mathcal{S}$, and let $\mathcal{H}$ be a $\sigma$-MAG that represents $\mathcal{G}$ given $\mathcal{S}$. Let $a,b\in \mathcal{V}$ and $Z\subseteq \mathcal{V}$. If there exists a $(Z\cup \mathcal{S})$-$\sigma$-open path between $a$ and $b$ in $\mathcal{G}$, then there exists a $Z$-$m$-open path in $\mathcal{H}$ between $a$ and $b$.
\end{restatable}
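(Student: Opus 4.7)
The plan is to start from a $(Z\cup\mathcal{S})$-$\sigma$-open path $\pi\colon a=v_0,v_1,\dots,v_n=b$ in $\mathcal{G}$, chosen extremally (for instance, minimizing the Collider Distance Sum to $Z\cup\mathcal{S}$ among shortest $\sigma$-open paths between $a$ and $b$), and to \emph{contract} it onto its observed subsequence to produce an $m$-open path in $\mathcal{H}$. Let $0=i_0<i_1<\dots<i_k=n$ index those positions with $v_{i_j}\in\mathcal{V}$, so that every other interior index of $\pi$ corresponds to a node in $\mathcal{S}$.

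The first step is to show that each subpath $\pi[v_{i_j},v_{i_{j+1}}]$ is a $\sigma$-inducing path given $\mathcal{S}$ in $\mathcal{G}$: every interior node of such a subpath lies in $\mathcal{S}\subseteq Z\cup\mathcal{S}$, and hence cannot be a non-collider on $\pi$ (else $\pi$ would be $\sigma$-blocked), so it must be a collider; and $\sigma$-openness of $\pi$ forces it into $\anc_\mathcal{G}(Z\cup\mathcal{S})$. This is exactly the $\sigma$-inducing condition, so by Definition~\ref{representing} the nodes $v_{i_j}$ and $v_{i_{j+1}}$ are adjacent in $\mathcal{H}$; calling this edge $e_j$ and concatenating, we obtain a path $\tilde\pi$ in $\mathcal{H}$ (the $v_{i_j}$ are distinct because $\pi$ is a path). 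It remains to show $\tilde\pi$ is $Z$-$m$-open.

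To verify $Z$-$m$-openness I would address each of the three blocking conditions in turn. For an interior collider $v_{i_j}$ on $\tilde\pi$, both incident edges $e_{j-1},e_j$ carry arrowheads at $v_{i_j}$; Lemma~\ref{lemma0}(2) applied to each, together with the ancestral status of $v_{i_j}$ inherited from the $\sigma$-open structure of $\pi$, produces a directed path from $v_{i_j}$ to $Z$ inside $\mathcal{H}$. If $\tilde\pi$ contains a forbidden subpath $v_{i_{j-1}}\suh v_{i_j}\tut v_{i_{j+1}}$ (or its mirror), then Lemma~\ref{oldDef} supplies a same-type shortcut edge $v_{i_{j-1}}\suh v_{i_{j+1}}$; splicing it into $\tilde\pi$ bypasses $v_{i_j}$ and yields a shorter $m$-open walk in $\mathcal{H}$, and iterating (with Proposition~\ref{walk eq path} to recover a path) lets us assume no such subpath occurs. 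For the non-collider case, if $v_{i_j}\in Z$ is a non-collider on $\tilde\pi$, then at least one of $e_{j-1},e_j$ has a tail at $v_{i_j}$, so by Definition~\ref{representing}(4) we get $v_{i_j}\in\anc_\mathcal{G}(\{v_{i_{j-1}},v_{i_{j+1}}\}\cup\mathcal{S})$; combined with the $\sigma$-separation non-blocking condition at the $\mathcal{V}$-node $v_{i_j}$ on $\pi$, this lets one re-route $\pi$ through the strongly connected component witnessed by $v_{i_j}$, producing an extremally smaller $(Z\cup\mathcal{S})$-$\sigma$-open path in $\mathcal{G}$ and contradicting the choice of $\pi$.

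The hardest part will be the non-collider case. The $\sigma$-separation criterion in $\mathcal{G}$ tolerates non-colliders in $Z\cup\mathcal{S}$ whenever they share a strongly connected component with one of their path-neighbors, and this SCC structure surfaces in $\mathcal{H}$ only indirectly, via undirected edges and the complete-neighborhood structure guaranteed by Lemma~\ref{oldDef}. Converting this SCC-based non-blocking in $\mathcal{G}$ into the absence of non-collider blockers in $\mathcal{H}$ is where the interplay between the extremality of $\pi$, Lemma~\ref{oldDef}, and the $\sigma$-inducing path characterization of Definition~\ref{representing} must be orchestrated most carefully; choosing the right extremality measure on $\pi$ (so that every such re-routing argument yields a strictly smaller object) is the crux of the argument.
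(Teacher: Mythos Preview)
Your core construction---contracting $\pi$ onto the subsequence of \emph{all} its $\mathcal{V}$-nodes---does not work, and no extremality measure on $\pi$ alone can rescue it. Consider $\mathcal{G}$ with $\mathcal{V}=\{a,b,c\}$, $\mathcal{S}=\emptyset$, and edges $a\tuh c$, $c\tuh a$, $c\tuh b$, $b\tuh c$, so that $\{a,b,c\}$ is one strongly connected component. Take $Z=\{c\}$. Every path in $\mathcal{G}$ between $a$ and $b$ has length two and passes through $c$ (e.g.\ $a\tuh c\tuh b$), and all are $(Z\cup\mathcal{S})$-$\sigma$-open because $c$ is an unblockable non-collider. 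The $\sigma$-MAG $\mathcal{H}$ is the undirected triangle on $\{a,b,c\}$. Your contracted path is $\tilde\pi=a\tut c\tut b$, which is $m$-blocked by the non-collider $c\in Z$; the ``forbidden subpath'' shortcut via Lemma~\ref{oldDef} does not fire (no arrowheads), and your re-routing idea cannot produce a strictly smaller $\sigma$-open path in $\mathcal{G}$ because there is none. Yet the single edge $a\tut b$ \emph{is} a $Z$-$m$-open path in $\mathcal{H}$: the right move is to \emph{skip} $c$ entirely when building the $\mathcal{H}$-path, not to include every $\mathcal{V}$-node.

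This is precisely why the paper's proof does not contract onto $\mathcal{V}$-nodes. It partitions $\pi$ into \emph{segments} according to strongly connected components (Definition~\ref{segmentsigma}) and selects into $Q_0$ only certain segment endpoints: the outgoing-edge endpoints of non-collider segments, and one endpoint of each collider segment disjoint from $\anc_{\mathcal{G}}(\mathcal{S})$. In the example above the whole path is a single segment, so $Q_0=(a,b)$ and one lands directly on the edge $a\tut b$. After this segment-based selection, an iterative pruning removes residual collider-segment nodes that are ancestors of their $Q$-neighbours, and only then does one obtain a path whose colliders lie in $\ant_{\mathcal{H}}(Z)$ (upgraded to $\anc_{\mathcal{H}}(Z)$ via Lemma~\ref{oldDef}) and whose non-colliders avoid $Z$. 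A minor separate point: your claim that interior $\mathcal{S}$-nodes on $\pi[v_{i_j},v_{i_{j+1}}]$ must be colliders is false---they can be unblockable non-colliders---though the $\sigma$-inducing conclusion survives because such non-colliders are unblockable and $\mathcal{S}$-colliders trivially lie in $\anc_{\mathcal{G}}(\mathcal{S})$.
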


For proving the latter two lemmata, we drew inspiration from the corresponding proofs for the corresponding statements for MAGs by \citet[Lemma 17--18]{spirtes1996polynomial}.
Combining these yields the following fundamental result.

\begin{restatable}{theorem}{thmtwo} \label{thm2}
Let $\mathcal{G}$ be a DMG with nodes $\mathcal{V}^+=\mathcal{V}\cup \mathcal{S}$, and let $\mathcal{H}$ be a $\sigma$-MAG that represents $\mathcal{G}$ given $\mathcal{S}$. Let $X,Y,Z\subseteq  \mathcal{V}$ be subsets of the nodes. We have the following equivalence:
$$
X\overset{m}{\underset{\mathcal{H}}{\perp}} Y\mid Z\qquad\iff\qquad X\overset{\sigma}{\underset{\mathcal{G}}{\perp}} Y\mid Z\cup \mathcal{S}.
$$
\end{restatable}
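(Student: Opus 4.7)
The plan is to reduce Theorem~\ref{thm2} to the pointwise statement about the existence of connecting paths between two individual nodes, and then directly invoke Lemmas~\ref{lemma17b} and~\ref{lemma18}, which have been carefully designed to yield exactly this equivalence. I would argue by contrapositive, showing that $X$ is not $m$-separated from $Y$ given $Z$ in $\mathcal{H}$ if and only if $X$ is not $\sigma$-separated from $Y$ given $Z\cup\mathcal{S}$ in $\mathcal{G}$.

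The first step is unfolding the definitions. By the definition of $m$-separation for $\sigma$-MAGs, $X\overset{m}{\underset{\mathcal{H}}{\not\perp}} Y\mid Z$ means there exist $a\in X$ and $b\in Y$ together with a $Z$-$m$-open walk between them in $\mathcal{H}$; by Proposition~\ref{walk eq path}, we may take this to be a $Z$-$m$-open path. Analogously, $X\overset{\sigma}{\underset{\mathcal{G}}{\not\perp}} Y\mid Z\cup\mathcal{S}$ translates to the existence of $a\in X$, $b\in Y$ and a $(Z\cup\mathcal{S})$-$\sigma$-open walk between them in $\mathcal{G}$, which one may likewise take to be a path (this path--walk equivalence for $\sigma$-separation is a standard result that I will cite from the preliminaries or prove as an auxiliary lemma in the appendix, in analogy with Proposition~\ref{walk eq path}).

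The second step is the forward contrapositive. Assume $X\overset{\sigma}{\underset{\mathcal{G}}{\not\perp}} Y\mid Z\cup\mathcal{S}$. Then pick $a\in X$, $b\in Y$ and a $(Z\cup\mathcal{S})$-$\sigma$-open path between $a$ and $b$ in $\mathcal{G}$. Applying Lemma~\ref{lemma18} yields a $Z$-$m$-open path between $a$ and $b$ in $\mathcal{H}$, hence $X\overset{m}{\underset{\mathcal{H}}{\not\perp}} Y\mid Z$. The third step is the reverse contrapositive, which is entirely symmetric: starting from a $Z$-$m$-open path between $a\in X$ and $b\in Y$ in $\mathcal{H}$, Lemma~\ref{lemma17b} yields a $(Z\cup\mathcal{S})$-$\sigma$-open path in $\mathcal{G}$ between the same endpoints, so $X\overset{\sigma}{\underset{\mathcal{G}}{\not\perp}} Y\mid Z\cup\mathcal{S}$. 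Combining both directions gives the desired equivalence.

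The main obstacle is essentially absent at the level of this theorem, because all the technical work has been pushed into Lemmas~\ref{lemma17b} and~\ref{lemma18}; Theorem~\ref{thm2} itself is a clean bookkeeping argument. The only subtle points are (i) ensuring the path--walk interchange is available on both sides, so that the pointwise lemmata (which are stated for paths) can be leveraged to conclude set-level separation statements (defined via walks), and (ii) being careful that the conditioning set on the DMG side is $Z\cup\mathcal{S}$ rather than $Z$, which is precisely the form in which Lemmas~\ref{lemma17b} and~\ref{lemma18} are stated, so the two conventions line up without any further adjustment.
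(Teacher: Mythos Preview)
Your proposal is correct and follows exactly the paper's approach: the paper's proof consists of the single line ``Obviously obtained by Lemma~\ref{lemma17b} and Lemma~\ref{lemma18}'', and you have simply made the bookkeeping explicit (contrapositive, reduction to pointwise connection, path--walk interchange). The $\sigma$-separation path--walk equivalence you flag in point~(i) is already in the paper's preliminaries as Proposition~\ref{proposition3.3.6}, so no new auxiliary lemma is needed.
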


The following proposition establishes fundamental properties of \( m \)-separation and forms the basis for the concept of inducing paths: specifically, two distinct nodes in a \( \sigma \)-MAG are connected by an inducing path if and only if they cannot be \( m \)-separated by any subset of the remaining nodes.

\begin{restatable}{proposition}{inducingPropertyOne}\label{inducing property one}
Let $\mathcal{H}$ be a $\sigma$-MAG with nodes $\mathcal{V}$ and let $a,b\in \mathcal{V}$ be distinct nodes. Then the following are equivalent:
\begin{enumerate}
    \item There is an inducing path in $\mathcal{H}$ between $a$ and $b$;
    \item There is an inducing walk in $\mathcal{H}$ between $a$ and $b$;
    \item $a\overset{m}{\underset{\mathcal{H}}{\not\perp}} b\mid Z$ for all $Z\subseteq  \mathcal{V}\backslash\{a,b\}$;
    \item $a\overset{m}{\underset{\mathcal{H}}{\not\perp}} b\mid Z$ for $Z=\anc_\mathcal{H}(\{a,b\} )\backslash\{a,b\}$.
\end{enumerate}
\end{restatable}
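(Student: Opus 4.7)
I would prove the equivalence by the cyclic chain $(1) \Rightarrow (2) \Rightarrow (3) \Rightarrow (4) \Rightarrow (1)$. The implications $(1) \Rightarrow (2)$ (every path is a walk) and $(3) \Rightarrow (4)$ (specialize to $Z = \anc_\mathcal{H}(\{a,b\}) \setminus \{a,b\}$) are immediate; the substantive work is in $(2) \Rightarrow (3)$ and $(4) \Rightarrow (1)$.

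For $(2) \Rightarrow (3)$, I would take an inducing walk $\pi$ and an arbitrary $Z \subseteq \mathcal{V} \setminus \{a,b\}$, and produce a $Z$-$m$-open walk between $a$ and $b$ by induction on the number of interior colliders of $\pi$ lying outside $\anc_\mathcal{H}(Z)$. In the base case every interior collider is in $\anc_\mathcal{H}(Z)$, so $\pi$ itself is $Z$-$m$-open: no forbidden subwalk of type $\suh \cdot -$ or $- \cdot \hus$ can appear, because every interior node of an inducing walk has arrowheads on both sides. Otherwise, pick an interior collider $v_k$ with $v_k \notin \anc_\mathcal{H}(Z)$; since $v_k \in \anc_\mathcal{H}(\{a,b\})$, there is a shortest directed path $v_k \to w_1 \to \cdots \to w_r \to c$ with $c \in \{a,b\}$, and none of $w_1, \ldots, w_r, c$ can lie in $Z$ (otherwise $v_k$ itself would lie in $\anc_\mathcal{H}(Z)$). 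Splicing this directed detour into $\pi$ in place of the segment from $v_k$ to $c$ yields a walk in which $v_k$ and every $w_i$ become non-colliders outside $Z$, strictly reducing the induction quantity; the inductive hypothesis concludes.

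For $(4) \Rightarrow (1)$, take $Z = \anc_\mathcal{H}(\{a,b\}) \setminus \{a,b\}$; the hypothesis together with Proposition~\ref{walk eq path} yields a $Z$-$m$-open path $\pi$ between $a$ and $b$, which we may further assume is as short as possible. The claim is that $\pi$ is inducing. Every interior collider $v_k$ of $\pi$ satisfies $v_k \in \anc_\mathcal{H}(Z) \subseteq \anc_\mathcal{H}(\{a,b\})$ by $m$-openness, which is exactly the inducing condition, so it remains to rule out interior non-colliders. If $v_k$ were such a node, $m$-openness would give $v_k \notin Z$; together with $v_k \neq a, b$ this forces $v_k \notin \anc_\mathcal{H}(\{a,b\})$. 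Since $v_k$ is a non-collider, some incident edge on $\pi$ has a tail at $v_k$. Walking outward from $v_k$ along that side, the forbidden-subwalk clauses of $m$-openness (patterns $\suh \cdot -$ and $- \cdot \hus$) propagate the tail marks: every subsequent interior non-collider continues the trace with a tail mark on the ongoing side, while any interior collider encountered already lies in $\anc_\mathcal{H}(\{a,b\})$. Invoking Lemma~\ref{lemmaant} at the first directed edge met along the trace yields $v_k \in \anc_\mathcal{H}(\{a,b\})$, a contradiction.

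The main obstacle is the all-undirected sub-case of $(4) \Rightarrow (1)$, in which the outward trace from $v_k$ meets no directed edge before reaching an endpoint, so Lemma~\ref{lemmaant} does not apply. Here I would invoke Lemma~\ref{oldDef} together with the $\sigma$-completeness axiom to propagate arrowheads from elsewhere on $\pi$ along the undirected chain emanating from $v_k$, either introducing a directed edge (which reduces to the case handled above) or forcing adjacencies that allow us to shorten $\pi$ to a strictly shorter $Z$-$m$-open path, contradicting the minimality of $\pi$.
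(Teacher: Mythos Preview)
Your cycle and the argument for $(2)\Rightarrow(3)$ are essentially the paper's. The paper phrases it as minimising the number of colliders over the class of walks whose colliders lie in $\anc_\mathcal{H}(\{a,b\}\cup Z)$, whose non-endpoint non-colliders avoid $Z$, and which contain no $\suh\cdot\tut$ pattern; you phrase it as an induction. One wrinkle: your induction is over \emph{inducing} walks, but after splicing the walk acquires interior non-colliders and is no longer inducing, so the inductive hypothesis does not literally apply. Inducting over the paper's larger class fixes this.

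The substantive gap is in $(4)\Rightarrow(1)$. You correctly flag the all-undirected sub-case as the obstacle---the paper's own proof in fact glosses over it, simply asserting that ``there must be a directed subpath of $\pi$ starting at $c$'' without addressing undirected edges. But your proposed fix via Lemma~\ref{oldDef}, $\sigma$-completeness, or shortening cannot work. Take the three-node $\sigma$-MAG $a\tut c\tut b$ with $a,b$ non-adjacent (the pair $(b,c)$ in the paper's $\mathcal{H}_3$ of Figure~\ref{fig:representing_examples} is exactly this situation). Here $\anc_\mathcal{H}(\{a,b\})=\{a,b\}$, so $Z=\emptyset$; the path $a\tut c\tut b$ is $\emptyset$-$m$-open, so (4) holds; yet there is no inducing path between $a$ and $b$, so (1) fails---and taking $Z=\{c\}$ shows (3) fails too. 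There are no arrowheads on $\pi$ for Lemma~\ref{oldDef} to act on, and $\pi$ is already of minimal length. Thus statement (4), with $\anc_\mathcal{H}$ as written, is \emph{not} equivalent to (1)--(3); the separating set should be $\ant_\mathcal{H}(\{a,b\})\setminus\{a,b\}$. With that correction your anterior-trace argument goes through directly---every interior non-collider is anterior along $\pi$ to a collider (hence to $Z$) or to an endpoint---and no appeal to Lemma~\ref{lemmaant} or $\sigma$-completeness is needed.
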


The orientations of the outermost edges on an inducing path encode essential information about ancestral or anterior relationships.

\begin{restatable}{lemma}{inducingPropertyTwo}
Let $\mathcal{H}$ be a $\sigma$-MAG with nodes $\mathcal{V}$ and let $a,b\in  \mathcal{H}$ be distinct. If there exists an inducing path between $a$ and $b$ in $\mathcal{H}$, and all inducing paths in $\mathcal{H}$ between $a$ and $b$ are out of $b$, then $b\in\ant_\mathcal{H}(a)$.
\end{restatable}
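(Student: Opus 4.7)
The plan is to pick any inducing path $\pi$ between $a$ and $b$ (which must be out of $b$ by hypothesis), write it as $a = v_0 \sus v_1 \sus \cdots \sus v_n = b$ with $n \geq 1$, and examine the last edge $v_{n-1} \sus b$ together with the fact that $v_{n-1}$, being interior to an inducing path, is both a collider and an element of $\anc_{\mathcal{H}}(\{a,b\})$.

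If $n = 1$ there are no interior nodes; the single edge $a \sus b$ has tail at $b$, so it is either $a \hut b$, yielding $b \tuh a$ and thus $b \in \anc_{\mathcal{H}}(a) \subseteq \ant_{\mathcal{H}}(a)$, or $a \tut b$, yielding the one-edge anterior path $b \tut a$. Either way $b \in \ant_{\mathcal{H}}(a)$.

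For $n \geq 2$, the key observation is that the edge $v_{n-1} \sus b$ must carry an arrowhead at $v_{n-1}$ (since $v_{n-1}$ is a collider on $\pi$) and a tail at $b$ (since $\pi$ is out of $b$); hence it is $v_{n-1} \hut b$, i.e., $b \tuh v_{n-1}$ in $\mathcal{H}$. Since $v_{n-1}$ is an interior node of an inducing path, $v_{n-1} \in \anc_{\mathcal{H}}(\{a, b\})$. Suppose first that $v_{n-1} \in \anc_{\mathcal{H}}(b)$; then concatenating a directed path $v_{n-1} \tuh \cdots \tuh b$ (non-trivial because $v_{n-1} \neq b$) with the edge $b \tuh v_{n-1}$ produces a directed cycle in $\mathcal{H}$, contradicting the fact (stated right after Lemma~\ref{oldDef}) that $\sigma$-MAGs contain no directed cycles. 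Therefore $v_{n-1} \in \anc_{\mathcal{H}}(a)$, and concatenating $b \tuh v_{n-1}$ with a directed path $v_{n-1} \tuh \cdots \tuh a$ (possibly trivial if $v_{n-1} = a$, but then $b \tuh a$ directly suffices) yields $b \in \anc_{\mathcal{H}}(a) \subseteq \ant_{\mathcal{H}}(a)$.

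I do not anticipate a serious obstacle; the proof is a one-step contradiction argument leveraging (i) the collider structure at interior nodes of an inducing path, (ii) the ancestral property of those interior nodes, and (iii) acyclicity of $\sigma$-MAGs. The only care needed is to note that $v_{n-1}$, being interior to a path, is distinct from both endpoints, so the cycle forced in the contradictory branch is genuinely non-trivial. It is worth remarking that the argument only uses the particular inducing path picked at the start; the stronger universal hypothesis that \emph{every} inducing path between $a$ and $b$ is out of $b$ is not needed beyond guaranteeing that the chosen one has this property.
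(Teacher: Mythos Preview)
Your proof is correct and follows essentially the same argument as the paper: pick an inducing path out of $b$, handle the single-edge case directly, and for longer paths use that the penultimate node $v_{n-1}$ is a collider with $b \tuh v_{n-1}$ and lies in $\anc_{\mathcal{H}}(\{a,b\})$, then rule out $v_{n-1} \in \anc_{\mathcal{H}}(b)$ (the paper cites ancestrality, you cite the absence of directed cycles, which is equivalent here). Your remark that the universal hypothesis is not fully used is also in line with the paper's proof, and the parenthetical about $v_{n-1} = a$ is harmless though unnecessary since $v_{n-1}$ is interior on a path with $n \geq 2$.
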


\begin{restatable}{lemma}{inducingPropertyThree}
Let $\mathcal{H}$ be a $\sigma$-MAG with nodes $\mathcal{V}$ and let $a,b\in\mathcal{V}$ be distinct. If there exists an inducing path between $a$ and $b$ in $\mathcal{H}$ that is into $b$, and $a\notin\anc_\mathcal{H}(b)$, then there exists an inducing path between $a$ and $b$ in $\mathcal{H}$ that is both into $a$ and into $b$.
\end{restatable}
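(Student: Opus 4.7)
The plan is to show that the hypothesized inducing path $\pi$ (which is into $b$) must automatically be into $a$ as well, so no new path needs to be constructed; the same $\pi$ serves as the desired inducing path into both endpoints.

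Write $\pi = \langle a = v_0, \dots, v_n = b\rangle$ and suppose, for contradiction, that the mark at $a$ on the first edge $v_0 \sus v_1$ of $\pi$ is a tail rather than an arrowhead. I would first dispose of the degenerate case $n = 1$: here $\pi$ is the single edge $a \tuh b$ (a tail at $a$ and an arrowhead at $b$), which immediately gives $a \in \anc_\mathcal{H}(b)$, contradicting the hypothesis. Hence we may assume $n \geq 2$, so that $v_1$ is an interior node of $\pi$. By the definition of an inducing path, $v_1$ is a collider on $\pi$, which forces an arrowhead at $v_1$ on the edge $v_0 \sus v_1$; combined with the assumed tail at $a$, this edge must be $a \tuh v_1$.

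The other half of the inducing-path definition gives $v_1 \in \anc_\mathcal{H}(\{a,b\}) = \anc_\mathcal{H}(a) \cup \anc_\mathcal{H}(b)$, and I would branch on these two possibilities. If $v_1 \in \anc_\mathcal{H}(a)$, then concatenating the edge $a \tuh v_1$ with a directed path from $v_1$ back to $a$ produces a directed cycle, contradicting the fact (noted just after Lemma~\ref{oldDef}) that $\sigma$-MAGs contain no directed cycles. If instead $v_1 \in \anc_\mathcal{H}(b)$, then concatenating $a \tuh v_1$ with a directed path from $v_1$ to $b$ yields a directed walk from $a$ to $b$, so $a \in \anc_\mathcal{H}(b)$, contradicting the hypothesis of the lemma. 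Either sub-case is impossible, so $\pi$ must be into $a$, as required.

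I do not anticipate a genuine obstacle: the proof is a short case analysis that relies only on two already-established facts, namely that every interior node of an inducing path is a collider lying in $\anc_\mathcal{H}(\{a,b\})$, and that $\sigma$-MAGs admit no directed cycles.
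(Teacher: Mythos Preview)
Your proof is correct and follows essentially the same approach as the paper: show that the given inducing path $\pi$ into $b$ is already into $a$, by noting that a tail at $a$ would force $a\tuh v_1$ with $v_1\in\anc_\mathcal{H}(\{a,b\})$, and then deriving a contradiction in each sub-case. The only cosmetic difference is that the paper invokes the ancestral property directly to rule out $v_1\in\anc_\mathcal{H}(a)$ (an arrowhead at $v_1$ is incompatible with $v_1$ being anterior to $a$), whereas you phrase this as the absence of directed cycles; both are immediate consequences of ancestrality and the argument is otherwise identical.
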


\section{Markov Equivalence} \label{section:3}
With the well-defined notion of $m$-separation, we now turn our attention to the concept of the \emph{$m$-Markov Equivalence Class} of $\sigma$-MAGs, in which every $\sigma$-MAG shares the same $m$-separation relations.

\begin{definition} [$m$-Markov Equivalence]
Two $\sigma$-MAGs $\mathcal{H}_1,\mathcal{H}_2$ with the same nodes $\mathcal{V}$ are \emph{$m$-Markov equivalent} if for any three subsets of the nodes $X,Y,Z\subseteq \mathcal{V}$, $X$ is $m$-separated from $Y$ given $Z$ in $\mathcal{H}_1$ if and only if $X$ is $m$-separated from $Y$ given $Z$ in $\mathcal{H}_2$.
\end{definition}

One important feature that will play a role in the characterization of $m$-Markov equivalence involves the concept of \emph{discriminating paths}, which can be seen as a generalization of an unshielded triple.

\begin{definition} [Discriminating Path]
A path $\pi=(a,v_0,\ldots,v_n,b,c)$ (with $n\geq0$) in a $\sigma$-MAG $\mathcal{H}$ is a \emph{discriminating path} for $b$ if:
\begin{enumerate}
    \item $a$ is not adjacent to $c$ in $\mathcal{H}$, and
    \item for $k=0,\ldots,n$: $v_k$ is a collider on $\pi$ and a parent of $c$ in $\mathcal{H}$.
\end{enumerate}
\end{definition}

An example of discriminating path is given in Figure~\ref{fig:discriminating_path}.

\begin{figure}[t]
\centering
  \scalebox{0.8}{\begin{tikzpicture}
    \begin{scope}
      \node[ndout] (A) at (0,0) {$a$};
      \node[ndout] (Q1) at (1.5,0) {$v_0$};
      \node[ndout] (Q2) at (3,0) {$v_1$};
      \node (dots) at (4.5,0) {$\cdots$};
      \node[ndout] (Qn) at (6,0) {$v_n$};
      \node[ndout] (B) at (7.5,0) {$b$};
      \node[ndout] (C) at (7.5,1.5) {$c$};
      \draw[suh] (A) -- (Q1);
      \draw[huh] (Q1) -- (Q2);
      \draw[huh] (Q2) -- (dots);
      \draw[huh] (dots) -- (Qn);
      \draw[tuh,in=-165] (Q1) edge (C);
      \draw[tuh,in=-150] (Q2) edge (C);
      \draw[tuh,in=-135] (Qn) edge (C);
      \draw[suh]  (B) -- (Qn);
      \draw[sus] (B) -- (C);
    \end{scope}
  \end{tikzpicture}}
  \caption{A discriminating path for $b$ between $a$ and $c$.\label{fig:discriminating_path}}
\end{figure}
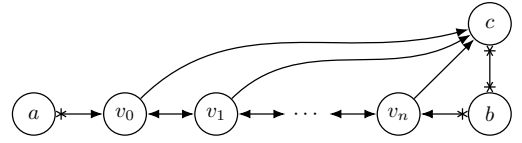

Applying $m$-separation to discriminating paths leads to the following lemma:

\begin{restatable}{lemma}{lemmasix} \label{lemma6}
In a $\sigma$-MAG $\mathcal{H}$ with nodes $\mathcal{V}$, let $\pi=(a,v_0,\ldots,v_n,b,c)$ be a discriminating path for $b$. Then, the following hold:
\begin{enumerate}
    \item If $b$ is a collider on $\pi$, then for any subset of nodes $Z \subseteq \mathcal{V}\backslash\{a,c\}$ such that $a$ and $c$ are $m$-separated given $Z$, we have  $b \notin Z$.
    \item If $b$ is a non-collider on $\pi$, then for any subset of nodes $Z \subseteq \mathcal{V}\backslash\{a,c\}$ such that $a$ and $c$ are $m$-separated given $Z$, we have $b \in Z$.
\end{enumerate}
\end{restatable}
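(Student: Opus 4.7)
The plan is to analyse the path $\pi$ together with a family of shortcut walks obtained by splicing a prefix of $\pi$ with the directed edges $v_k \tuh c$ guaranteed by the discriminating-path condition. Fix $Z \subseteq \mathcal{V}\setminus\{a,c\}$ with $a \overset{m}{\underset{\mathcal{H}}{\perp}} c \mid Z$.

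First I would show, by induction on $k = 0, 1, \ldots, n$, that $v_k \in Z$. For this, consider the walk
\[
\omega_k : a \suh v_0 \huh v_1 \huh \cdots \huh v_k \tuh c,
\]
which exists in $\mathcal{H}$ since the edges $v_{j-1} \huh v_j$ belong to $\pi$ and every $v_j$ is a parent of $c$. On $\omega_k$, the nodes $v_0, \ldots, v_{k-1}$ are colliders, $v_k$ is the unique interior non-collider (arrowhead from $v_{k-1}$, tail toward $c$), and no undirected edges appear. Under the inductive hypothesis $\{v_0, \ldots, v_{k-1}\} \subseteq Z$, all colliders of $\omega_k$ lie in $\anc_\mathcal{H}(Z)$, so the only remaining way for $\omega_k$ to be $m$-blocked by $Z$ is via the non-collider rule at $v_k$. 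Hence $a \overset{m}{\underset{\mathcal{H}}{\perp}} c \mid Z$ forces $v_k \in Z$, completing the induction and yielding $\{v_0, \ldots, v_n\} \subseteq Z$.

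Next I would analyse $\pi$ itself. Since each $v_k$ is a collider on $\pi$ lying in $Z \subseteq \anc_\mathcal{H}(Z)$, these colliders cannot block $\pi$; moreover the prefix $a \suh v_0 \huh \cdots \huh v_n$ is free of undirected edges and has no interior non-colliders, so all blocking of $\pi$ must originate at $b$. In Case~1, with $b$ a collider on $\pi$, both edges incident to $b$ have an arrowhead at $b$, so no undirected-subwalk pattern appears at $b$; hence the blocking of $\pi$ must be via the collider rule at $b$, giving $b \notin \anc_\mathcal{H}(Z)$ and in particular $b \notin Z$. In Case~2, with $b$ a non-collider on $\pi$, the only candidate undirected-subwalk pattern at $b$ is $v_n \suh b \tut c$, which would require the edge between $v_n$ and $b$ to be $v_n \huh b$ together with $b \tut c$; but Lemma~\ref{oldDef} applied to this triple then forces the edge between $v_n$ and $c$ to be of the same type $\huh$, contradicting the discriminating-path condition $v_n \tuh c$. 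Ruling out this pattern, the only remaining blocking mechanism for $\pi$ at $b$ is the non-collider rule, so $b \in Z$.

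The main obstacle I anticipate is the final exclusion in Case~2: correctly identifying the unique candidate undirected-subwalk pattern at $b$ from the \emph{non-collider} hypothesis and using $\sigma$-completeness (via Lemma~\ref{oldDef}) to rule it out. The inductive construction of the shortcut walks and the subsequent Case~1 argument amount to routine $m$-separation bookkeeping.
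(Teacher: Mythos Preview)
Your proof is correct and follows essentially the same approach as the paper's: inductively force $v_k \in Z$ via the shortcut paths $a \suh v_0 \huh \cdots \huh v_k \tuh c$, then argue that $b$ alone must block $\pi$. In Case~2 you are in fact more careful than the paper, explicitly ruling out the forbidden pattern $v_n \huh b \tut c$ via Lemma~\ref{oldDef}, a point the paper's own proof glosses over.
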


\citet{spirtes1996polynomial} showed that two MAGs having the ``same basic colliders'' is a necessary and sufficient condition for ``d-Markov equivalence''. Here, we aim to verify whether an analogous statement applies to $\sigma$-MAGs. The assumption is restated as follows:

\begin{condition} \label{condition}
For two $\sigma$-MAGs $\mathcal{H}_1,\mathcal{H}_2$  with the same nodes $\mathcal{V}$, the following three conditions hold:
\begin{enumerate}
    \item $\mathcal{H}_1,\mathcal{H}_2$ have the same adjacencies.
    \item $\mathcal{H}_1,\mathcal{H}_2$ have the same unshielded colliders.
    \item Let \( \pi \) be a discriminating path for a node \( v \) in \( \mathcal{H}_1 \), and let \( \pi' \) be the corresponding path to \( \pi \) in \( \mathcal{H}_2 \).\footnote{Since \( \mathcal{H}_1 \) and \( \mathcal{H}_2 \) have the same nodes and adjacencies, the sequence of nodes in \( \pi \) within \( \mathcal{H}_1 \) uniquely determines a corresponding path \( \pi' \) in \( \mathcal{H}_2 \).} If \( \pi' \) is also a discriminating path for \( v \), then \( v \) is a collider on \( \pi \) in \( \mathcal{H}_1 \) if and only if it is a collider on \( \pi' \) in \( \mathcal{H}_2 \).
\end{enumerate}
\end{condition}
In the rest of this section, we mimic the proof strategy of \citet{spirtes1996polynomial} (with small modifications at various places to account for the extension of MAGs to $\sigma$-MAGs) to show that this assumption is a necessary and sufficient condition for $m$-Markov equivalence.

It is straightforward that Condition~\ref{condition} is a necessary condition to obtain $m$-Markov equivalence.

\begin{restatable}{lemma}{thmthreea} \label{thm3a}
Let $\mathcal{H}_1$ and $\mathcal{H}_2$ be two $\sigma$-MAGs with the same node set $\mathcal{V}$. If $\mathcal{H}_1$ and $\mathcal{H}_2$ are $m$-Markov equivalent, then they satisfy Condition~\ref{condition}.
\end{restatable}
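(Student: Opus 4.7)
The plan is to verify the three parts of Condition~\ref{condition} separately. Each verification exploits Proposition~\ref{inducing property one}: since a $\sigma$-MAG is maximal, any pair of non-adjacent nodes $a,b$ admits $Z=\anc_\mathcal{H}(\{a,b\})\setminus\{a,b\}$ as an $m$-separating set, which can then be transferred between the two graphs via $m$-Markov equivalence. For Part~1 (same adjacencies), I would argue by contradiction. Suppose $a,b$ are adjacent in $\mathcal{H}_1$ but not in $\mathcal{H}_2$. Then maximality of $\mathcal{H}_2$ together with Proposition~\ref{inducing property one} supplies a separating set $Z\subseteq \mathcal{V}\setminus\{a,b\}$ for $a$ and $b$ in $\mathcal{H}_2$. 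By $m$-Markov equivalence, $Z$ must also separate $a$ from $b$ in $\mathcal{H}_1$. But the single-edge walk $a\sus b$ in $\mathcal{H}_1$ is trivially $m$-open given any such $Z$ (its only nodes are the non-collider endpoints, which lie outside $Z$, and there are no subwalks to consider), a contradiction.

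For Part~2 (same unshielded colliders), let $(a,b,c)$ be an unshielded collider in $\mathcal{H}_1$. By Part~1 it is again an unshielded triple in $\mathcal{H}_2$. Choose $Z=\anc_{\mathcal{H}_1}(\{a,c\})\setminus\{a,c\}$, which separates $a$ from $c$ in $\mathcal{H}_1$ by Proposition~\ref{inducing property one}. The arrowheads at $b$ on both sides combined with the ancestral property of $\sigma$-MAGs prevent $b$ from being an ancestor of $a$ or $c$, so $b\notin\anc_{\mathcal{H}_1}(Z)\subseteq\anc_{\mathcal{H}_1}(\{a,c\})$ and in particular $b\notin Z$. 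By $m$-Markov equivalence, $Z$ also separates $a$ from $c$ in $\mathcal{H}_2$. If $(a,b,c)$ were a non-collider in $\mathcal{H}_2$, one can enumerate the possible mark configurations at $b$ to see that $m$-blocking the single path $(a,b,c)$ must happen via $b\in Z$: the only alternative would be the subwalk rule, which forces the triple to have shape $a\suh b\tut c$ or $a\tut b\hus c$; by Lemma~\ref{oldDef} both patterns would imply $a,c$ adjacent, contradicting the unshielded assumption. Hence $(a,b,c)$ is a collider in $\mathcal{H}_2$, and by symmetry the unshielded colliders coincide.

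Part~3 (discriminating paths) follows the same pattern but uses Lemma~\ref{lemma6} in place of the collider analysis. Writing $\pi=(a,v_0,\dots,v_n,v,c)$ and taking $Z=\anc_{\mathcal{H}_1}(\{a,c\})\setminus\{a,c\}$, which by Proposition~\ref{inducing property one} separates $a$ from $c$ in $\mathcal{H}_1$ and therefore also in $\mathcal{H}_2$ by Markov equivalence, Lemma~\ref{lemma6}(1) applied in $\mathcal{H}_1$ forces $v\notin Z$ whenever $v$ is a collider on $\pi$. If $v$ were a non-collider on $\pi'$ in $\mathcal{H}_2$, Lemma~\ref{lemma6}(2) applied in $\mathcal{H}_2$ would instead force $v\in Z$, a contradiction. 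The non-collider case is entirely symmetric, with the roles of the two parts of Lemma~\ref{lemma6} swapped. The main obstacle is the subwalk-rule analysis in Part~2: one must carefully rule out the possibility that a hypothetical non-collider triple in $\mathcal{H}_2$ is $m$-blocked purely through the subwalk clause of $m$-separation rather than through $b\in Z$. This is precisely where Lemma~\ref{oldDef} combined with the non-adjacency of $a$ and $c$ becomes indispensable, and it is also what makes the proof genuinely different from the MAG case, where undirected edges cannot be adjacent to arrowheads to begin with.
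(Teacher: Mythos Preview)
Your proof is correct and follows essentially the same approach as the paper's: verify the three items of Condition~\ref{condition} separately, each time invoking Proposition~\ref{inducing property one} to produce a separating set $Z=\anc(\{a,c\})\setminus\{a,c\}$ for the non-adjacent pair, transferring it across the two graphs via $m$-Markov equivalence, and deriving contradictory conclusions about membership of the middle node in $Z$ (using Lemma~\ref{lemma6} for Part~3). The only cosmetic difference is which graph you pick $Z$ from in Parts~1 and~2; the paper happens to make the mirror-image choices. Your explicit treatment of the subwalk clause in Part~2 via Lemma~\ref{oldDef} is actually more careful than the paper, which silently relies on $\sigma$-completeness to rule out the pattern $a\suh b\tut c$ on an unshielded triple.
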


Now, we consider the opposite direction and begin with the definition of \emph{covered nodes}, as follows:

\begin{definition} [Covered Node]  
In a \( \sigma \)-MAG \( \mathcal{H} \), a node \( v_k \) on a path \( \pi \) between \( v_0 \) and \( v_n \) (where \( 0 < k < n \)) is called a \emph{covered node} if its adjacent nodes on \( \pi \), \( v_{k-1} \) and \( v_{k+1} \), are also adjacent in \( \mathcal{H} \). Conversely, \( v_k \) is called an \emph{uncovered node} if \( v_{k-1} \) and \( v_{k+1} \) are non-adjacent in \( \mathcal{H} \).  
\end{definition}

The following proposition establishes an important property: every covered node on a shortest $m$-open path corresponds to a unique discriminating subpath. This result is built upon Lemma~\ref{lemma7} and Lemma~\ref{lemma8}, which provide the necessary foundation.

\begin{restatable}{proposition}{lemmanine} \label{lemma9}  
Let \( \pi \) be a shortest $m$-open path between \( v_0 \) and \( v_n \) given \( Z \) in a \( \sigma \)-MAG \( \mathcal{H} \). If \( v_j \) is a covered node on \( \pi \), then:  
\begin{enumerate}  
    \item If \( v_{j-1} \rightarrow v_{j+1} \), there exists a unique index \( i < j \) such that the subpath of $\pi$ between $v_i$ and $v_{j+1}$ is a discriminating path for \( v_j \).  
    \item If \( v_{j-1} \leftarrow v_{j+1} \), there exists a unique index \( i > j \) such that the subpath of $\pi$ between $v_{j-1}$ and $v_i$ is a discriminating path for \( v_j \).  
\end{enumerate}  
\end{restatable}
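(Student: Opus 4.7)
I will use the shortest-$m$-open-path assumption on $\pi$ to pin down the local structure around the covered node $v_j$, then inductively propagate collider-and-parent relations along $\pi$. By symmetry I treat only case~1; case~2 follows by walking rightward along $\pi$ rather than leftward.

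For the base step I apply Lemma~\ref{lemma7} to the triangle $v_{j-1}, v_j, v_{j+1}$. Since $\pi$ is a shortest $m$-open walk between $v_0$ and $v_n$ given $Z$ and $v_{j-1}$ is adjacent to $v_{j+1}$, Lemma~\ref{lemma7} forces $v_j$ to be a collider on $\pi$ and, together with $v_{j-1} \to v_{j+1}$ and the ancestral and $\sigma$-complete axioms, forces the mark at $v_{j-1}$ on the edge $v_{j-1} \sus v_j$ to be an arrowhead. Thus $v_{j-1}$ is a parent of $v_{j+1}$ and, once $j \geq 2$, a collider on $\pi$. I then walk leftward, starting at $k = j-1$. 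At each iteration, if $v_{k-1}$ is adjacent to $v_{j+1}$, I invoke Lemma~\ref{lemma8} on the triangle $v_{k-1}, v_k, v_{j+1}$ inside the shortest $m$-open path $\pi$ to conclude that the edge between $v_{k-1}$ and $v_{j+1}$ must be $v_{k-1} \to v_{j+1}$ and that $v_{k-1}$ is itself a collider on $\pi$; I then decrement $k$. Otherwise I halt and set $i := k - 1$. Since $v_0$ has no predecessor on $\pi$, the iteration must terminate at some $i \geq 0$.

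By construction, the subpath $(v_i, v_{i+1}, \ldots, v_j, v_{j+1})$ of $\pi$ meets both defining conditions of a discriminating path for $v_j$: $v_i$ is non-adjacent to $v_{j+1}$ by the stopping rule, and every intermediate $v_k$ with $i < k < j$ is a collider on $\pi$ and a parent of $v_{j+1}$ by the inductive construction. Uniqueness follows directly from the definition of discriminating path: any other index $i^{\ast} \neq i$ with $i^{\ast} < j$ producing a discriminating path for $v_j$ would declare $v_{i^{\ast}}$ non-adjacent to $v_{j+1}$, while in the other discriminating path $v_{i^{\ast}}$ would appear as an intermediate node required to be a parent of $v_{j+1}$, hence adjacent to $v_{j+1}$---a contradiction. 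The main obstacle is the inductive step: correctly certifying that each visited $v_k$ is a collider on $\pi$ and that no shortcut through $v_{k-1} \to v_{j+1}$ yields a walk that is simultaneously $m$-open and strictly shorter than $\pi$. In the $\sigma$-MAG setting this is delicate because undirected edges ($\tut$) interact with the $\sigma$-completeness clause in ways absent from the MAG argument of \citet{spirtes1996polynomial}, and handling that interaction cleanly is precisely the role that Lemmas~\ref{lemma7} and~\ref{lemma8} are designed to play.
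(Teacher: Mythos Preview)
Your overall strategy---walk leftward from $v_{j-1}$, maintain the invariant that each visited node is a collider on $\pi$ and a parent of $v_{j+1}$, and halt at the first node non-adjacent to $v_{j+1}$---is exactly the paper's approach. However, several technical steps are mis-sourced or missing.

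First, Lemma~\ref{lemma7} applied to the edge $v_{j-1}\sus v_{j+1}$ (under the hypothesis $v_{j-1}\to v_{j+1}$) yields that $v_{j-1}$ is a collider on $\pi$, not $v_j$; indeed the third pattern in Figure~\ref{fig:lemma8} shows $v_j$ can be a non-collider. This slip is harmless (the definition of discriminating path imposes no collider condition on $v_j$), but it suggests you are reading Lemma~\ref{lemma7} imprecisely. Second, your inductive step invokes Lemma~\ref{lemma8} on the triple $v_{k-1},v_k,v_{j+1}$, but Lemma~\ref{lemma8} concerns only \emph{consecutive} triples on $\pi$; once $k<j$ the node $v_{j+1}$ is not the successor of $v_k$ on $\pi$. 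The correct tool is Lemma~\ref{lemma7} applied to the off-$\pi$ edge $v_{k-1}\sus v_{j+1}$, which gives the dichotomy ``$v_{k-1}$ is a collider on $\pi$ with $v_{k-1}\to v_{j+1}$'' versus ``$v_{j+1}$ is a collider on $\pi$ with $v_{k-1}\leftarrow v_{j+1}$''. You then need a separate argument to exclude the second branch: $v_k\to v_{j+1}\to v_{k-1}\suh v_k$ would be an (almost) directed cycle, forbidden in an ancestral graph. This cycle argument is absent from your plan. Third, your termination claim ``$v_0$ has no predecessor, so the iteration stops'' does not rule out the possibility that $v_0$ itself is adjacent to $v_{j+1}$. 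The paper handles this with the same cycle trick: if $v_0$ were adjacent to $v_{j+1}$, Lemma~\ref{lemma7} would force $v_0\leftarrow v_{j+1}$ (since $v_0$ cannot be a collider), and then $v_1\to v_{j+1}\to v_0\suh v_1$ is an (almost) directed cycle. Your uniqueness argument is essentially correct.
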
  

\begin{restatable}{lemma}{lemmaseven} \label{lemma7}
Let $\pi$ be a shortest $m$-open path between \( v_0 \) and \( v_n \) given \( Z \) in a \( \sigma \)-MAG \( \mathcal{H} \). Suppose there exists an edge \( v_i \sus  v_j \) ($i<j$) in \( \mathcal{H} \) that is not part of \( \pi \). Define \( \pi' \) as the path obtained by replacing the subpath between \( v_i \) and \( v_j \) on \( \pi \) with the edge \( v_i \sus  v_j \). Then, one of the following conditions must hold:
\begin{enumerate}
    \item \( v_{i-1}\suh v_i \hus  v_{i+1} \) appears on \( \pi \) and \( v_i \rightarrow v_j \) exists in \( \mathcal{H} \).
    \item \( v_{j-1} \suh v_j\hus v_{j+1}\) appears on \( \pi \) and \( v_i \leftarrow v_j \) exists in \( \mathcal{H} \).
\end{enumerate}
\end{restatable}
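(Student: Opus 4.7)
The plan is to exploit the minimality of $\pi$. Because the chord $v_i\sus v_j$ is not an edge of $\pi$ and a $\sigma$-MAG has at most one edge between any two nodes, we must have $j\ge i+2$, so $\pi'$ is a strictly shorter path between $v_0$ and $v_n$ than $\pi$ and must therefore be $m$-blocked by $Z$.

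Next I would localize the blocker. Every node $v_k$ on $\pi'$ with $k\notin\{i,j\}$ has the same two incident edges on $\pi'$ as on $\pi$, so it has the same collider/non-collider status, the same ancestor-of-$Z$ status, and participates in no new subwalk of the form $v_{k-1}\suh v_k - v_{k+1}$ or $v_{k-1}- v_k\hus v_{k+1}$. Therefore the blocker of $\pi'$ must be located at $v_i$ or $v_j$, and any offending bad pattern must use the chord itself.

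I would then run a case analysis on the cause of the blocking at $v_i$, and symmetrically at $v_j$, combining each sub-case with the $m$-openness of $\pi$ at that node. For example, if $v_i$ becomes a non-collider in $Z$ on $\pi'$, then $v_i$ must have been a collider on $\pi$ (so $v_{i-1}\suh v_i\hus v_{i+1}$ appears on $\pi$) and the chord has a tail at $v_i$; if $v_i$ becomes a collider on $\pi'$ not in $\anc_\mathcal{H}(Z)$, then $v_i$ must have been a non-collider on $\pi$ and the chord has an arrowhead at $v_i$; and if a bad pattern $v_{i-1}\suh v_i - v_j$ or $v_{i-1}- v_i\hus v_j$ arises on $\pi'$, then the chord is undirected or has an arrowhead at $v_i$, respectively. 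Only the first of these directly matches the edge-mark requirement of conclusion~1 (once we further show that the chord is $v_i\to v_j$ and not $v_i - v_j$).

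The main obstacle will be eliminating every sub-case other than ``$v_i$ (resp.\ $v_j$) is a collider on $\pi$ with the chord equal to $v_i\to v_j$ (resp.\ $v_i\leftarrow v_j$)''. My plan is to invoke Lemma~\ref{oldDef} (the fundamental property of $\sigma$-MAGs): in each disallowed sub-case it supplies a new edge --- for instance, from a triple $v_{i-1}\suh v_i - v_j$ one obtains $v_{i-1}\suh v_j$ of the same edge type as $v_{i-1}\suh v_i$ --- that lets us construct the strictly shorter path $v_0,\ldots,v_{i-1},v_j,\ldots,v_n$. A careful check, using the $m$-openness of $\pi$ at $v_{i-1}$ and $v_j$ and iterating the $\sigma$-completeness argument once more when the mark at $v_j$ is an arrowhead (to propagate adjacencies onward to $v_{j+1}$), shows that this shorter path is still $m$-open, contradicting the minimality of $\pi$. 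This contradiction eliminates the undirected-chord, bidirected-chord, and bad-pattern sub-cases (and simultaneously the degenerate endpoint cases $i=0$ and $j=n$, since in those cases the analogous shortcut is immediately $m$-open), leaving exactly conclusions~1 and~2.
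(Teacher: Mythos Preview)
Your overall skeleton matches the paper's: replace by the chord, observe that $\pi'$ is strictly shorter and hence $m$-blocked, localize the obstruction to $v_i$ or $v_j$, and rule out all sub-cases except ``collider on $\pi$ becomes non-collider on $\pi'$ with a directed chord.'' The handling of the bad patterns via Lemma~\ref{oldDef} is also essentially what the paper does (with slightly more care in the case $v_{i-1}-v_i\hus v_j$, where one must walk back along the undirected segment to a node $v_k$ with $v_{k-1}\hut v_k$ before applying the shortcut).

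However, there is a genuine gap in your elimination strategy for the sub-case ``$v_i$ is a non-collider on $\pi$ but becomes a collider on $\pi'$.'' Here $v_{i-1}\suh v_i\to v_{i+1}$ on $\pi$ and the chord has an \emph{arrowhead} at $v_i$ (so $v_i\hus v_j$, i.e.\ $v_i\huh v_j$ or $v_i\hut v_j$). There is no undirected edge in sight, so Lemma~\ref{oldDef} gives you nothing, and there is no evident shorter $m$-open path to manufacture. The paper does \emph{not} derive a contradiction in this sub-case; instead it shows that $v_i$ fails to block $\pi'$. The argument is an acyclicity argument: follow the directed subpath of $\pi$ starting at $v_i\to v_{i+1}\to\cdots$. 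If it reached $v_j$ without meeting a collider, then $v_i\in\anc_{\mathcal H}(v_j)$ together with $v_i\hus v_j$ would give a directed or almost directed cycle, violating ancestrality. Hence it meets a collider on $\pi$, which lies in $\anc_{\mathcal H}(Z)$ since $\pi$ is $m$-open, and therefore $v_i\in\anc_{\mathcal H}(Z)$. So $v_i$, though now a collider on $\pi'$, does not block it. The symmetric argument handles $v_j$. Only after this step can you conclude that the blocker must be a collider-to-non-collider transition, which (bad patterns having been excluded) forces the chord to be $v_i\to v_j$ or $v_j\to v_i$. Your plan, as written, would stall precisely here.
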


\begin{restatable}{lemma}{lemmaeight} \label{lemma8}
Let \( \pi \) be a shortest $m$-open path between \( v_0 \) and \( v_n \) given \( Z \) in a \( \sigma \)-MAG \( \mathcal{H} \). Suppose \( \pi \) contains the subpath \( v_{k-1} \sus  v_k \sus  v_{k+1} \), where \( v_{k-1} \) and \( v_{k+1} \) are adjacent in \( \mathcal{H} \). Then, \( \mathcal{H} \) must contain one of the subgraphs shown in Figure~\ref{fig:lemma8}.
\end{restatable}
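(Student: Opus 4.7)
The plan is to derive the claim as a corollary of Lemma~\ref{lemma7}, aided by the ancestrality and $\sigma$-completeness axioms of $\mathcal{H}$. Since $\pi$ is a path (no repeated nodes), the edge $v_{k-1}\sus v_{k+1}$ that exists by hypothesis is \emph{not} among the edges of $\pi$: the only edge of $\pi$ incident to $v_{k-1}$ on the side towards $v_{k+1}$ is $v_{k-1}\sus v_k$. Hence the premises of Lemma~\ref{lemma7} are met with $i=k-1$, $j=k+1$, and that lemma yields one of two symmetric alternatives: (1) $v_{k-2}\suh v_{k-1}\hus v_k$ is a subpath of $\pi$ and $v_{k-1}\rightarrow v_{k+1}$ is an edge of $\mathcal{H}$; or (2) $v_k\suh v_{k+1}\hus v_{k+2}$ is a subpath of $\pi$ and $v_{k-1}\leftarrow v_{k+1}$ is an edge of $\mathcal{H}$. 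I would focus the argument on (1); case (2) is handled by a mirror-image argument.

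In case (1), $v_{k-1}$ is already a collider on $\pi$ (hence $v_{k-1}\in\anc_\mathcal{H}(Z)$ by $m$-openness) and the edge $v_{k-1}\sus v_k$ is into $v_{k-1}$, so it is either $v_{k-1}\huh v_k$ or $v_{k-1}\leftarrow v_k$. The remaining unknowns are the mark at $v_k$ of the edge $v_{k-1}\sus v_k$ and the two marks of $v_k\sus v_{k+1}$, giving a finite enumeration of possibilities. I would eliminate the inconsistent combinations using: (i) the ancestrality axiom together with Lemma~\ref{lemmaant}, which rules out any orientation that, combined with $v_{k-1}\rightarrow v_{k+1}$, would create a directed or almost directed cycle (and $\sigma$-MAGs contain neither, as noted immediately after Lemma~\ref{oldDef}); (ii) the $\sigma$-completeness axiom via Lemma~\ref{oldDef}, which is triggered whenever a sub-pattern of the form $\suh\cdot\tut$ appears among the three edges incident to $v_k$ and forces matching edge types plus completeness of the relevant neighborhood; and (iii) where necessary, a second application of Lemma~\ref{lemma7} to another pair of adjacent but non-consecutive vertices that the candidate configuration would produce on $\pi$, which would contradict the minimality of $\pi$. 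What remains after these eliminations are precisely the configurations depicted in Figure~\ref{fig:lemma8}.

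The main obstacle is the case analysis itself: several of the eliminations require chaining Lemma~\ref{lemmaant} with ancestrality to exclude subtle almost directed cycles that pass through the newly introduced direct edge $v_{k-1}\sus v_{k+1}$ rather than through $v_k$. The genuine novelty compared with the MAG argument of \citet{spirtes1996polynomial} lies in accommodating the undirected edge type $\tut$: whenever a putative configuration around $v_k$ contains a sub-pattern $\suh\cdot\tut$, Lemma~\ref{oldDef} has to be invoked to propagate both edge types and clique structure into the neighborhoods of $v_k$ and $v_{k+1}$, and the resulting picture must then be checked for compatibility with the direct edge supplied by Lemma~\ref{lemma7}.
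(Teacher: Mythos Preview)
Your proposal is correct and follows essentially the same approach as the paper: invoke Lemma~\ref{lemma7} with $i=k-1$, $j=k+1$, then eliminate the configurations that contradict the axioms of a $\sigma$-MAG, leaving exactly the six subgraphs of Figure~\ref{fig:lemma8}. The paper's own proof is a single sentence (``Directly applying Lemma~\ref{lemma7}, and noting that $\mathcal{H}$ does not contain a directed cycle or an almost directed cycle, we obtain the remaining six cases''), so your more detailed case analysis is if anything more careful than the paper's. Two minor remarks: the ``second application of Lemma~\ref{lemma7}'' you mention in item~(iii) is never needed---after Lemma~\ref{lemma7} fixes the orientation of $v_{k-1}\sus v_{k+1}$ and the collider status of one endpoint, the remaining marks at $v_k$ are pinned down purely by ancestrality and (for the $\tut$ cases) either $m$-openness condition~(c) or Lemma~\ref{oldDef}; and your invocation of Lemma~\ref{oldDef} is in fact slightly sharper than the paper's stated ``no (almost) directed cycles'', since the configuration $v_{k-1}\hut v_k\tut v_{k+1}$ with $v_{k-1}\tuh v_{k+1}$ is excluded by ancestrality (or Lemma~\ref{oldDef}) but not by the absence of (almost) directed cycles alone.
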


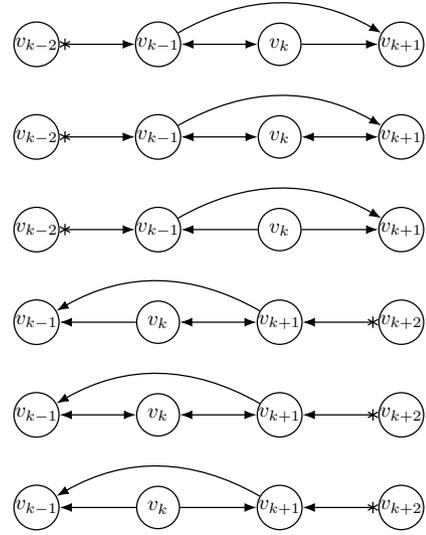
\begin{figure}[t]
    \centering
        \scalebox{0.8}{\begin{tikzpicture}
            \begin{scope}
            \node[ndout] (vk-2) at (-4,0) {$v_{k-2}$};
            \node[ndout] (vk-1) at (-2,0) {$v_{k-1}$};
            \node[ndout] (vk) at (0,0) {$v_k$};
            \node[ndout] (vk+1) at (2,0) {$v_{k+1}$};
            \draw[suh] (vk-2) edge (vk-1);
            \draw[huh] (vk-1) edge (vk);
            \draw[tuh] (vk) edge (vk+1);
            \draw[tuh, bend left] (vk-1) edge (vk+1);
            \end{scope}
        \end{tikzpicture}}

        \medskip
        \scalebox{0.8}{\begin{tikzpicture}
            \begin{scope}
            \node[ndout] (vk-2) at (-4,0) {$v_{k-2}$};
            \node[ndout] (vk-1) at (-2,0) {$v_{k-1}$};
            \node[ndout] (vk) at (0,0) {$v_k$};
            \node[ndout] (vk+1) at (2,0) {$v_{k+1}$};
            \draw[suh] (vk-2) edge (vk-1);
            \draw[huh] (vk-1) edge (vk);
            \draw[huh] (vk) edge (vk+1);
            \draw[tuh, bend left] (vk-1) edge (vk+1);
            \end{scope}
        \end{tikzpicture}}

        \medskip 
        \scalebox{0.8}{\begin{tikzpicture}
            \begin{scope}
            \node[ndout] (vk-2) at (-4,0) {$v_{k-2}$};
            \node[ndout] (vk-1) at (-2,0) {$v_{k-1}$};
            \node[ndout] (vk) at (0,0) {$v_k$};
            \node[ndout] (vk+1) at (2,0) {$v_{k+1}$};
            \draw[suh] (vk-2) edge (vk-1);
            \draw[hut] (vk-1) edge (vk);
            \draw[tuh] (vk) edge (vk+1);
            \draw[tuh, bend left] (vk-1) edge (vk+1);
            \end{scope}
        \end{tikzpicture}}

        \medskip
        \scalebox{0.8}{\begin{tikzpicture}
            \begin{scope}
            \node[ndout] (vk-1) at (-4,0) {$v_{k-1}$};
            \node[ndout] (vk) at (-2,0) {$v_{k}$};
            \node[ndout] (vk+1) at (0,0) {$v_{k+1}$};
            \node[ndout] (vk+2) at (2,0) {$v_{k+2}$};
            \draw[hut] (vk-1) edge (vk);
            \draw[huh] (vk) edge (vk+1);
            \draw[hus] (vk+1) edge (vk+2);
            \draw[hut, bend left] (vk-1) edge (vk+1);
            \end{scope}
        \end{tikzpicture}}

        \medskip
        \scalebox{0.8}{\begin{tikzpicture}
            \begin{scope}
            \node[ndout] (vk-1) at (-4,0) {$v_{k-1}$};
            \node[ndout] (vk) at (-2,0) {$v_{k}$};
            \node[ndout] (vk+1) at (0,0) {$v_{k+1}$};
            \node[ndout] (vk+2) at (2,0) {$v_{k+2}$};
            \draw[huh] (vk-1) edge (vk);
            \draw[huh] (vk) edge (vk+1);
            \draw[hus] (vk+1) edge (vk+2);
            \draw[hut, bend left] (vk-1) edge (vk+1);
            \end{scope}
        \end{tikzpicture}}

        \medskip
        \scalebox{0.8}{\begin{tikzpicture}
            \begin{scope}
            \node[ndout] (vk-1) at (-4,0) {$v_{k-1}$};
            \node[ndout] (vk) at (-2,0) {$v_{k}$};
            \node[ndout] (vk+1) at (0,0) {$v_{k+1}$};
            \node[ndout] (vk+2) at (2,0) {$v_{k+2}$};
            \draw[hut] (vk-1) edge (vk);
            \draw[tuh] (vk) edge (vk+1);
            \draw[hus] (vk+1) edge (vk+2);
            \draw[hut, bend left] (vk-1) edge (vk+1);
            \end{scope}
        \end{tikzpicture}}
  \caption{Possible subgraphs for Lemma~\ref{lemma8}.}
  \label{fig:lemma8}
\end{figure}

Next, from Lemma~\ref{lemma10}, we can infer Lemmas~\ref{lemma11} to \ref{lemma13}, which collectively provide a foundation for defining the order of covered nodes.

\begin{restatable}{lemma}{lemmaten} \label{lemma10}
In a \( \sigma \)-MAG $\mathcal{H}$, if \( \pi \) is a shortest $m$-open path between \( v_0 \) and \( v_n \) given \( Z \), then no pair of distinct covered nodes \( v_b \) and \( v_j \) on $\pi$ can satisfy both of the following conditions: \( v_b \) is a covered node on the discriminating subpath of \( \pi \) for \( v_j \), and \( v_j \) is a covered node on the discriminating subpath of \( \pi \) for \( v_b \).  
\end{restatable}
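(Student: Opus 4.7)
The plan is to argue by contradiction: assume distinct covered nodes $v_b$ and $v_j$ on $\pi$ each lie on the discriminating subpath of $\pi$ for the other, and WLOG take $b<j$. I would first invoke Proposition~\ref{lemma9} to locate both discriminating subpaths. Since $v_b$ lies on the discriminating subpath for $v_j$ with $b<j$, that subpath must extend to the left of $v_j$, so by case~1 of Proposition~\ref{lemma9} we have $v_{j-1}\to v_{j+1}$ and the subpath runs between some $v_i$ (with $i<b$) and $v_{j+1}$. Symmetrically, the discriminating subpath for $v_b$ must extend to the right, so $v_{b-1}\leftarrow v_{b+1}$ and the subpath runs between $v_{b-1}$ and some $v_{i''}$ with $i''>j$. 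Since $v_b$ (resp.\ $v_j$) is a strictly interior collider on the first (resp.\ second) subpath, I would then read off the two directed edges $v_b\to v_{j+1}$ and $v_j\to v_{b-1}$.

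Next I would do a case split on the orientations of the two outer edges of $\pi$ at $v_b$ and $v_j$: the edge $v_{b-1}\sus v_b$ must have an arrowhead at $v_b$ (since $v_b$ is a collider on $\pi$) and is therefore either $v_{b-1}\to v_b$ or $v_{b-1}\huh v_b$, and analogously $v_j\sus v_{j+1}$ is either $v_{j+1}\to v_j$ or $v_j\huh v_{j+1}$. In three of the four combinations, the two derived edges $v_b\to v_{j+1}$ and $v_j\to v_{b-1}$ combine with the two outer edges to form either a directed cycle on $\{v_{b-1},v_b,v_{j+1},v_j\}$ (when both outer edges are directed) or an almost directed cycle (in the two ``mixed'' combinations); both are forbidden by the ancestrality of a $\sigma$-MAG, as noted after Lemma~\ref{oldDef}.

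The hard part will be the remaining combination in which both outer edges $v_{b-1}\huh v_b$ and $v_j\huh v_{j+1}$ are bidirected. My plan here is to consider the subpath $\tau$ of $\pi$ from $v_{b-1}$ to $v_{j+1}$. Each intermediate node $v_b,v_{b+1},\ldots,v_j$ is a collider on $\pi$ (being an interior node of one of the two discriminating subpaths), so any edge of $\tau$ between two consecutive intermediates has arrowheads at both endpoints and is thus bidirected; together with the two bidirected outer edges, every edge of $\tau$ is bidirected and every intermediate is a collider on $\tau$. Moreover, from the defining property of the two discriminating subpaths, each intermediate is a parent of either $v_{j+1}$ or $v_{b-1}$, and therefore lies in $\anc_{\mathcal{H}}(\{v_{b-1},v_{j+1}\})$. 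Hence $\tau$ is an inducing path between $v_{b-1}$ and $v_{j+1}$ in $\mathcal{H}$.

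To close, I would split on whether $v_{b-1}$ equals the ``$a$''-endpoint $v_i$ of the discriminating subpath for $v_j$. If $v_{b-1}=v_i$, then by the definition of a discriminating path $v_{b-1}$ is non-adjacent to $v_{j+1}$, and the inducing path $\tau$ violates the maximality clause of the $\sigma$-MAG definition, a contradiction. If $v_{b-1}\neq v_i$, then $v_{b-1}$ is an interior node of the discriminating subpath for $v_j$ and therefore a parent of $v_{j+1}$, so $v_{b-1}\to v_{j+1}$ is an edge of $\mathcal{H}$; combined with $v_j\to v_{b-1}$ this yields a directed path $v_j\to v_{b-1}\to v_{j+1}$, which together with the edge $v_j\huh v_{j+1}$ forms an almost directed cycle, once again contradicting ancestrality. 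This covers all cases, completing the plan.
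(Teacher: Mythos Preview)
Your proof is correct and follows essentially the same strategy as the paper: derive parent edges from the interior-collider structure of the two discriminating subpaths, obtain a forbidden (almost) directed cycle in the generic situation, and invoke an inducing-path-versus-maximality argument in the boundary case where the endpoints coincide.

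Two remarks on economy. First, your cases 1--3 are vacuous: once you know $v_j$ is both a covered node with $v_{j-1}\to v_{j+1}$ and a collider on $\pi$ (as an interior node of the discriminating subpath for $v_b$), Lemma~\ref{lemma8} forces the configuration $v_{j-1}\huh v_j\huh v_{j+1}$, and symmetrically $v_{b-1}\huh v_b$; so only case~4 can actually occur. Second, the paper's version of the cycle argument is a bit tighter: instead of routing the cycle through $v_b$ and $v_j$, it observes that $v_{j+1}$ (being adjacent to $v_j$) lies on the discriminating subpath for $v_b$ as well, and hence (when it is not the far endpoint) is itself a parent of $v_{b-1}$. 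This gives $v_{j+1}\to v_{b-1}$ directly, which together with your $v_{b-1}\to v_{j+1}$ already yields two opposing directed edges between the same pair of nodes---an immediate contradiction in a $\sigma$-MAG. The boundary case (both discriminating subpaths have their ``$a$'' endpoint at the other's ``$c$''-neighbor) is then handled exactly by your inducing-path argument.
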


\begin{restatable}{lemma}{lemmaeleven} \label{lemma11}
In a $\sigma$-MAG $\mathcal{H}$, if $\pi$ is a shortest $m$-open path between $v_0$ and $v_n$ given $Z$, then no triple of distinct covered nodes $v_i,v_j,v_k$ on $\pi$ can satisfy all of the following conditions: $v_i$ is a covered node on the discriminating subpath of $\pi$ for $v_j$, $v_j$ is a covered node on the discriminating subpath of $\pi$ for $v_k$, and $v_k$ lies between $v_i$ and $v_j$ on $\pi$.
\end{restatable}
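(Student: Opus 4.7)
The plan is to derive a contradiction with Lemma~\ref{lemma10} via a short positional argument, rather than by attempting to construct a shorter $m$-open path (the natural first try). The key insight will be that the three hypotheses together force the covered node $v_k$ to lie strictly in the interior of the discriminating subpath of $v_j$ for $\pi$; combined with hypothesis~(ii), this will instantiate exactly the mutually-covered configuration that Lemma~\ref{lemma10} forbids.

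First, without loss of generality (by reversing $\pi$ if necessary), I will assume the positions of the three nodes on $\pi$ satisfy $j < k < i$. I apply Proposition~\ref{lemma9} to the covered node $v_j$: since $v_i$ is a covered node on the discriminating subpath for $v_j$ and lies strictly to the right of $v_j$, the subpath must fall under Case~2 of Proposition~\ref{lemma9}, so it has the form $v_{j-1}, v_j, v_{j+1}, \ldots, v_{R(j)}$ for some index $R(j) > j$, whose interior nodes occupy positions $j, j+1, \ldots, R(j)-1$. Since covered nodes are by definition strictly interior to their path, the fact that $v_i$ is a covered node of this subpath forces $j < i < R(j)$.

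Next, the hypothesis that $v_k$ lies between $v_i$ and $v_j$ on $\pi$ gives the positional chain $j < k < i < R(j)$, so $v_k$ is also strictly interior to the discriminating subpath of $v_j$. Because $v_k$ is strictly interior to the subpath, its neighbors on the subpath coincide with its neighbors $v_{k-1}, v_{k+1}$ on $\pi$; since $v_k$ is (by the overall assumption of the lemma) a covered node on $\pi$, it follows that $v_k$ is also a covered node on the discriminating subpath for $v_j$. Combined with hypothesis~(ii), this means the pair $(v_k, v_j)$ satisfies both clauses of Lemma~\ref{lemma10}, yielding the desired contradiction. The symmetric case $i < k < j$ is handled by reversing $\pi$, which preserves all three hypotheses (and merely swaps Case~1 and Case~2 in Proposition~\ref{lemma9}).

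The main subtlety will be reading the phrase ``covered node on the discriminating subpath for $v_j$'' strictly: the node must be \emph{interior} to that subpath (so that being ``covered'' with respect to the subpath is well-defined), and for strictly interior nodes the path-neighbors on the subpath equal those on $\pi$. Once this reading is fixed, the proof is essentially the positional chain $j < k < i < R(j)$ together with a direct appeal to Lemma~\ref{lemma10}; no shortening of $\pi$ or case analysis on edge orientations is required.
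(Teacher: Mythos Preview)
Your proposal is correct and follows essentially the same approach as the paper's proof: show that $v_k$ inherits the ``covered node'' status on the discriminating subpath $\pi_j$ for $v_j$ (because it lies strictly between the interior nodes $v_i$ and $v_j$ of $\pi_j$, so its $\pi$-neighbors are also its $\pi_j$-neighbors), and then invoke Lemma~\ref{lemma10} against hypothesis~(ii). Your version is a bit more explicit about the positional bookkeeping (the WLOG $j<k<i$ and the appeal to Case~2 of Proposition~\ref{lemma9}), whereas the paper compresses this into the single observation that ``every node between $v_i$ and $v_j$ must also lie on $\pi_j$'' and that $v_k$ is a non-endpoint there; but the substance is identical.
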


\begin{restatable}{lemma}{lemmatwelve} \label{lemma12}
In a \( \sigma \)-MAG \( \mathcal{H} \), if \( \pi \) is a shortest $m$-open path between \( v_0 \) and \( v_n \) given \( Z \), then no quadruple of distinct covered nodes \( v_i, v_j, v_k, v_l \) on $\pi$ can satisfy all of the following conditions: \( v_i \) is a covered node on the discriminating subpath of \( \pi \) for \( v_j \), \( v_k \) is a covered node on the discriminating subpath of \( \pi \) for \( v_l \), $v_l$ lies between $v_i$ and $v_j$ on $\pi$, and $v_j$ lies between $v_l$ and $v_k$ on $\pi$.
\end{restatable}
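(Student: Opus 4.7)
The plan is a proof by contradiction: I assume that a quadruple $v_i, v_j, v_k, v_l$ as described exists and derive a violation of ancestrality. Without loss of generality, I take the order on $\pi$ to be $v_i, v_l, v_j, v_k$ (the reverse case is symmetric). The central idea is that the two discriminating subpaths for $v_j$ and $v_l$ force two directed ``shortcut'' edges in $\mathcal{H}$ whose concatenation produces an anterior path that clashes with the collider structure of $\pi$ at $v_l$.

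First I apply Proposition~\ref{lemma9} to each of $v_j$ and $v_l$. Since $v_i$ is strictly interior to $v_j$'s discriminating subpath with $i<j$, the subpath must be the ``leftward'' one from Proposition~\ref{lemma9}: it has the form $(v_{i_0}, v_{i_0+1}, \ldots, v_{j-1}, v_j, v_{j+1})$ for some $i_0 < i$, and each of $v_{i_0+1},\ldots,v_{j-1}$ is a collider on $\pi$ and a parent of $v_{j+1}$ in $\mathcal{H}$. Symmetrically, since $v_k$ is strictly interior to $v_l$'s discriminating subpath with $k>l$, that subpath is the ``rightward'' one: it has the form $(v_{l-1}, v_l, v_{l+1}, \ldots, v_{i_1-1}, v_{i_1})$ for some $i_1 > k$, and each of $v_{l+1},\ldots,v_{i_1-1}$ is a collider on $\pi$ and a parent of $v_{l-1}$ in $\mathcal{H}$.

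A brief index check shows that the two discriminating subpaths interlock as needed. From $i_0 < i < l < j$, the index $l$ lies in $[i_0+1, j-1]$, so $v_l$ is a middle node of $v_j$'s subpath; in particular, $v_l$ is a collider on $\pi$ and $v_l \to v_{j+1}$ in $\mathcal{H}$. From $l < j < k \leq i_1-1$, the index $j+1$ lies in $[l+1, i_1-1]$, so $v_{j+1}$ is a middle node of $v_l$'s subpath; in particular, $v_{j+1} \to v_{l-1}$ in $\mathcal{H}$. Concatenating the two shortcuts yields the directed (hence anterior) path $v_l \to v_{j+1} \to v_{l-1}$ from $v_l$ to $v_{l-1}$ in $\mathcal{H}$.

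For the contradiction: because $v_l$ is a collider on $\pi$, the $\pi$-edge between $v_{l-1}$ and $v_l$ has an arrowhead at $v_l$, so $\mathcal{H}$ contains an edge of the form $v_l \hus v_{l-1}$. But ancestrality of the $\sigma$-MAG forbids any edge $v_l \hus v_{l-1}$ once $\mathcal{H}$ contains an anterior path from $v_l$ to $v_{l-1}$, which yields the desired contradiction. I foresee no serious obstacle beyond the index bookkeeping, which is routine once one observes that the covered nodes $v_i$ and $v_k$ of the hypothesis must lie strictly interior to their respective discriminating subpaths.
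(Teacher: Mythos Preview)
Your argument is correct. The index bookkeeping checks out: with the order $i<l<j<k$, Proposition~\ref{lemma9} forces $v_j$'s discriminating subpath to extend leftward to some $v_{i_0}$ with $i_0<i$, so $l\in[i_0+1,j-1]$ and hence $v_l$ is a middle collider with $v_l\to v_{j+1}$; symmetrically $v_l$'s discriminating subpath extends rightward past $k$, so $j+1\in[l+1,i_1-1]$ and $v_{j+1}\to v_{l-1}$. The directed path $v_l\to v_{j+1}\to v_{l-1}$ together with the $\pi$-edge $v_{l-1}\suh v_l$ (from $v_l$ being a collider) violates ancestrality.

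The paper takes a different, more modular route: it observes that $v_l$, lying strictly between the non-endpoints $v_i$ and $v_j$ of $\pi_j$, is itself a non-endpoint of $\pi_j$ whose $\pi$-neighbours are adjacent (since $v_l$ is covered on $\pi$), so $v_l$ is a covered node on $\pi_j$; by symmetry $v_j$ is a covered node on $\pi_l$; and this pair directly contradicts Lemma~\ref{lemma10}. Your proof bypasses Lemma~\ref{lemma10} entirely and instead extracts the two ``shortcut'' parent edges from the discriminating structure to build an explicit almost-directed cycle. The paper's version is shorter once Lemma~\ref{lemma10} is in hand and keeps the chain of lemmas (Lemma~\ref{lemma10} $\Rightarrow$ Lemma~\ref{lemma11} $\Rightarrow$ Lemma~\ref{lemma12} $\Rightarrow$ Lemma~\ref{lemma13}) uniform; your version is more self-contained and shows that this particular case admits a very direct cycle contradiction without appealing to the earlier lemma.
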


\begin{restatable}{lemma}{lemmathirteen} \label{lemma13}
In a \( \sigma \)-MAG \( \mathcal{H} \), if \( \pi \) is a shortest $m$-open path between \( a \) and \( b \) given \( Z \), then no sequence of distinct covered nodes \( v_0, v_1, \ldots, v_n \) on $\pi$ with \( n \geq 2 \) can satisfy the following conditions: for each \( i = 0, \ldots, n-1 \), the node \( v_i \) is a covered node on the discriminating subpath of \( \pi \) for \( v_{i+1} \), and \( v_n \) is a covered node on the discriminating subpath of \( \pi \) for \( v_0 \). 
\end{restatable}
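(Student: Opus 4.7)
The plan is to argue by strong induction on the cycle length $n+1$, using Lemma~\ref{lemma10} to exclude length $2$, treating $n = 2$ (length $3$) as the base case, and inducting upwards with Lemmas~\ref{lemma11} and \ref{lemma12} as the main structural tools. I will write $u \prec w$ as shorthand for ``$u$ is a covered node on the discriminating subpath of $\pi$ for $w$''. For each cycle relation $v_i \prec v_{i+1}$, Proposition~\ref{lemma9} tells me that either (Case 1) the discriminating subpath extends to the left of $v_{i+1}$ on $\pi$, forcing the $\pi$-position of $v_i$ to be strictly smaller than that of $v_{i+1}$, or (Case 2) it extends to the right and the inequality reverses. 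I will label these two directions $L$ and $R$. A key observation I will exploit is that every cycle vertex $v_j$ is already a covered node of $\pi$ (its two $\pi$-neighbours must be adjacent in $\mathcal{H}$ because $v_{j-1}\prec v_j$ holds), so whenever $v_j$ lies in the interior of the discriminating subpath for some $v_i$, I automatically obtain the relation $v_j \prec v_i$.

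For the base case $n=2$, I would enumerate the eight possible direction patterns $(d_0,d_1,d_2)$. The uniform patterns $LLL$ and $RRR$ are immediately excluded by the inconsistent chain of strict inequalities among the three $\pi$-positions. For each of the six mixed patterns, a direct positional analysis locates a discriminating subpath that contains a cycle vertex $v_j$ with $j\notin\{i-1,i\}$; by the observation above, this yields a chord $v_j \prec v_i$. Combined with the appropriate sub-arc of the original cycle, such a chord produces a $2$-cycle in $\prec$, contradicting Lemma~\ref{lemma10}.

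For the inductive step with $n\geq 3$, I will assume no cycles of length $3,\ldots,n$ exist and produce a chord. Lemma~\ref{lemma12} rules out crossing intervals, so the family of intervals spanned by consecutive cycle relations on $\pi$ must be laminar. Examining the leftmost cycle vertex, cyclic closure forces its direction to be $R$ so that its discriminating subpath extends strictly past its predecessor to the right; a symmetric argument applies to the rightmost vertex. Tracing the directions cyclically and combining laminarity with the extent information of Proposition~\ref{lemma9}, I would locate some discriminating subpath containing a non-consecutive cycle vertex, yielding the desired chord $v_j \prec v_i$. Substituting this chord into the cycle shortens it to length in $\{2,3,\ldots,n\}$, giving a contradiction via Lemma~\ref{lemma10} (length $2$) or the induction hypothesis (length $\geq 3$).

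The hardest step will be the chord-existence argument in the inductive case. Laminarity from Lemma~\ref{lemma12} alone does not force a non-consecutive containment; one must also use the more refined description of discriminating subpaths in Proposition~\ref{lemma9} together with the local edge configurations catalogued in Lemma~\ref{lemma8} (Figure~\ref{fig:lemma8}) to certify that whenever a discriminating subpath extends beyond the immediate cycle predecessor, it must also engulf another cycle vertex. Making this geometric intuition rigorous is where I expect most of the technical work to lie.
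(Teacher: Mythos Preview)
Your approach differs from the paper's: the paper gives a direct (non-inductive) positional argument using Lemmas~\ref{lemma10}, \ref{lemma11}, and~\ref{lemma12}, whereas you propose induction on cycle length with a chord-finding step. The strategy is sound, but your inductive step is left as a sketch, and you yourself flag the chord-existence argument as the main difficulty.

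In fact that step is simpler than you anticipate and needs neither the laminarity from Lemma~\ref{lemma12} nor the configurations of Lemma~\ref{lemma8}. Write $p(\cdot)$ for $\pi$-positions and let $v_j$ be the cycle vertex with smallest $p(v_j)$. Every other cycle vertex lies strictly to the right of $v_j$, in particular both $v_{j-1}$ and $v_{j+1}$ (cyclic indices; distinct since $n\ge 2$). From $v_{j-1}\prec v_j$ with $p(v_{j-1})>p(v_j)$, the discriminating subpath for $v_j$ must be of the case-2 type of Proposition~\ref{lemma9}, with interior $\pi$-positions $p(v_j)+1,\dots,I'-1$ for some $I'>p(v_{j-1})$. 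From $v_j\prec v_{j+1}$ with $p(v_j)<p(v_{j+1})$, the discriminating subpath for $v_{j+1}$ is of the case-1 type, with interior $\pi$-positions $I+1,\dots,p(v_{j+1})-1$ for some $I<p(v_j)$. Now compare $p(v_{j-1})$ and $p(v_{j+1})$. If $p(v_{j+1})<p(v_{j-1})$, then $v_{j+1}$ lies in the interior of the discriminating subpath for $v_j$, so $v_{j+1}\prec v_j$; combined with $v_j\prec v_{j+1}$ this contradicts Lemma~\ref{lemma10}. If $p(v_{j-1})<p(v_{j+1})$, then $v_{j-1}$ lies in the interior of the discriminating subpath for $v_{j+1}$, giving the chord $v_{j-1}\prec v_{j+1}$; the cycle $v_{j+1}\prec v_{j+2}\prec\cdots\prec v_{j-1}\prec v_{j+1}$ has $n$ nodes, and the induction hypothesis (or Lemma~\ref{lemma10} when $n=2$) finishes. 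Thus your induction works uniformly and the eight-case base enumeration is unnecessary.

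The paper's proof is more intricate: it fixes the largest index $k$ with $v_k$ to the right of $v_0$ on $\pi$, then uses Lemmas~\ref{lemma11} and~\ref{lemma12} to force some later $v_l$ also to the right of $v_0$, contradicting maximality of $k$. Your route, once completed as above, is shorter and uses only Lemma~\ref{lemma10}; the paper's route avoids induction but leans on the two additional structural lemmas. (A minor slip in your write-up: the cycle vertices are covered on $\pi$ by hypothesis in the statement, not because $v_{j-1}\prec v_j$.)
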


Equipped with the above lemmata, we now introduce the \emph{order} of covered nodes on a shortest $m$-open path.  

\begin{definition} [Order of covered Node] \label{order of covered node}
In a \( \sigma \)-MAG \( \mathcal{H} \), let \( \pi \) be a shortest path between nodes $a$ and $b$ given a conditioning set $Z$. A covered node  \( v \) on \( \pi \) is called a \emph{0-order covered node} if the discriminating subpath of \( \pi \) for \( v \) contains no other covered nodes.  

Furthermore, a covered node \( v \) on \( \pi \) is called an \emph{$n$-order covered node} if the maximum order of any other covered node on the discriminating subpath of \( \pi \) for \( v \) is \( n-1 \).  
\end{definition}

Lemma~\ref{lemma13} guarantees that this recursive definition is well-founded. Given a shortest $m$-open path containing covered nodes in a \( \sigma \)-MAG, starting from an arbitrary covered node \( v_0 \), if there is no other covered node on the discriminating subpath of \( \pi \) for \( v_0 \), then \( v_0 \) is a 0-order covered node. If another covered node \( v_1 \) exists on this subpath, we proceed by checking whether there is a covered node on the discriminating subpath of \( \pi \) for \( v_1 \). Repeating this process, we must eventually reach a covered node that has no other covered nodes on its discriminating subpath, since \( \pi \) has finite length and Lemma~\ref{lemma13} ensures that we do not encounter repetitive covered nodes in this process. Similarly, there must exist a covered node such that all covered nodes on its discriminating subpath of \( \pi \) are 0-order. By induction, the definition of $n$-order covered node is well-defined.

Now we will show that Condition~\ref{condition} is also a sufficient condition for $m$-Markov equivalence.

\begin{restatable}{lemma}{lemmafifteen} \label{lemma15}  
If two \( \sigma \)-MAGs \( \mathcal{H}_1 \) and \( \mathcal{H}_2 \) with the same nodes $\mathcal{V}$ satisfy Condition~\ref{condition}, and if \( \pi=(a,v_0,\ldots,v_n,b,c) \) is a discriminating path for $b$ in \( \mathcal{H}_1 \), then let \( \pi' \) be the corresponding path to \( \pi \) in \( \mathcal{H}_2 \). If every node on \( \pi' \), except for the endpoints and \( b \), is a collider, then \( \pi' \) is a discriminating path for \( b \) in \( \mathcal{H}_2 \). 
\end{restatable}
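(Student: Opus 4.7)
The plan is to prove by induction on $k \in \{0, 1, \ldots, n\}$ that $v_k \tuh c$ holds in $\mathcal{H}_2$. Since $a$ and $c$ are non-adjacent in $\mathcal{H}_1$ by the definition of a discriminating path, they are also non-adjacent in $\mathcal{H}_2$ by Condition~\ref{condition}; combined with the hypothesis that every $v_k$ is a collider on $\pi'$, establishing the parent-of-$c$ property for each $v_k$ in $\mathcal{H}_2$ will make $\pi'$ a discriminating path for $b$ in $\mathcal{H}_2$.

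For the base case $k = 0$, I examine the triple $(a, v_0, c)$, which is unshielded in both graphs by the shared adjacencies. In $\mathcal{H}_1$ it is a non-collider, since the outgoing edge $v_0 \tuh c$ places a tail at $v_0$, so by the matching unshielded-collider part of Condition~\ref{condition} it is also a non-collider in $\mathcal{H}_2$. The assumption that $v_0$ is a collider on $\pi'$ yields $a \suh v_0$ in $\mathcal{H}_2$, and the non-collider status of the triple then forces the edge between $v_0$ and $c$ in $\mathcal{H}_2$ to have a tail at $v_0$. The only two options are $v_0 \tuh c$ and $v_0 \tut c$; the $\sigma$-completeness clause in the definition of a $\sigma$-MAG rules out the latter, since $a \suh v_0 \tut c$ would force $a$ and $c$ to be adjacent in $\mathcal{H}_2$, a contradiction.

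For the inductive step, assuming $v_0, \ldots, v_{k-1} \tuh c$ in $\mathcal{H}_2$, I consider the subpath $\pi_k = (a, v_0, \ldots, v_{k-1}, v_k, c)$ of $\pi$. In $\mathcal{H}_1$ this is a discriminating path for $v_k$, since each $v_i$ with $i<k$ remains a collider on $\pi_k$ (the local edges agree with those on $\pi$) and is a parent of $c$. Its counterpart $\pi_k'$ in $\mathcal{H}_2$ is also a discriminating path for $v_k$: each $v_i$ with $i<k$ is a collider on $\pi'$ by hypothesis, hence a collider on $\pi_k'$ since the incident edges agree, and is a parent of $c$ by the inductive hypothesis. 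The discriminating-path part of Condition~\ref{condition} then applies, and since $v_k$ is a non-collider on $\pi_k$ in $\mathcal{H}_1$ (it has the outgoing edge $v_k \tuh c$), $v_k$ is a non-collider on $\pi_k'$ in $\mathcal{H}_2$. Combined with the arrowhead at $v_k$ on the edge from $v_{k-1}$ (from $v_k$ being a collider on $\pi'$), this forces the edge between $v_k$ and $c$ in $\mathcal{H}_2$ to have a tail at $v_k$.

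The main obstacle is ruling out the undirected case $v_k \tut c$ in $\mathcal{H}_2$, because this time the naive $\sigma$-completeness argument from the base case no longer produces a contradiction: $v_{k-1}$ and $c$ are already adjacent by the inductive hypothesis. Here I will invoke the stronger Lemma~\ref{oldDef}: in $\mathcal{H}_2$ the edge $v_{k-1} \huh v_k$ is bidirected, since both $v_{k-1}$ and $v_k$ are colliders on $\pi'$; an undirected edge $v_k \tut c$ would instantiate the pattern $v_{k-1} \huh v_k \tut c$ and force the edge between $v_{k-1}$ and $c$ in $\mathcal{H}_2$ to be bidirected as well, contradicting the inductive hypothesis $v_{k-1} \tuh c$. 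Hence $v_k \tuh c$ in $\mathcal{H}_2$, completing the induction and the proof.
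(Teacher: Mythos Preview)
Your proof is correct and follows essentially the same approach as the paper's: an induction on $k$ showing $v_k \tuh c$ in $\mathcal{H}_2$, with the base case handled via the shared unshielded non-collider $(a,v_0,c)$ and $\sigma$-completeness, and the inductive step via the auxiliary discriminating paths $\pi_k$, $\pi_k'$ together with Lemma~\ref{oldDef} to rule out $v_k \tut c$. The only cosmetic difference is that the paper first argues $\pi_k'$ is discriminating in $\mathcal{H}_2$ and then that the corresponding $\pi_k$ is discriminating in $\mathcal{H}_1$, whereas you present these two facts in the opposite order; your explicit justification that $v_{k-1} \huh v_k$ in $\mathcal{H}_2$ (from both being colliders on $\pi'$) is a detail the paper leaves implicit.
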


Equipped with Definition~\ref{order of covered node}, we now show that every node on a given shortest \( m \)-open path in one \( \sigma \)-MAG retains the same collider/non-collider status on the corresponding path in another \( \sigma \)-MAG, provided that the two graphs satisfy Condition~\ref{condition}.

\begin{restatable}{lemma}{lemmasixteen} \label{lemma16}  
If two \( \sigma \)-MAGs \( \mathcal{H}_1 \) and \( \mathcal{H}_2 \) with the same nodes \( \mathcal{V} \) satisfy Condition~\ref{condition}, and if \( \pi \) is a shortest $m$-open path between \( v_0 \) and \( v_n \) given \( Z \) in \( \mathcal{H}_1 \), with \( \pi' \) being the corresponding path to \( \pi \) in \( \mathcal{H}_2 \), then \( v_k \) is a collider on \( \pi \) if and only if \( v_k \) is a collider on \( \pi' \).  
\end{restatable}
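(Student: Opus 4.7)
The plan is to split into two cases depending on whether $v_k$ is uncovered or covered on $\pi$, and to handle the covered case by strong induction on the order introduced in Definition~\ref{order of covered node}. Throughout, I use that $\mathcal{H}_1$ and $\mathcal{H}_2$ share nodes and adjacencies, so any subsequence of nodes on $\pi$ uniquely determines a corresponding walk in $\mathcal{H}_2$.

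If $v_k$ is uncovered on $\pi$, then $v_{k-1}$ and $v_{k+1}$ are non-adjacent in $\mathcal{H}_1$ and hence also in $\mathcal{H}_2$ by part~1 of Condition~\ref{condition}. The unshielded triple $(v_{k-1},v_k,v_{k+1})$ therefore appears in both graphs, and by part~2 of Condition~\ref{condition} it is an unshielded collider in $\mathcal{H}_1$ iff in $\mathcal{H}_2$. Since the collider status of $v_k$ on $\pi$ (resp.\ $\pi'$) depends only on the two edges incident to $v_k$ along the path, this settles the uncovered case.

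For the covered case I would induct on the order of $v_j$. Let $v_j$ be a covered node of order $n$ on $\pi$, let $\pi_j$ be its unique discriminating subpath in $\mathcal{H}_1$ guaranteed by Proposition~\ref{lemma9}, and let $\pi'_j$ be the corresponding subpath in $\mathcal{H}_2$. By Definition~\ref{order of covered node}, every other covered node lying on $\pi_j$ has order at most $n-1$, and Lemma~\ref{lemma13} guarantees the recursion is well-founded (giving a valid base case: 0-order covered nodes whose discriminating subpath contains no covered interior). Every interior node of $\pi_j$ other than $v_j$ is a collider on $\pi_j$ (hence on $\pi$) by definition of a discriminating path; if it is uncovered on $\pi$ the previous paragraph applies, and if it is covered on $\pi$ of order $<n$ the inductive hypothesis applies. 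In either case the node is a collider on the corresponding subpath $\pi'_j$ in $\mathcal{H}_2$. Now Lemma~\ref{lemma15} applies to $\pi_j$: since every interior node of $\pi'_j$ besides $v_j$ is a collider, $\pi'_j$ is a discriminating path for $v_j$ in $\mathcal{H}_2$. Part~3 of Condition~\ref{condition} then yields that $v_j$ is a collider on $\pi_j$ in $\mathcal{H}_1$ iff it is a collider on $\pi'_j$ in $\mathcal{H}_2$, which coincides with the collider status of $v_j$ on $\pi$ and on $\pi'$ respectively.

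The main obstacle I anticipate is the bookkeeping that ensures the induction is legitimate. In particular, one must verify that the ``order'' assigned to covered nodes of $\pi$ behaves monotonically along the discriminating subpaths — that is, that a covered node appearing in the interior of $\pi_j$ really has strictly smaller order than $v_j$ — and this is precisely what is secured by the chain Lemma~\ref{lemma10}--Lemma~\ref{lemma13} and encoded in Definition~\ref{order of covered node}. The other subtle point is that Lemma~\ref{lemma15} requires \emph{every} non-endpoint interior node of $\pi'_j$ other than $v_j$ to be a collider, which is why both the uncovered sub-case and the inductive sub-case must be used together to cover all such interior nodes before invoking it.
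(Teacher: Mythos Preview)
Your proposal is correct and follows essentially the same approach as the paper's proof: split into the uncovered case (handled directly by parts~1 and~2 of Condition~\ref{condition}) and the covered case (handled by strong induction on the order of Definition~\ref{order of covered node}, invoking Proposition~\ref{lemma9} to obtain the discriminating subpath, the inductive hypothesis together with the uncovered case to verify all interior colliders transfer to $\mathcal{H}_2$, Lemma~\ref{lemma15} to conclude $\pi'_j$ is discriminating, and finally part~3 of Condition~\ref{condition}). Your discussion of the bookkeeping obstacles is accurate and matches how the paper uses Lemmas~\ref{lemma10}--\ref{lemma13}.
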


The following lemma establishes useful properties for a special case involving covered nodes.

\begin{restatable}{lemma}{lemmanineteen} \label{lemma19}  
If a \( \sigma \)-MAG \( \mathcal{H} \) contains a path \( a \suh b\tuh c \) and an edge \( a \sus c \), then:  
\begin{enumerate}  
    \item The edge between $a$ and $c$ is oriented as \( a \suh c \).  
    \item If \( a \sus c \) has a different edge mark at \( a \) than \( a \suh b \), then the edges are oriented as \( a \huh b\tuh c \) and \( a \tuh c \).  
\end{enumerate}  
\end{restatable}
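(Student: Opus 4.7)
The plan is to prove both parts by contradiction, leveraging Lemma~\ref{oldDef} (the Fundamental Property) together with the corollary that a $\sigma$-MAG contains no directed cycles or almost directed cycles.

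For part~1, the possible orientations of $a \sus c$ are $a \tuh c$, $a \huh c$, $a \hut c$, and $a \tut c$; since the first two already have an arrowhead at $c$, it suffices to rule out the latter two. If the edge is $a \hut c$, i.e.\ $c \tuh a$, then combining with $b \tuh c$ yields $b \in \anc_\mathcal{H}(a)$; since $a \suh b$ is either $a \tuh b$ or $a \huh b$, we obtain either a directed cycle or an almost directed cycle, both forbidden in a $\sigma$-MAG. If instead the edge is $a \tut c$, then the triple $b \suh c \tut a$ satisfies the hypothesis of Lemma~\ref{oldDef} (with $b$ playing the role of the starred endpoint), which forces the edge between $b$ and $a$ to be of the same type as $b \tuh c$, namely $b \tuh a$; but this contradicts the arrowhead at $b$ of the given edge $a \suh b$. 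Hence the edge between $a$ and $c$ must be of the form $a \suh c$.

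For part~2, by part~1 we already have $a \suh c$, so the only freedom left is the mark at $a$ on this edge. The hypothesis that the mark at $a$ on $a \sus c$ differs from the mark at $a$ on $a \suh b$ leaves two cases. If $a \suh b$ were $a \tuh b$, then $a \sus c$ would have to be $a \huh c$; but then the directed path $a \tuh b \tuh c$ gives $a \in \anc_\mathcal{H}(c)$, which together with the bidirected edge $a \huh c$ yields an almost directed cycle, a contradiction. Therefore $a \suh b$ must be $a \huh b$, and correspondingly $a \sus c$ must be $a \tuh c$, establishing the claimed orientations $a \huh b \tuh c$ and $a \tuh c$.

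The main subtlety is in part~1: correctly instantiating Lemma~\ref{oldDef} on the triple $b \suh c \tut a$ to derive a contradiction in the undirected subcase. Once that step is handled, every remaining case in both parts is dispatched directly by the prohibition on directed and almost directed cycles.
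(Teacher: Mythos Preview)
Your proof is correct and follows essentially the same route as the paper's: rule out $a \hut c$ via the (almost) directed cycle $b \tuh c \tuh a \suh b$, rule out $a \tut c$ by applying Lemma~\ref{oldDef} to the triple $b \tuh c \tut a$ to force $b \tuh a$ in contradiction with $a \suh b$, and for part~2 eliminate the case $a \tuh b$ with $a \huh c$ via the almost directed cycle $a \tuh b \tuh c \huh a$. The only difference is that you spell out the invocation of Lemma~\ref{oldDef} explicitly where the paper leaves it implicit.
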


By Lemma~\ref{lemma16}, all non-colliders on a shortest \( m \)-open path in one \( \sigma \)-MAG cannot block the corresponding path in another \( \sigma \)-MAG, provided that the two graphs satisfy Condition~\ref{condition}. Lemmas~\ref{lemma20} and~\ref{lemma21} further show that all colliders do not block the path either, thereby establishing the \( m \)-Markov equivalence of the two graphs.

\begin{restatable}{lemma}{lemmatwenty} \label{lemma20}
Let \( \mathcal{H}_1 \) and \( \mathcal{H}_2 \) be two \( \sigma \)-MAGs with the same node set \( \mathcal{V} \) that satisfy Condition~\ref{condition}. Suppose \( \pi \) is a shortest $m$-open path between \( v_0 \) and \( v_n \) given \( Z \) in \( \mathcal{H}_1 \) and has the smallest collider distance sum to \( Z \).\footnote{That is, there does not exist another $m$-open path \( \mu \) between \( v_0 \) and \( v_n \) given \( Z \) in \( \mathcal{H}_1 \) such that either \( \mu \) has fewer edges than \( \pi \), or \( \mu \) has the same number of edges as \( \pi \), but a smaller collider distance sum to \( Z \).} If \( v_i \) is a collider on \( \pi \), \( q \) is an ancestor of \( Z \) in \( \mathcal{H}_1 \), and there is an edge \( v_i \rightarrow q \) on a shortest directed path from \( v_i \) to \( Z \) in \( \mathcal{H}_1 \), then the edge \( v_i \rightarrow q \) must also be present in \( \mathcal{H}_2 \).
\end{restatable}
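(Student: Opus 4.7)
The plan is to argue by contradiction. Suppose the edge between $v_i$ and $q$ in $\mathcal{H}_2$ is not oriented as $v_i \tuh q$; since adjacencies coincide by Condition~\ref{condition}(1), this edge in $\mathcal{H}_2$ must be one of $v_i \hut q$, $v_i \huh q$, or $v_i \tut q$. A first observation, via Lemma~\ref{lemma16}, is that $v_i$ remains a collider on the corresponding path $\pi'$ in $\mathcal{H}_2$, so both $v_{i-1} \suh v_i$ and $v_{i+1} \suh v_i$ hold in $\mathcal{H}_2$ as well. The main case split is whether both neighbors of $v_i$ on $\pi$ are adjacent to $q$ in the shared adjacency structure.

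In the easier case, suppose without loss of generality that $v_{i-1}$ is not adjacent to $q$. Then $(v_{i-1}, v_i, q)$ is unshielded in both graphs and, in $\mathcal{H}_1$, is \emph{not} an unshielded collider since $v_i \tuh q$ has a tail at $v_i$. By Condition~\ref{condition}(2) it is not an unshielded collider in $\mathcal{H}_2$ either; combined with $v_{i-1} \suh v_i$ in $\mathcal{H}_2$, this forces the edge between $v_i$ and $q$ in $\mathcal{H}_2$ to carry a tail at $v_i$, leaving $v_i \tuh q$ or $v_i \tut q$. The undirected option is ruled out by $\sigma$-completeness of $\mathcal{H}_2$ (it would force $v_{i-1}$ and $q$ to be adjacent), so the edge must be $v_i \tuh q$ in $\mathcal{H}_2$, a contradiction.

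In the remaining case, both $v_{i-1}$ and $v_{i+1}$ are adjacent to $q$. Applying Lemma~\ref{lemma19}(1) to each triple $v_{i\pm 1} \suh v_i \tuh q$ with its chord yields $v_{i-1}\suh q$ and $v_{i+1}\suh q$ in $\mathcal{H}_1$. I would then form the walk $\pi^\ast$ in $\mathcal{H}_1$ obtained from $\pi$ by replacing $v_i$ with $q$. If $q$ already occurs elsewhere on $\pi$, a trimming of $\pi^\ast$ produces a strictly shorter $m$-open walk (using Proposition~\ref{walk eq path} to pass to a path), which already contradicts the minimality of $\pi$. Otherwise $\pi^\ast$ is a path with the same number of edges as $\pi$ on which $q$ is a collider, and since $q$ lies one step closer to $Z$ than $v_i$ along the given shortest directed path, the contribution of $q$ to the collider distance sum is one less than that of $v_i$ on $\pi$. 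A careful application of Lemma~\ref{lemma19}(2) at each $v_{i\pm 1}$ shows that its collider status on $\pi^\ast$ either matches the one on $\pi$ or switches from collider to non-collider (never the reverse), so the total collider distance sum of $\pi^\ast$ strictly decreases; if $\pi^\ast$ is still $m$-open given $Z$, this contradicts the joint minimality of $\pi$.

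The hard part will be certifying $m$-openness of $\pi^\ast$ in the second case. Whenever a node $v_{i\pm 1}$ switches from collider to non-collider, one has to rule out that it belongs to $Z$, and one also has to exclude newly introduced forbidden subwalks of the form $\suh v_k \tut v_{k+1}$ or $v_{k-1}\tut v_k \hus v_{k+1}$ near the reinserted node $q$. Both checks come down to a case analysis of the orientations forced by Lemma~\ref{lemma19}(2) at $v_{i\pm 1}$, combined with the $m$-openness of $\pi$ itself (which already precludes the analogous patterns around $v_i$ and $v_{i\pm 1}$) and the joint minimality hypothesis on $\pi$. Once these technical verifications are carried out, the existence of $\pi^\ast$ yields the desired contradiction and establishes the lemma.
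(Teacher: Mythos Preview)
Your easy case (one of $v_{i\pm1}$ not adjacent to $q$) is correct and matches the paper exactly. The problem is the hard case.

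First, notice a structural red flag: your argument in the hard case never uses the contradiction hypothesis about $\mathcal{H}_2$. You work entirely in $\mathcal{H}_1$, build $\pi^\ast$, and claim a violation of minimality. If that worked, it would show that the configuration ``both $v_{i-1}$ and $v_{i+1}$ are adjacent to $q$'' is \emph{impossible} for any minimal $\pi$, independently of $\mathcal{H}_2$. But this configuration is perfectly consistent with the hypotheses on $\pi$, so your argument cannot be right as stated.

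Concretely, the step that fails is the $m$-openness of $\pi^\ast$. Suppose $v_{i-1}$ is a collider on $\pi$ (so $v_{i-2}\suh v_{i-1}\huh v_i$) and, by Lemma~\ref{lemma19}(2), the edge to $q$ is $v_{i-1}\tuh q$. Then $v_{i-1}$ becomes a non-collider on $\pi^\ast$. Nothing in the hypotheses prevents $v_{i-1}\in Z$: on $\pi$ it is a collider, so we only know $v_{i-1}\in\anc_{\mathcal{H}_1}(Z)$, and since $v_i\huh v_{i-1}$ the node $v_{i-1}$ plays no role in the shortest \emph{directed} path from $v_i$ to $Z$. In that situation $\pi^\ast$ is blocked at $v_{i-1}$ and your contradiction evaporates. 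I do not see how ``the joint minimality hypothesis on $\pi$'' rules this out; it does not.

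What the paper does instead in the hard case is the missing idea. It pushes outward from $v_i$ to the nearest nodes $v_{i-m_1}$ and $v_{i+m_2}$ that either are not adjacent to $q$, or have an edge to $q$ not into $q$, or keep the same collider status when rerouted through $q$; using Lemma~\ref{lemma19} it shows that every intermediate $v_{i-k}$ (and $v_{i+k}$) is a collider on $\pi$ with $v_{i-k}\tuh q$. If \emph{both} $v_{i-m_1}$ and $v_{i+m_2}$ are adjacent to $q$, routing through $q$ between them gives a strictly shorter $m$-open path (or, only when $m_1=m_2=1$, one of the same length with smaller collider distance sum), contradicting minimality---this is the only subcase where your $\pi^\ast$ idea actually goes through, precisely because condition~(iii) guarantees no status flip at $v_{i\pm1}$. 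Otherwise one of them, say $v_{i-m_1}$, is \emph{not} adjacent to $q$, and then the subpath $v_{i-m_1}\suh\cdots\huh v_i$ together with $v_i\tuh q$ is a \emph{discriminating path for $v_i$} in $\mathcal{H}_1$ on which $v_i$ is a non-collider. Lemma~\ref{lemma16} transfers the intermediate colliders to $\mathcal{H}_2$, Lemma~\ref{lemma15} makes the corresponding path discriminating in $\mathcal{H}_2$, and Condition~\ref{condition}(3) then forces the edge between $v_i$ and $q$ in $\mathcal{H}_2$ to carry a tail at $v_i$; ruling out the undirected option (via $v_{i-1}\tuh q$ and ancestrality) gives $v_i\tuh q$ in $\mathcal{H}_2$. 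This discriminating-path branch is where the hypothesis about $\mathcal{H}_2$ finally enters, and your sketch omits it entirely.
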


\begin{restatable}{lemma}{lemmatwentyone} \label{lemma21}
Let \( \mathcal{H}_1 \) and \( \mathcal{H}_2 \) be two \( \sigma \)-MAGs with the same node set \( \mathcal{V} \) that satisfy Condition~\ref{condition}. Suppose \( \pi \) is a shortest $m$-open path between \( v_0 \) and \( v_n \) given \( Z \) in \( \mathcal{H}_1 \) and has the smallest collider distance sum to \( Z \). If \( v_k \) is a collider on \( \pi \), and \( q \in Z \) is the endpoint of a shortest directed path \( \mu \) from \( v_k \) to \( Z \) in \( \mathcal{H}_1 \), then \( v_k \) is an ancestor of \( q \) in \( \mathcal{H}_2 \).
\end{restatable}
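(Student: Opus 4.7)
The plan is to prove the stronger claim that the shortest directed path $\mu = v_k \rightarrow w_1 \rightarrow \cdots \rightarrow w_m = q$ in $\mathcal{H}_1$ is also a directed path in $\mathcal{H}_2$, which immediately yields $v_k \in \anc_{\mathcal{H}_2}(q)$. The case $m = 0$ (so that $q = v_k$) is trivial, so I assume $m \geq 1$ and proceed by induction on $i \in \{0,1,\ldots,m-1\}$, showing that each edge $w_i \rightarrow w_{i+1}$ of $\mu$ is present in $\mathcal{H}_2$, under the convention $w_0 := v_k$.

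The base case $i = 0$ is handled directly by Lemma~\ref{lemma20}: since $v_k$ is a collider on the shortest $m$-open path $\pi$ of minimal collider distance sum, and $v_k \rightarrow w_1$ is the first edge of a shortest directed path $\mu$ from $v_k$ to $Z$ with $w_1 \in \anc_{\mathcal{H}_1}(Z)$, Lemma~\ref{lemma20} yields $v_k \rightarrow w_1 \in \mathcal{H}_2$.

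For the inductive step at $i \geq 1$, I assume $w_{i-1} \rightarrow w_i \in \mathcal{H}_2$ and split on whether $w_{i-1}$ and $w_{i+1}$ are adjacent in $\mathcal{H}_1$. In the non-adjacent case, Condition~\ref{condition} implies the same adjacency pattern in $\mathcal{H}_2$, and the triple $(w_{i-1}, w_i, w_{i+1})$ is an unshielded non-collider in $\mathcal{H}_1$ (via $\rightarrow w_i \rightarrow$), hence also in $\mathcal{H}_2$; since $w_{i-1} \rightarrow w_i$ contributes an arrowhead at $w_i$, the other edge must carry a tail at $w_i$, leaving only $w_i \rightarrow w_{i+1}$ or $w_i \tut w_{i+1}$, and the latter together with $w_{i-1} \rightarrow w_i$ would trigger Lemma~\ref{oldDef} and force $w_{i-1}$ adjacent to $w_{i+1}$ in $\mathcal{H}_2$, a contradiction. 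If instead $w_{i-1}$ and $w_{i+1}$ are adjacent in $\mathcal{H}_1$, then the ancestral property applied to $w_{i-1} \in \ant_{\mathcal{H}_1}(w_{i+1})$ restricts their edge in $\mathcal{H}_1$ to $w_{i-1} \rightarrow w_{i+1}$ or $w_{i-1} \tut w_{i+1}$; the former shortcuts $\mu$, contradicting its minimality, while the latter yields the triple $w_i \suh w_{i+1} \tut w_{i-1}$ in $\mathcal{H}_1$, and Lemma~\ref{oldDef} then forces $w_i \rightarrow w_{i-1}$ in $\mathcal{H}_1$, producing a directed $2$-cycle with $w_{i-1} \rightarrow w_i$, which is forbidden in a $\sigma$-MAG. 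Thus the adjacent subcase is impossible, and $w_i \rightarrow w_{i+1} \in \mathcal{H}_2$ as required.

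I expect the main obstacle to be the careful case analysis in the inductive step, in particular excluding both undirected-edge configurations by a joint use of Lemma~\ref{oldDef}, the ancestral property, and the minimality of $\mu$. Once these are handled, the induction delivers the full directed path $v_k \rightarrow w_1 \rightarrow \cdots \rightarrow q$ in $\mathcal{H}_2$, and hence $v_k \in \anc_{\mathcal{H}_2}(q)$.
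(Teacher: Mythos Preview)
Your proof is correct and takes a genuinely different route from the paper's argument.

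The paper does not attempt to show that $\mu$ remains fully directed in $\mathcal{H}_2$. Instead, it uses Lemma~\ref{lemma20} only for the first edge, then argues by contradiction that the corresponding path $\mu'$ in $\mathcal{H}_2$ cannot contain a pattern of the form $u_i \rightarrow u_{i+1} \tut \cdots \tut u_j \hus u_{j+1}$: such a pattern would (via Lemma~\ref{oldDef} in $\mathcal{H}_2$, Condition~\ref{condition}, and then a case analysis in $\mathcal{H}_1$) force an edge between $u_i$ and $u_{j+1}$ in $\mathcal{H}_1$ that either creates an (almost) directed cycle or shortcuts $\mu$. Hence $\mu'$ has the shape $u_0 \rightarrow u_1 \tus \cdots \tus u_m$, and Lemma~\ref{lemmaant} finishes the job. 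Your argument is more elementary: you first observe that on a \emph{shortest} directed path in a $\sigma$-MAG, consecutive-but-one nodes $w_{i-1},w_{i+1}$ can never be adjacent (the three possible edge types are killed respectively by ancestrality, by Lemma~\ref{oldDef} producing $w_i \rightarrow w_{i-1}$ against $w_{i-1}\rightarrow w_i$, and by minimality of $\mu$). This structural fact lets you run a clean induction using only preservation of unshielded non-colliders (Condition~\ref{condition}) together with $\sigma$-completeness to rule out $w_i \tut w_{i+1}$ in $\mathcal{H}_2$. You thereby obtain the stronger conclusion that $\mu$ is edge-for-edge directed in $\mathcal{H}_2$, bypassing Lemma~\ref{lemmaant} entirely. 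The paper's approach stays closer to the original Spirtes argument and tolerates undirected edges on $\mu'$; your approach is shorter and yields a sharper intermediate statement.
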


\begin{restatable}{lemma}{thmthreeb} \label{thm3b}
If two \( \sigma \)-MAGs \( \mathcal{H}_1, \mathcal{H}_2 \) with the same node set \( \mathcal{V} \) satisfy Condition~\ref{condition}, then they are $m$-Markov equivalent.
\end{restatable}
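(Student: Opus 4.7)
The plan is to prove the contrapositive: for any $X, Y, Z \subseteq \mathcal{V}$, if $X$ is not $m$-separated from $Y$ given $Z$ in $\mathcal{H}_1$, then the same holds in $\mathcal{H}_2$. Since Condition~\ref{condition} is symmetric in $\mathcal{H}_1$ and $\mathcal{H}_2$, establishing one direction suffices. By Proposition~\ref{walk eq path}, I may fix a $Z$-$m$-open path in $\mathcal{H}_1$ between some $a \in X$ and $b \in Y$, and I refine the choice by selecting $\pi$ to have the fewest edges and, among those, the smallest collider distance sum to $Z$; this double minimality is exactly the hypothesis of Lemmas~\ref{lemma20} and~\ref{lemma21}. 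Because $\mathcal{H}_1$ and $\mathcal{H}_2$ share adjacencies (Condition~\ref{condition}.1), the node sequence of $\pi$ traces out a corresponding path $\pi'$ in $\mathcal{H}_2$, so the task reduces to proving $\pi'$ is $Z$-$m$-open in $\mathcal{H}_2$.

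To verify $m$-openness of $\pi'$, I would check the three defining clauses of $m$-open. First, by Lemma~\ref{lemma16}, each non-collider on $\pi'$ is a non-collider on $\pi$, hence not in $Z$ by $m$-openness of $\pi$. Second, colliders on $\pi'$ coincide with colliders on $\pi$ (again Lemma~\ref{lemma16}), and Lemma~\ref{lemma21} states precisely that each such collider is an ancestor of $Z$ in $\mathcal{H}_2$. Third, for the structural clause forbidding subpaths of the form $v_{k-1} \suh v_k \tut v_{k+1}$ or $v_{k-1} \tut v_k \hus v_{k+1}$ on $\pi'$, I would argue by contradiction: such a pattern on $\pi'$ forces $v_{k-1}$ and $v_{k+1}$ to be adjacent in $\mathcal{H}_2$ via Lemma~\ref{oldDef}, hence in $\mathcal{H}_1$ as well. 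Applying Lemma~\ref{lemma7} to this chord of the shortest path $\pi$ in $\mathcal{H}_1$ forces a collider at $v_{k-1}$ or $v_{k+1}$ on $\pi$, which via Lemma~\ref{lemma16} also appears on $\pi'$; combined with the orientations propagated by Lemma~\ref{oldDef}, this should yield either a violation of the non-collider status of the middle node of the triple (preserved by Lemma~\ref{lemma16}) or a forbidden pattern on $\pi$ itself, contradicting the $m$-openness of $\pi$ in $\mathcal{H}_1$.

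The main obstacle I foresee is this third clause. Collider and non-collider status transfer cleanly via Lemma~\ref{lemma16}, but individual edge marks may shift between $\mathcal{H}_1$ and $\mathcal{H}_2$, so ruling out the forbidden patterns on $\pi'$ requires carefully combining $\sigma$-completeness (Lemma~\ref{oldDef}), the shortcut analysis of Lemma~\ref{lemma7}, and the shortest-path minimality of $\pi$; this is also where the essential difference between MAGs and $\sigma$-MAGs manifests, since in MAGs the patterns of clause (c) simply cannot occur. Once all three clauses hold, $\pi'$ is a $Z$-$m$-open path between $a$ and $b$ in $\mathcal{H}_2$, witnessing that $X$ is not $m$-separated from $Y$ given $Z$ in $\mathcal{H}_2$ and completing the proof.
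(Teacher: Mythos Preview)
Your overall strategy matches the paper's exactly: choose a shortest $Z$-$m$-open path $\pi$ in $\mathcal{H}_1$ with minimal collider distance sum, pass to the corresponding path $\pi'$ in $\mathcal{H}_2$, and verify the three $m$-open clauses via Lemmas~\ref{lemma16} and~\ref{lemma21}. The first two clauses go through as you describe.

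The gap is in the third clause. The contradiction you propose---``a violation of the non-collider status of the middle node'' or ``a forbidden pattern on $\pi$ itself''---does not materialize from Lemmas~\ref{oldDef} and~\ref{lemma7} alone. Suppose $\pi'$ contains $v_{k-1} \suh v_k \tut v_{k+1}$. Lemma~\ref{oldDef} gives adjacency of $v_{k-1}$ and $v_{k+1}$; since $v_{k+1}$ has a tail at the $v_k$ side on $\pi'$, it is a non-collider there and hence on $\pi$ by Lemma~\ref{lemma16}, which rules out case~2 of Lemma~\ref{lemma7} and forces $v_{k-1}$ to be a collider on $\pi$ with $v_{k-1} \to v_{k+1}$ in $\mathcal{H}_1$. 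But this is perfectly compatible with $\pi$ being $m$-open in $\mathcal{H}_1$ (for instance $v_{k-2} \suh v_{k-1} \huh v_k \tuh v_{k+1}$ on $\pi$), so neither of your two suggested contradictions actually fires.

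The paper closes the argument with two ingredients you omit. First, Proposition~\ref{lemma9} yields a discriminating subpath $\mu$ of $\pi$ for the covered node $v_k$. Second, Lemma~\ref{lemma15} (using that all intermediate colliders on $\mu$ remain colliders on $\mu'$ by Lemma~\ref{lemma16}) shows $\mu'$ is a discriminating path for $v_k$ in $\mathcal{H}_2$, which forces $v_{k-1} \to v_{k+1}$ in $\mathcal{H}_2$. But since $v_{k-1}$ is a collider on $\pi'$, the edge on $\pi'$ must be $v_{k-1} \huh v_k$, and Lemma~\ref{oldDef} applied to $v_{k-1} \huh v_k \tut v_{k+1}$ in $\mathcal{H}_2$ then gives $v_{k-1} \huh v_{k+1}$ in $\mathcal{H}_2$---contradiction. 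So the discriminating-path machinery is not optional here; it is what supplies the orientation constraint in $\mathcal{H}_2$ that clashes with the $\sigma$-completeness propagation.
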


Consequently, we obtain the following equivalence:

\begin{restatable}{theorem}{thmthree} \label{thm3}
Two $\sigma$-MAGs $\mathcal{H}_1,\mathcal{H}_2$  with the same nodes $\mathcal{V}$ are $m$-Markov equivalent if and only if $\mathcal{H}_1,\mathcal{H}_2$ satisfy Condition~\ref{condition}.
\end{restatable}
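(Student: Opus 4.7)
The plan is to derive Theorem~\ref{thm3} directly by concatenating the two directional results that have already been established: Lemma~\ref{thm3a} supplies the ``only if'' direction, and Lemma~\ref{thm3b} supplies the ``if'' direction. The proof body reduces to a one-line invocation of each lemma, since the substantive content has been absorbed into these statements and their supporting apparatus.

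For the forward direction, I would assume that $\mathcal{H}_1$ and $\mathcal{H}_2$ are $m$-Markov equivalent and apply Lemma~\ref{thm3a} to conclude that they satisfy Condition~\ref{condition}. Conceptually, this is the routine direction: the same adjacencies follow from the impossibility of $m$-separation between adjacent nodes by any subset of $\mathcal{V}$ (Proposition~\ref{inducing property one} makes this precise on each side); the same unshielded colliders follow from the classical argument that a triple $(a,b,c)$ with $a$ and $c$ non-adjacent is an unshielded collider precisely when every separating set for $a$ and $c$ excludes $b$; and the agreement on discriminating-path colliders follows from Lemma~\ref{lemma6}, which pins down the collider/non-collider status of $b$ on a discriminating path $(a,v_0,\ldots,v_n,b,c)$ via the membership of $b$ in every $m$-separating set for $a$ and $c$.

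For the reverse direction, I would assume Condition~\ref{condition} and invoke Lemma~\ref{thm3b} to conclude $m$-Markov equivalence. This is the substantive direction, but the hard work is already packaged inside Lemma~\ref{thm3b}: one takes a putative $m$-open path in $\mathcal{H}_1$ given some $Z$, reduces to a shortest $m$-open path with minimum collider distance sum to $Z$, lifts it to the corresponding path $\pi'$ in $\mathcal{H}_2$ using the shared adjacencies, and then shows $\pi'$ is $m$-open via Lemma~\ref{lemma16} (collider/non-collider status is preserved) together with Lemmas~\ref{lemma20}--\ref{lemma21} (ancestral relations from the colliders to $Z$ are preserved). By symmetry of Condition~\ref{condition} in $\mathcal{H}_1$ and $\mathcal{H}_2$, the same argument works in the reverse direction, yielding the biconditional ``$X \overset{m}{\underset{\mathcal{H}_1}{\perp}} Y \mid Z \iff X \overset{m}{\underset{\mathcal{H}_2}{\perp}} Y \mid Z$'' for all $X,Y,Z \subseteq \mathcal{V}$.

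From the perspective of Theorem~\ref{thm3}, there is no further obstacle: the theorem is just the packaging of Lemmas~\ref{thm3a} and~\ref{thm3b}. The genuine obstacle lies in Lemma~\ref{thm3b}, where the notion of order of covered nodes (Definition~\ref{order of covered node}) has to be shown well-founded (via Lemmas~\ref{lemma10}--\ref{lemma13}) before the inductive preservation of collider structure across the two graphs (Lemmas~\ref{lemma15}--\ref{lemma16}, \ref{lemma19}) can be carried through, and in Lemmas~\ref{lemma20}--\ref{lemma21}, where one has to simultaneously control path length and collider distance sum to $Z$ in order to preserve ancestral relations. But since all of this is already done upstream, the proof of the theorem itself is essentially a two-sentence citation.
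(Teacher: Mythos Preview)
Your proposal is correct and matches the paper's own proof exactly: the paper proves Theorem~\ref{thm3} in one line by citing Lemma~\ref{thm3a} for the ``only if'' direction and Lemma~\ref{thm3b} for the ``if'' direction. Your additional commentary on the conceptual content of those lemmas is accurate but not needed for the proof itself.
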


Moreover, by combining it with the equivalence derived in Section~\ref{section:2}, we also obtain the $\sigma$-Markov equivalence under Condition~\ref{condition}.

\begin{restatable}{theorem}{thmfour} \label{thm4}
Let $\mathcal{G}_1,\mathcal{G}_2$ be two DMGs with the same nodes $\mathcal{V}^+=\mathcal{V}\cup \mathcal{S}$, and let $\mathcal{H}_1,\mathcal{H}_2$ be two $\sigma$-MAGs that represents $\mathcal{G}_1,\mathcal{G}_2$ given $\mathcal{S}$ respectively. $\mathcal{G}_1,\mathcal{G}_2$ are $\sigma$-Markov equivalent given $\mathcal{S}$ if and only if $\mathcal{H}_1,\mathcal{H}_2$ satisfy Condition~\ref{condition}.
\end{restatable}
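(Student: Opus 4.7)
The plan is to derive Theorem~\ref{thm4} as a direct corollary of Theorem~\ref{thm2} and Theorem~\ref{thm3} by chaining their equivalences. Theorem~\ref{thm2} translates $\sigma$-separation statements in a DMG $\mathcal{G}$ (with conditioning set of the form $Z\cup\mathcal{S}$) into $m$-separation statements in any $\sigma$-MAG $\mathcal{H}$ representing $\mathcal{G}$ given $\mathcal{S}$ (with conditioning set $Z$), while Theorem~\ref{thm3} characterizes $m$-Markov equivalence of two $\sigma$-MAGs by Condition~\ref{condition}. Read together, these should identify $\sigma$-Markov equivalence of $\mathcal{G}_1,\mathcal{G}_2$ given $\mathcal{S}$ with $m$-Markov equivalence of $\mathcal{H}_1,\mathcal{H}_2$, and hence with Condition~\ref{condition}.

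For the forward direction, suppose $\mathcal{G}_1$ and $\mathcal{G}_2$ are $\sigma$-Markov equivalent given $\mathcal{S}$. Fix arbitrary $X,Y,Z\subseteq\mathcal{V}$. By Theorem~\ref{thm2} applied to $(\mathcal{H}_1,\mathcal{G}_1)$, the statement $X\overset{m}{\underset{\mathcal{H}_1}{\perp}} Y\mid Z$ is equivalent to $X\overset{\sigma}{\underset{\mathcal{G}_1}{\perp}} Y\mid Z\cup\mathcal{S}$. By the assumed $\sigma$-Markov equivalence this is equivalent to $X\overset{\sigma}{\underset{\mathcal{G}_2}{\perp}} Y\mid Z\cup\mathcal{S}$, and a second application of Theorem~\ref{thm2}, now to $(\mathcal{H}_2,\mathcal{G}_2)$, yields $X\overset{m}{\underset{\mathcal{H}_2}{\perp}} Y\mid Z$. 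Hence $\mathcal{H}_1$ and $\mathcal{H}_2$ are $m$-Markov equivalent, and Theorem~\ref{thm3} delivers Condition~\ref{condition}. The backward direction runs exactly the same chain of equivalences in reverse: Condition~\ref{condition} gives $m$-Markov equivalence of $\mathcal{H}_1,\mathcal{H}_2$ via Theorem~\ref{thm3}, and two applications of Theorem~\ref{thm2} transport this to agreement of all $\sigma$-separation statements between subsets of $\mathcal{V}$ conditional on sets of the form $Z\cup\mathcal{S}$ in $\mathcal{G}_1$ and $\mathcal{G}_2$.

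The substantive graphical content of this theorem has already been absorbed into Theorems~\ref{thm2} and~\ref{thm3}, so no new path-surgery arguments are required. The only point I expect to require care, and therefore the closest thing to an obstacle, is reconciling the precise meaning of ``$\sigma$-Markov equivalent given $\mathcal{S}$'' with the quantification used in Theorem~\ref{thm2}. The natural reading, consistent with how selection-conditioned Markov equivalence is treated for DAGs and MAGs, is that $\mathcal{G}_1$ and $\mathcal{G}_2$ are called $\sigma$-Markov equivalent given $\mathcal{S}$ precisely when the $\sigma$-separation statements $X\overset{\sigma}{\perp} Y\mid Z\cup\mathcal{S}$ with $X,Y,Z\subseteq\mathcal{V}$ coincide in the two graphs; under this reading, the two-step reduction above completes the proof in both directions.
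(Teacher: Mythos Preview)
Your proof is correct and takes exactly the same approach as the paper, which simply states that the result is ``obviously obtained by Theorem~\ref{thm2} and Theorem~\ref{thm3}.'' Your concern about the meaning of ``$\sigma$-Markov equivalent given $\mathcal{S}$'' is resolved by Definition~\ref{sigmamarkov} in the paper, which defines it precisely as agreement of $\sigma$-separation statements $X\overset{\sigma}{\perp} Y\mid Z\cup\mathcal{S}$ for all $X,Y,Z\subseteq\mathcal{V}$, matching your assumed reading.
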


\section{Discussion and Conclusion}

It is remarkable that with the same edge types as MAGs, $\sigma$-MAGs can account for both cycles and selection bias (via undirected edges) in such a way that the key properties of MAGs generalize in a predictable way to those of $\sigma$-MAGs, even though the undirected edges do not in general distinguish between cycles and selection bias (except in case of the presence of an edge into a clique of undirected edges).

The theory we developed here provides the foundations for possible future work on developing sound and complete extensions of FCI applicable to data generated by simple (possibly cyclic) SCMs in the presence of selection bias. Furthermore, it may prove an important step towards developing a do-calculus for the output of FCI in that general setting.

\begin{acknowledgements}
The presentation of this paper at the conference was financially supported by the Amsterdam ELLIS Unit.
We thank Tom Claassen for discussions that benefited this work.
\end{acknowledgements}

\bibliography{references}

\newpage

\onecolumn

\title{Supplementary Material to ``$\sigma$-Maximal Ancestral Graphs''}
\maketitle

\appendix

\section{Preliminaries}
\subsection{Directed Mixed Graphs (DMGs)}
\begin{definition} [Directed Mixed Graphs (DMGs)]
A \emph{Directed Mixed Graph (DMG)} is a graph $\mathcal{G}=({\mathcal{V}},\mathcal{E},\mathcal{L})$ with nodes ${\mathcal{V}}$ and two types of edges: \emph{directed edges} $\mathcal{E}\subseteq \{(v_i,v_j):v_i,v_j\in{\mathcal{V}},v_i\neq v_j\}$, and bidirected edges $\mathcal{L}\subseteq\{(v_i,v_j):v_i,v_j\in{\mathcal{V}},v_i\neq v_j\}$.
\end{definition}

We denote a directed edge \( (v_i, v_j) \in \mathcal{E} \) as \( v_i \rightarrow v_j \) or equivalently \( v_j \leftarrow v_i \). Self-loops of the form \( v_i \rightarrow v_i \) are not allowed, but multiple edges (with at most one of each type) between any pair of distinct nodes are permitted. Similarly, we denote a bidirected edge \( (v_i, v_j) \in \mathcal{L} \) as \( v_i \huh  v_j \) or equivalently \( v_j \huh  v_i \).  

Additionally, we use a star edge mark as a placeholder indicating either an arrowhead or a tail. Specifically,
\begin{enumerate}
\item \( v_i \suh v_j \) represents \( v_i \tuh v_j \) or \( v_i \huh  v_j \),
\item \( v_i \hus  v_j \) represents \( v_i \hut v_j \) or \( v_i \huh  v_j \),
\item \( v_i \sus  v_j \) represents \( v_i \tuh v_j \), \( v_i \leftarrow v_j \), or \( v_i \huh  v_j \).  
\end{enumerate}

\begin{definition}  
Let \( \mathcal{G} = (\mathcal{V}, \mathcal{E}, \mathcal{L}) \) be a DMG.  
\begin{enumerate}  
    \item If \( v_i \sus  v_j \in \mathcal{E} \cup \mathcal{L} \), then \( v_i \) and \( v_j \) are said to be \emph{adjacent} in \( \mathcal{G} \).  
    \item An edge of the form \( v_i \hus  v_j \) is said to be \emph{into} \( v_i \).  
    \item An edge of the form \( v_i \rightarrow v_j \) is said to be \emph{out of} \( v_i \).  
\end{enumerate}  
\end{definition}  

\begin{definition} [Walks]  
Let \( \mathcal{G} = (\mathcal{V},\mathcal{E},\mathcal{L}) \) be a DMG, and let \( v_0, v_n \in \mathcal{V} \).  
\begin{enumerate}  
    \item A \emph{walk} from \( v_0 \) to \( v_n \) in \( \mathcal{G} \) is a finite alternating sequence of adjacent nodes and edges,  
    \[
    \langle v_0, a_1, v_1, \ldots, v_{n-1}, a_n, v_n \rangle,
    \]  
    for some \( n \geq 0 \), such that for every \( k = 1, \ldots, n \), we have \( a_k = (v_{k-1}, v_k) \in \mathcal{E} \cup \mathcal{L} \). The \emph{trivial walk} consisting of a single node \( v_0 \) is also allowed. The walk is called \emph{into} \( v_0 \) if \( a_1 = v_0 \hus  v_1 \), and \emph{out of} \( v_0 \) if \( a_1 = v_0 \rightarrow v_1 \). Similarly, it is called \emph{into} \( v_n \) if \( a_n = v_{n-1} \suh v_n \) and \emph{out of} \( v_n \) if \( a_n = v_{n-1} \leftarrow v_n \).  

    \item A \emph{directed walk} from \( v_0 \) to \( v_n \) in \( \mathcal{G} \) is of the form  
    \[
    v_0 \rightarrow v_1 \rightarrow \cdots \rightarrow v_{n-1} \rightarrow v_n,
    \]  
    for some \( n \geq 0 \), where all arrowheads point toward \( v_n \), and no arrowheads point backward.  

    \item A walk is called a \emph{path} if no node appears more than once.  
\end{enumerate}  
\end{definition}  

\begin{definition} [Family Relationships] \label{family relationships}
Let \( \mathcal{G} = (\mathcal{V},\mathcal{E},\mathcal{L}) \) be a DMG, and let \( v \in \mathcal{V} \) and \( A \subseteq \mathcal{V} \) be a subset of nodes. We define:  
\begin{enumerate}  
    \item The set of \emph{parents} of \( \mathcal{V} \) in \( \mathcal{G} \):  
    \[
    \text{Pa}_\mathcal{G}(v) := \{ w \in \mathcal{V} \mid w \rightarrow v \in \mathcal{E} \}.
    \]
    The set of \emph{parents} of \( A \) in \( \mathcal{G} \):  
    \[
    \text{Pa}_\mathcal{G}(A) := \bigcup_{v \in A} \text{Pa}_\mathcal{G}(v).
    \]
    
    \item The set of \emph{children} of \( \mathcal{V} \) in \( \mathcal{G} \):  
    \[
    \text{Ch}_\mathcal{G}(v) := \{ w \in \mathcal{V} \mid v \rightarrow w \in \mathcal{E} \}.
    \]
    The set of \emph{children} of \( A \) in \( \mathcal{G} \):  
    \[
    \text{Ch}_\mathcal{G}(A) := \bigcup_{v \in A} \text{Ch}_\mathcal{G}(v).
    \]

    \item The set of \emph{siblings} of \( \mathcal{V} \) in \( \mathcal{G} \):  
    \[
    \text{Sib}_\mathcal{G}(v) := \{ w \in \mathcal{V} \mid v \huh  w \in \mathcal{L} \}.
    \]
    The set of \emph{siblings} of \( A \) in \( \mathcal{G} \):  
    \[
    \text{Sib}_\mathcal{G}(A) := \bigcup_{v \in A} \text{Sib}_\mathcal{G}(v).
    \]

    \item The set of \emph{ancestors} of \( \mathcal{V} \) in \( \mathcal{G} \):  
    \[
    \anc_\mathcal{G}(v) := \{ w \in \mathcal{V} \mid \exists \text{ a directed walk } w \rightarrow \cdots \rightarrow v \text{ in } \mathcal{G} \}.
    \]
    The set of \emph{ancestors} of \( A \) in \( \mathcal{G} \):  
    \[
    \anc_\mathcal{G}(A) := \bigcup_{v \in A} \anc_\mathcal{G}(v).
    \]

    \item The set of \emph{descendants} of \( \mathcal{V} \) in \( \mathcal{G} \):  
    \[
    \text{Desc}_\mathcal{G}(v) := \{ w \in \mathcal{V} \mid \exists \text{ a directed walk } v \rightarrow \cdots \rightarrow w \text{ in } \mathcal{G} \}.
    \]
    The set of \emph{descendants} of \( A \) in \( \mathcal{G} \):  
    \[
    \text{Desc}_\mathcal{G}(A) := \bigcup_{v \in A} \text{Desc}_\mathcal{G}(v).
    \]

    \item The \emph{strongly connected component} of \( \mathcal{V} \) in \( \mathcal{G} \):  
    \[
    \text{Sc}_\mathcal{G}(v) := \anc_\mathcal{G}(v) \cap \text{Desc}_\mathcal{G}(v).
    \]
    The \emph{strongly connected component} of \( A \) in \( \mathcal{G} \):  
    \[
    \text{Sc}_\mathcal{G}(A) := \bigcup_{v \in A} \text{Sc}_\mathcal{G}(v).
    \]
\end{enumerate}  
\end{definition}

\begin{definition}[Directed Graphs (DGs)]
A \emph{Directed Graph (DG)} is a DMG that contains no bidirected edges, i.e., $\mathcal{L} = \emptyset$.
\end{definition}

A DMG can be regarded as the result of marginalizing latent variables in a DG, as formalized below:

\begin{definition}[Marginalization on DMGs \citep{BFPM21}]\label{margilization}
Let $\mathcal{G} = (\mathcal{V}, \mathcal{E}, \mathcal{L})$ be a DMG and let $\mathcal{W} \subseteq \mathcal{V}$ be a subset of nodes. The \emph{marginalization} of $\mathcal{G}$ with respect to $\mathcal{W}$ is the DMG
\[
\mathcal{G}^{\mathcal{V} \setminus \mathcal{W}} := \mathcal{G}^{\setminus \mathcal{W}} := (\mathcal{V}^{\setminus \mathcal{W}}, \mathcal{E}^{\setminus \mathcal{W}}, \mathcal{L}^{\setminus \mathcal{W}}),
\]
where:
\begin{enumerate}
    \item $\mathcal{V}^{\setminus \mathcal{W}} = \mathcal{V} \setminus \mathcal{W}$;
    \item $\mathcal{E}^{\setminus \mathcal{W}}$ consists of all directed edges $a \tuh b$ with $a, b \in \mathcal{V} \setminus \mathcal{W}$ for which there exists a directed walk in $\mathcal{G}$ of the form
    \[
    a \tuh w_0 \tuh \cdots \tuh w_n \tuh b,
    \]
    where all intermediate nodes $w_0, \ldots, w_n \in \mathcal{W}$ (if any);
    \item $\mathcal{L}^{\setminus \mathcal{W}}$ consists of all bidirected edges $a \huh b$ with $a, b \in \mathcal{V} \setminus \mathcal{W}$, $a \neq b$, for which there exists a bifurcation in $\mathcal{G}$ of the form
    \[
    a \hut w_0 \hut \cdots \hut w_{k-1} \hus w_k \tuh \cdots \tuh w_n \tuh b,
    \]
    where all intermediate nodes $w_0, \ldots, w_n \in \mathcal{W}$ (if any).
\end{enumerate}
\end{definition}

\subsection{\texorpdfstring{$\sigma$-separation}{sigma-Separation}}
\begin{definition}[Colliders and Non-Colliders]  
Let \( \mathcal{G} = (\mathcal{V},\mathcal{E},\mathcal{L}) \) be a DMG, and let \( \pi \) be a walk between \( v_0 \) and \( v_n \) in \( \mathcal{G} \). A node \( v_k \) on \( \pi \) is classified as follows:  
\begin{enumerate}  
    \item A \emph{non-collider} on \( \pi \) if there is at most one arrowhead pointing towards \( v_k \), i.e., if \( v_k \) is an endpoint of \( \pi \) or of the form  
    \[
    v_{k-1} \sus  v_k \rightarrow v_{k+1} \quad \text{or} \quad v_{k-1} \leftarrow v_k \sus  v_{k+1}.
    \]
    
    \item A \emph{collider} on \( \pi \) if there are two arrowheads pointing towards \( v_k \), i.e., if \( v_k \) is of the form  
    \[
    v_{k-1} \suh v_k \hus  v_{k+1}.
    \]
\end{enumerate}  
\end{definition}

\begin{definition}[Blockable and Unblockable Non-Colliders]  
Let \( \mathcal{G} = (\mathcal{V},\mathcal{E},\mathcal{L}) \) be a DMG, and let \( \pi \) be a walk between \( v_0 \) and \( v_n \) in \( \mathcal{G} \). A non-collider \( v_k \) on \( \pi \) is called an \emph{unblockable non-collider} if it is not an endpoint and has only outgoing edges on \( \pi \) leading to nodes within the same strongly connected component of \( \mathcal{G} \). Specifically, \( v_k \) follows one of the patterns:
\begin{align*}  
    v_{k-1} \leftarrow v_k \hus  v_{k+1}, & \quad \text{where } v_{k-1} \in \text{Sc}_\mathcal{G}(v_k), \\  
    v_{k-1} \suh v_k \rightarrow v_{k+1}, & \quad \text{where } v_{k+1} \in \text{Sc}_\mathcal{G}(v_k), \\  
    v_{k-1} \leftarrow v_k \rightarrow v_{k+1}, & \quad \text{where } v_{k-1}, v_{k+1} \in \text{Sc}_\mathcal{G}(v_k).
\end{align*}  

Otherwise, \( v_k \) is called a \emph{blockable non-collider} on \( \pi \), meaning that either:
\begin{itemize}  
    \item \( v_k \) is an endpoint, or  
    \item \( v_k \) has at least one outgoing edge \( v_k \rightarrow v_{k\pm1} \) where \( v_{k\pm1} \) lies in a different strongly connected component than \( v_k \), i.e., \( v_{k\pm1} \notin \text{Sc}_\mathcal{G}(v_k) \).
\end{itemize}  
\end{definition}

\begin{definition}[$\sigma$-blocked Walks (Node Version)]  
Let \( \mathcal{G} = (\mathcal{V},\mathcal{E},\mathcal{L}) \) be a DMG, and let \( Z \subseteq \mathcal{V} \) be a subset of nodes. Consider a walk \( \pi \) between \( v_0 \) and \( v_n \) in \( \mathcal{G} \). We define:  
\begin{enumerate}  
    \item The walk \( \pi \) is \emph{\( Z \)-\( \sigma \)-open} (or \emph{\( \sigma \)-open given \( Z \)}) if and only if:  
    \begin{enumerate}  
        \item All colliders \( v_k \) on \( \pi \) belong to \( \anc_\mathcal{G}(Z) \), and  
        \item All blockable non-colliders \( v_k \) on \( \pi \) do not belong to \( Z \).  
    \end{enumerate}  
    \item The walk \( \pi \) is \emph{\( Z \)-\( \sigma \)-blocked} (or \emph{\( \sigma \)-blocked by \( Z \)}) if and only if:  
    \begin{enumerate}  
        \item There exists a collider \( v_k \) on \( \pi \) that is not in \( \anc_\mathcal{G}(Z) \), or
        \item There exists a blockable non-collider \( v_k \) on \( \pi \) that belongs to \( Z \).  
    \end{enumerate}  
\end{enumerate}  
\end{definition}

Instead of presenting the well-established concept of $d$-separation, originally introduced by \citet{pearl1985constraint}, we provide its non-trivial generalization, referred to as $\sigma$-separation.

\begin{definition}[$\sigma$-separation (Node Version) \citep{forre2017markov}] \label{sigma separation}
Let \( \mathcal{G} = (\mathcal{V},\mathcal{E},\mathcal{L}) \) be a DMG, and let \( X, Y, Z \subseteq \mathcal{V} \) be subsets of nodes. We define:  
\begin{enumerate}  
    \item \( X \) is \emph{\(\sigma\)-separated from \( Y \) given \( Z \) in \( \mathcal{G} \)}, denoted as  
    \[
    X\overset{\sigma}{\underset{\mathcal{G}}{\perp}}Y \mid Z,
    \]  
    if every walk from a node in \( X \) to a node in \( Y \) is \( Z \)-\( \sigma \)-blocked.  
    \item If this condition does not hold, we write:  
    \[
    X\not\overset{\sigma}{\underset{\mathcal{G}}{\perp}}Y \mid Z.
    \]  
\end{enumerate}  
\end{definition}

\begin{definition}[$\sigma$-separation (Segment Version) \citep{forre2017markov}] \label{segmentsigma}
Let $\mathcal{G}=(\mathcal{V},\mathcal{E},\mathcal{L})$ be a DMG and $X,Y,Z\subseteq \mathcal{V}$ subsets of the nodes.
\begin{enumerate}
    \item Consider a path in $\mathcal{G}$ with $n\geq1$ nodes:
    $$
    v_0\sus\cdots\sus v_n.
    $$
    Then this path can be uniquely partitioned according to the strongly connected components of $\mathcal{G}$:
    $$
    v_{i-1}\sus v_i\sus v_{i+1}\sus\cdots\sus v_{j-1}\sus v_j\sus v_{j+1},
    $$
    with $v_i,\ldots, v_j\in \text{Sc}_\mathcal{G}(v_i)$ and $v_{i-1},v_{j+1}\notin \text{Sc}_\mathcal{G}(v_i)$. Note that $v_{i-1}$ or $v_{j+1}$ might not appear if $v_i$ or $v_j$ is an endpoint of the path. We will call the subpath $\sigma_k$ (given by the nodes $v_i,\ldots,v_j$ and its corresponding edges) a segment of the path. We abbreviate the left an right endpoint of $\sigma_k$ with $\sigma_{k,l}=v_i$ and $\sigma_{k,r}=v_j$. The path can then uniquely be written with its segments:
    $$
    \sigma_1\sus\cdots\sus\sigma_m.
    $$
    We will call $\sigma_1$ and $\sigma_m$ the end-segments of the path.
    \item Such a path will be called \emph{Z-$\sigma$-blocked} or \emph{$\sigma$-blocked by $Z$} if:
    \begin{enumerate}
        \item at least one of the endpoints $v_0=\sigma_{1,l}$, $v_n=\sigma_{m,r}$ is in $Z$, or 
        \item there is a segment $\sigma_k$ with an outgoing directed edge in the path and its corresponding endpoint lies in $Z$, or
        \item there is a segment $\sigma_k$ with two adjacent edges that form a collider $\suh\sigma_k\hus$ and $\text{Sc}_\mathcal{G}(\sigma_k)\cap\anc_\mathcal{G}(Z)=\emptyset$.
    \end{enumerate}
    If none of the above holds then the path is called \emph{Z-$\sigma$-open} or \emph{$\sigma$-open given $Z$}.
    \item We say that $X$ is $\sigma$-separated from $Y$ given $Z$ if every path in $\mathcal{G}$ with one endpoint in $X$ and one endpoint in $Y$ is $\sigma$-blocked by $Z$. In symbols this will be written as follows:
    $$
    X\overset{\sigma}{\underset{\mathcal{G}}{\perp}} Y\mid Z.
    $$
\end{enumerate}
\end{definition}

\begin{lemma} \label{lemma3.3.10}
Let $\mathcal{G}=(\mathcal{V},\mathcal{E},\mathcal{L})$ be a DMG, $Z\subseteq \mathcal{V}$, and $\pi$ is a $Z$-$\sigma$-open walk between $v_0$ and $v_n$ in $\mathcal{G}$. Suppose $v_i\in\text{Sc}_\mathcal{G}(v_j)$ for some $i,j\in\{0,\ldots,n\}$ with $i<j$. If we then replace the subwalk $v_i\sus\cdots\sus v_j$ of $\pi$ by
\begin{enumerate}
    \item a shortest directed path $v_i\tuh\cdots\tuh v_j$ in $\mathcal{G}$ if $j=n$ or if $v_j\tuh v_{j+1}$ on $\pi$, or
    \item a shortest directed path $v_i\hut\cdots\hut v_j$ in $\mathcal{G}$ otherwise,
\end{enumerate}
then this new subwalk is entirely within $\text{Sc}_\mathcal{G}(v_j)$ and the modified walk $\pi'$ is still $Z$-$\sigma$-open.
\end{lemma}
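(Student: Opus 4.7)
The approach is to verify the two conclusions separately using the node-version of $\sigma$-openness, since only local structure at each node matters.

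For the containment claim, any intermediate node $w$ on a shortest directed path between $v_i$ and $v_j$ lies in $\text{Desc}_\mathcal{G}(v_i)\cap\anc_\mathcal{G}(v_j)$, and the SCC hypothesis $v_i\in\text{Sc}_\mathcal{G}(v_j)$ upgrades $\text{Desc}_\mathcal{G}(v_i)$ to $\text{Desc}_\mathcal{G}(v_j)$, forcing $w\in\anc_\mathcal{G}(v_j)\cap\text{Desc}_\mathcal{G}(v_j)=\text{Sc}_\mathcal{G}(v_j)$. The same reasoning covers the reversed path in case (2), so the replacement subwalk together with $v_i, v_j$ lies entirely in $\text{Sc}_\mathcal{G}(v_j)$.

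For $\sigma$-openness, the only nodes of $\pi'$ whose local configuration differs from $\pi$ are the newly inserted $w_1,\ldots,w_m$ and the two endpoints $v_i, v_j$ of the replaced segment. Each $w_k$ appears with pattern $\to w_k\to$ (case (1)) or $\leftarrow w_k\leftarrow$ (case (2)), i.e., a non-collider whose outgoing edge lands inside $\text{Sc}_\mathcal{G}(w_k)$, hence an unblockable non-collider. For $v_i$ and $v_j$ I would case-split on which replacement is used and on the orientations of the unchanged side-edges $v_{i-1}\sus v_i$ and $v_j\sus v_{j+1}$ (with the trivial convention that endpoints $v_0, v_n$ require no checking). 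In case (1), $v_j$ is either $v_n$ or has new pattern $w_m\to v_j\to v_{j+1}$, whose unblockability condition $v_{j+1}\in\text{Sc}_\mathcal{G}(v_j)$ is at most as stringent as the one already required of the non-collider $v_j$ on $\pi$; and $v_i$ gains an outgoing edge into $\text{Sc}_\mathcal{G}(v_i)$, so its unblockability condition on $\pi'$ is likewise at most as stringent as on $\pi$, with an original collider at $v_i$ automatically becoming an unblockable non-collider. Case (2) is symmetric at $v_j$, which becomes the unblockable non-collider $w_m\leftarrow v_j\hus v_{j+1}$, and the analysis at $v_i$ parallels case (1) whenever $v_{i-1}\hut v_i$ on $\pi$ or $v_i$ was already a collider on $\pi$.

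The delicate subcase is case (2) with the original orientation $v_{i-1}\suh v_i\to v_{i+1}$ on $\pi$: then $v_i$ is a non-collider on $\pi$ but becomes a collider on $\pi'$, so I must show $v_i\in\anc_\mathcal{G}(Z)$. I would trace $\pi$ forward from $v_i$: at each $v_k$ (for $i\le k<j$) there is an arrowhead pointing at $v_k$ from $v_{k-1}$ on $\pi$, and as long as $v_k$ remains a non-collider on $\pi$, the absence of undirected edges in a DMG forces the outgoing edge to be $v_k\to v_{k+1}$ on $\pi$. Either some intermediate $v_k$ is a collider on $\pi$, in which case $v_k\in\anc_\mathcal{G}(Z)$ by $\sigma$-openness of $\pi$; or the chain of directed edges reaches $v_{j-1}\to v_j$, and then the case-(2) hypothesis $v_j\hus v_{j+1}$ makes $v_j$ itself a collider on $\pi$, so $v_j\in\anc_\mathcal{G}(Z)$. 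In both outcomes the directed chain $v_i\to v_{i+1}\to\cdots\to v_k$ in $\mathcal{G}$ witnesses $v_i\in\anc_\mathcal{G}(Z)$. I expect the main obstacle to be the bookkeeping of the many orientation and SCC-membership subcases at $v_i$ and $v_j$; the only step where genuinely new structural content is needed is the chain argument just described, which is also the only place that essentially combines the SCC hypothesis with the $\sigma$-openness of $\pi$.
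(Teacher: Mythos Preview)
Your proposal is correct and follows essentially the same approach as the paper's proof: both reduce the problem to checking the local configurations at $v_i$ and $v_j$ (the intermediate replacement nodes being trivially unblockable), split into the two replacement directions, and handle the one genuinely nontrivial subcase---where $v_i$ turns from a non-collider into a collider in case~(2)---by tracing the directed chain $v_i\to v_{i+1}\to\cdots$ along $\pi$ until the first collider is reached, which must lie in $\anc_\mathcal{G}(Z)$. Your organization via ``the unblockability condition is at most as stringent'' is a slightly cleaner bookkeeping device than the paper's more ad~hoc case enumeration, but the substance is identical.
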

\begin{proof}
$\pi^{\prime}$ cannot become $Z$-$\sigma$-blocked at one of the initial nodes $v_{0},\ldots,v_{i-1}$ or at one of the final nodes $v_{j+1},\ldots,v_{n}$ on $\pi^{\prime}$, since these nodes occur in the same local configuration on $\pi$ and are not $Z$-$\sigma$-blocked on $\pi$ by assumption. Furthermore, $\pi^{\prime}$ cannot become $Z$-$\sigma$-blocked at one of the nodes strictly between $v_{i}$ and $v_{j}$ on $\pi^{\prime}$ (if there are any), since these nodes are all non-endnode non-colliders that only point to nodes in the same strongly connected component $\mathrm{Sc}_\mathcal{G}(v_{j})$. It is also worth noting that $\pi^{\prime}$ cannot become $Z$-$\sigma$-blocked at any of its endnodes, which could be $v_{i}$ or $v_{j}$ or both, because those are the same in $\pi$. So in the following we can w.l.o.g. assume that both $v_{i}$ and $v_{j}$ are non-endnodes of $\pi$ and thus $\pi^{\prime}$.

Case 1: By assumption $v_{j}$ is in the subwalk $v_{j-1}\sus v_j\tuh v_{j+1}$ (or the right endnode) on $\pi$ that is $Z$-$\sigma$-open. Since the same blocking criteria apply to $v_{j}$ on $\pi^{\prime}$ it remains $Z$-$\sigma$-open on $\pi^{\prime}$. If $v_{i}=v_{j}$ then also $v_{i}$ is $Z$-$\sigma$-open on $\pi^{\prime}$ (if $v_{i}$ is the left endnode or not). If $v_{i}\neq v_{j}$, then the new directed path $v_{i}\tuh\cdots\tuh v_{j}$ in $\pi^{\prime}$ is $Z$-$\sigma$-open at $v_{i}$ because all nodes in between lie in the same strongly connected component $\mathrm{Sc}_\mathcal{G}(v_{i})$ (or $v_{i}$ is the left endnode anyways).

Case 2: Since case 1 is solved we can assume that we have $j<n$ with $v_{j}\hus v_{j+1}$ on $\pi$. If $v_{i-1}\hus v_{i}$ on $\pi^{\prime}$ (or $v_{i}$ the left endnode) then this case is analogous to case 1. So we can also assume that we have $i>0$ and $v_{i-1}\suh v_{i}$ on $\pi$. So $\pi$ looks as follows:
\[
\pi:\qquad\cdots v_{i-1}\suh v_{i}\sus\cdots \sus v_{j}\hus v_{j+1}\cdots.
\]
So there must be a smallest number $k\in\{i,\ldots,j\}$ such that a collider appears at $v_{k}$ on $\pi$:
\[
\pi:\qquad\cdots v_{i-1}\suh v_{i}\tuh\cdots \tuh v_{k}\hus\cdots\sus v_{j}\hus v_{j+1}\cdots.
\]
Since $\pi$ is $Z$-$\sigma$-open we have $v_{k}\in\mathrm{Anc}_\mathcal{G}(Z)$. Since $v_{i}\in\mathrm{Anc}_\mathcal{G}(v_{k})$ (otherwise $v_{k}$ would not be the first collider appearing after $v_{i}$) we thus have that also $v_{i}\in\mathrm{Anc}_\mathcal{G}(Z)$. So if we replace the subwalk $v_{i}\sus\cdots\sus v_{j}$ of $\pi$ by the shortest directed path $v_{i}\hut\cdots\hut v_{j}$ in $\mathcal{G}$ we then get for $\pi^{\prime}$ the following situation:
\[
\pi^{\prime}:\qquad\cdots v_{i-1}\suh v_{i}\hut\cdots \hut v_{j}\hus v_{j+1}\cdots,
\]
which is then $Z$-$\sigma$-open at $v_{i}$ as $v_{i}\in\mathrm{Anc}_\mathcal{G}(Z)$. Note that this holds also when $v_{i}=v_{j}$. If $v_{i}\neq v_{j}$ then $v_{j}$ is also $Z$-$\sigma$-open on $\pi^{\prime}$ as $v_{j}$ points left to a node in the same strongly connected component as $v_{j}$.

So in all cases $\pi^{\prime}$ stays $Z$-$\sigma$-open.
\end{proof}

\begin{proposition} \label{proposition3.3.6}
Let $\mathcal{G}=(\mathcal{V},\mathcal{E},\mathcal{L})$ be a DMG. For $Z\subseteq \mathcal{V}$, and $v_i,v_j\in\mathcal{V}$, the following are equivalent:
\begin{enumerate}
    \item there exists a $Z$-$\sigma$-open path between $v_i$ and $v_j$ in $\mathcal{G}$;
    \item there exists a $Z$-$\sigma$-open walk between $v_i$ and $v_j$ in $\mathcal{G}$.
\end{enumerate}
\end{proposition}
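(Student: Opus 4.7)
The direction $(1) \Rightarrow (2)$ is immediate, since every path is a walk. The substantive content lies in $(2) \Rightarrow (1)$, and the plan is a standard shortcutting argument whose engine is Lemma~\ref{lemma3.3.10}.

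Among all $Z$-$\sigma$-open walks between $v_i$ and $v_j$ in $\mathcal{G}$ (which form a non-empty set by assumption~(2) and hence have a minimal-length representative by well-ordering of $\mathbb{N}$), pick one of minimal length, say $\pi = \langle w_0, e_1, w_1, \ldots, e_n, w_n \rangle$ with $w_0 = v_i$ and $w_n = v_j$. Suppose for contradiction that $\pi$ is not a path. Then some node is visited at least twice, so there exist indices $k, l \in \{0, \ldots, n\}$ with $k < l$ and $w_k = w_l$. Since $w_k = w_l$ trivially lies in its own strongly connected component, the hypothesis $w_k \in \text{Sc}_\mathcal{G}(w_l)$ of Lemma~\ref{lemma3.3.10} holds. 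In this degenerate case the ``shortest directed path'' from $w_k$ to $w_l$ (in either orientation specified by the two cases of the lemma) is the trivial one-node walk $\langle w_k \rangle$ with zero edges. Replacing the subwalk of $\pi$ from position $k$ to position $l$ by this trivial walk yields a walk $\pi'$ between $v_i$ and $v_j$ that is still $Z$-$\sigma$-open by Lemma~\ref{lemma3.3.10}, but has only $n - (l - k) < n$ edges, contradicting the minimality of $\pi$. Hence $\pi$ is a path, establishing (1).

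The main subtlety, and the reason the proposition is not entirely trivial, is that naively deleting the loop between the two occurrences of $w_k$ may change the local edge configuration at $w_k$ in $\pi'$: two previously separate local patterns are glued together, which can turn a collider into a non-collider or vice versa and can alter blockability status with respect to strongly connected components. Hence $Z$-$\sigma$-openness at $w_k$ is not obviously preserved under the deletion. This is precisely the delicate point that Lemma~\ref{lemma3.3.10} resolves via a careful case analysis on the edge marks at $v_i$ and $v_j$ and the use of a directed path within $\text{Sc}_\mathcal{G}(v_j)$. With Lemma~\ref{lemma3.3.10} assumed, the remaining argument reduces to the clean minimality contradiction sketched above.
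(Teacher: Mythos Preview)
Your proof is correct and takes essentially the same approach as the paper: both use Lemma~\ref{lemma3.3.10} as the engine to shorten a $Z$-$\sigma$-open walk while preserving openness. The only cosmetic difference is that you phrase it as a minimal-length argument and invoke the degenerate instance of the lemma (with $w_k = w_l$, so the replacement path is trivial), whereas the paper runs an explicit iteration and applies the lemma more generally to the first and last nodes of $\pi$ lying in $\text{Sc}_\mathcal{G}(w)$; both terminate for the same reason.
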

\begin{proof}
1 $\implies$ 2 is trivial.

2 $\implies$ 1: Let $\pi$ be a $Z-\sigma$-open walk between $v_0$ and $v_n$ in $\mathcal{G}$. If a node $w$ occurs more than once on $\pi$, let $v_i$ be the first node on $\pi$ and $v_j$ be the last node on $\pi$ that are in $\text{Sc}_\mathcal{G}(w)$. We now use Lemma~\ref{lemma3.3.10} to construct a new walk $\pi'$ from $\pi$ by replacing the subwalk between $v_i$ and $v_j$ of $\pi$ by a particular directed path in $\text{Sc}_\mathcal{G}(w)$ between $v_i$ and $v_j$ in such a way that $\pi'$ is still $Z$-$\sigma$-open. On $\pi'$, the number of nodes that occurs more than once is at least one less than on $\pi$, and all nodes within $\text{Sc}_\mathcal{G}(w)$ occur within a single segment. This replacement procedure can be repeated until no nodes occur more than once. We have then obtained a $Z$-$\sigma$-open path between $v_0$ and $v_n$. 
\end{proof}

\subsection{\texorpdfstring{$\sigma$-Markov Equivalence}{sigma-Markov Equivalence}}
With selection nodes $\mathcal{S}$, a $\sigma$-inducing path generalizes the notion of inducing paths in DAGs and plays a crucial role in Definition~\ref{representing}.

\begin{definition} [$\sigma$-inducing Paths \citep{mooij2020constraint}] \label{sigmaSeparation}
Let \( \mathcal{G} = (\mathcal{V}^+,\mathcal{E},\mathcal{L}) \) be a DMG with nodes $\mathcal{V}^+=\mathcal{V}\cup\mathcal{S}$ and let $v_i,v_j\in\mathcal{V}$ be distinct nodes. A walk $\pi$ in $\mathcal{G}$ between $v_i$ and $v_j$ is called \emph{$\sigma$-inducing given $\mathcal{S}$} if each collider on $\pi$ is in $\anc_\mathcal{G}(\{v_i,v_j\}\cup\mathcal{S})$, and each non-endpoint non-collider on $\pi$ is unblockable. If it is a path, it is called a \emph{$\sigma$-inducing path given $\mathcal{S}$} between $v_i$ and $v_j$. 
\end{definition}

\begin{proposition}[Proposition 1 in \citep{mooij2020constraint}] \label{proposition12.2.3}
Let $\mathcal{G}=(\mathcal{V}^+,\mathcal{E},\mathcal{L})$ be a DMG with nodes $\mathcal{V}^+=\mathcal{V}\cup\mathcal{S}$ and let $v_i,v_j\in\mathcal{V}$ be distinct nodes. Then the following are equivalent:
\begin{enumerate}
    \item There is a $\sigma$-inducing path given $\mathcal{S}$ in $\mathcal{G}$ between $v_i$ and $v_j$;
    \item There is a $\sigma$-inducing walk given $\mathcal{S}$ in $\mathcal{G}$ between $v_i$ and $v_j$;
    \item $v_i\overset{\sigma}{\underset{\mathcal{G}}{\perp}} v_j\mid (Z\cup\mathcal{S})$ for all $Z\subseteq\mathcal{V}\backslash\{v_i,v_j\}$;
    \item $v_i\overset{\sigma}{\underset{\mathcal{G}}{\perp}} v_j\mid (Z\cup\mathcal{S})$ for all $Z=(\anc_\mathcal{G}(\{v_i,v_j\}\cup\mathcal{S}))\backslash\{v_i,v_j\}$.
\end{enumerate}
\end{proposition}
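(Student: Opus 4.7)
The plan is to establish the four-way equivalence via the cycle $(1) \Leftrightarrow (2) \Rightarrow (3) \Rightarrow (4) \Rightarrow (2)$, reading the relations in $(3)$ and $(4)$ as $\not\perp$ (since the existence of a $\sigma$-inducing path characterizes when $v_i$ and $v_j$ remain $\sigma$-connected under every conditioning set). The direction $(1) \Rightarrow (2)$ is immediate. For $(2) \Rightarrow (1)$, I would mirror Proposition~\ref{proposition3.3.6}: if a $\sigma$-inducing walk $\pi$ revisits a node $w$, let $v_a$ and $v_b$ be the first and last occurrence of any node in $\text{Sc}_\mathcal{G}(w)$ on $\pi$, and apply a Lemma~\ref{lemma3.3.10}-style substitution replacing the $v_a$-to-$v_b$ subwalk with a directed path inside $\text{Sc}_\mathcal{G}(w)$. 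The substitution preserves the collider set, and inserts only non-endpoint non-colliders whose outgoing edges stay within the SCC (hence are unblockable); iterating yields a $\sigma$-inducing path.

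For $(2) \Rightarrow (3)$, fix a $\sigma$-inducing walk $\pi$ and an arbitrary $Z \subseteq \mathcal{V} \setminus \{v_i, v_j\}$. I would build a $(Z \cup \mathcal{S})$-$\sigma$-open walk by iteratively rerouting $\pi$ through the endpoints at each problematic collider. Since $\anc_\mathcal{G}(\mathcal{S}) \subseteq \anc_\mathcal{G}(Z \cup \mathcal{S})$, any collider $v_k$ on $\pi$ with $v_k \notin \anc_\mathcal{G}(Z \cup \mathcal{S})$ must lie in $\anc_\mathcal{G}(v_i) \cup \anc_\mathcal{G}(v_j)$; WLOG $v_k \in \anc_\mathcal{G}(v_i)$, and pick a directed path $v_k \tuh w_1 \tuh \cdots \tuh w_t = v_i$. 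The crucial observation is that no $w_l$ can belong to $Z \cup \mathcal{S}$, since otherwise $v_k \in \anc_\mathcal{G}(w_l) \subseteq \anc_\mathcal{G}(Z \cup \mathcal{S})$, contradicting the assumption. Replacing the $v_i$-to-$v_k$ portion of $\pi$ with the reverse of this directed path turns $v_k$ into a non-collider (tail on the $v_i$-side) outside $Z \cup \mathcal{S}$, and inserts only non-colliders $w_l$ outside $Z \cup \mathcal{S}$; the unchanged $v_k$-to-$v_j$ portion keeps its unblockable non-colliders from $\pi$. Iterating over the remaining problematic colliders (routing through $v_j$ when needed) yields the desired $(Z \cup \mathcal{S})$-$\sigma$-open walk.

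The implication $(3) \Rightarrow (4)$ is immediate by specialization. For $(4) \Rightarrow (2)$, let $\mu$ be a $(Z^* \cup \mathcal{S})$-$\sigma$-open walk with $Z^* = \anc_\mathcal{G}(\{v_i, v_j\} \cup \mathcal{S}) \setminus \{v_i, v_j\}$, and use Proposition~\ref{proposition3.3.6} to take $\mu$ as a path. I claim $\mu$ itself is $\sigma$-inducing. Every collider of $\mu$ lies in $\anc_\mathcal{G}(Z^* \cup \mathcal{S}) \subseteq \anc_\mathcal{G}(\{v_i, v_j\} \cup \mathcal{S})$, because $Z^* \cup \mathcal{S} \subseteq \anc_\mathcal{G}(\{v_i, v_j\} \cup \mathcal{S})$ and the latter is closed under taking ancestors. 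For the unblockability of non-endpoint non-colliders, suppose for contradiction that some non-endpoint non-collider $v_k$ on $\mu$ is blockable; then $\sigma$-openness forces $v_k \notin Z^* \cup \mathcal{S}$, and since $\mu$ is a path $v_k \notin \{v_i, v_j\}$, so together $v_k \notin \anc_\mathcal{G}(\{v_i, v_j\} \cup \mathcal{S})$. Following the outgoing directed chain from $v_k$ along $\mu$, it must eventually reach either an endpoint of $\mu$ (giving $v_k \in \anc_\mathcal{G}(\{v_i, v_j\})$) or a collider $v_{k+m}$ on $\mu$ (giving $v_k \in \anc_\mathcal{G}(v_{k+m}) \subseteq \anc_\mathcal{G}(\{v_i, v_j\} \cup \mathcal{S})$ via the $\sigma$-openness of $\mu$), each contradicting our conclusion.

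The hard part will be the rerouting argument for $(2) \Rightarrow (3)$: the iteration must be organized so that subsequent reroutes do not reintroduce problematic colliders and so that the final walk is simultaneously $\sigma$-open at every segment. The structural fact underlying the construction is that a directed path emanating from a node outside $\anc_\mathcal{G}(Z \cup \mathcal{S})$ cannot pass through $Z \cup \mathcal{S}$, which ensures that each individual rerouting step preserves $\sigma$-openness.
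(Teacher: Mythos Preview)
Your proposal is correct and follows essentially the same route as the paper. Both proofs do $(1)\Rightarrow(2)$ trivially, handle $(2)\Rightarrow(3)$ by rerouting a problematic collider $v_k\in\anc_\mathcal{G}(\{v_i,v_j\})\setminus\anc_\mathcal{G}(Z\cup\mathcal{S})$ along a directed path to an endpoint (using the key observation that such a path avoids $Z\cup\mathcal{S}$), take $(3)\Rightarrow(4)$ by specialization, and for $(4)\Rightarrow(1)$ argue that any $(Z^*\cup\mathcal{S})$-$\sigma$-open path is itself $\sigma$-inducing by tracing outgoing edges from a putative blockable non-collider to a collider or endpoint.

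The only notable difference is packaging: where you set up an explicit iteration for $(2)\Rightarrow(3)$ and worry about reroutes reintroducing problematic colliders, the paper phrases the same argument as a minimality/extremal choice---take a walk in the class ``colliders in $\anc_\mathcal{G}(\{v_i,v_j\}\cup\mathcal{S}\cup Z)$, non-endpoint non-colliders unblockable or outside $\mathcal{S}\cup Z$'' with the fewest colliders, then derive a contradiction from a single reroute. This sidesteps your bookkeeping concern entirely (each reroute strictly decreases the collider count and introduces no new colliders, so minimality forces all colliders into $\anc_\mathcal{G}(Z\cup\mathcal{S})$ in one stroke). Your direct $(2)\Rightarrow(1)$ via the Lemma~\ref{lemma3.3.10} substitution is fine but redundant once the cycle through $(3)$ and $(4)$ is in place, since your $(4)\Rightarrow(2)$ already produces a path.
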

\begin{proof}
The proof is similar to that of Theorem 4.2 in \citep{richardson2002ancestral}.

  1 $\implies$ 2 is trivial.

  2 $\implies$ 3: Assume the existence of a $\sigma$-inducing walk given $\mathcal{S}$ between $v_i$ and $v_j$ in $\mathcal{G}$. 
  Let $Z \subseteq \mathcal{V}  \backslash\{v_i,v_j\}$. 
  Consider all walks in $\mathcal{G}$ between $v_i$ and $v_j$ with the property that all colliders on it are in $\anc_{\mathcal{G}}({\{v_i,v_j\} \cup \mathcal{S} \cup Z)}$, and each non-endpoint non-collider on it is not in $\mathcal{S} \cup Z$ or is unblockable.
  Such walks exist, since the $\sigma$-inducing walk is one. 
  Let $\mu$ be such a walk with a minimal number of colliders.
  We show that all colliders on $\mu$ must be in $\anc_{\mathcal{G}}(\mathcal{S} \cup Z)$. 
  Suppose on the contrary the existence of a collider $v_k$ on $\mu$ that is not ancestor of $\mathcal{S} \cup Z$.
  It is either ancestor of $v_i$ or of $v_j$, by assumption.
  If $v_j \in J$, it cannot be ancestor of $v_j$, and hence must be ancestor of $v_i$.
  Otherwise, we can assume it to be ancestor of $v_i$ without loss of generality.
  Then there is a directed path $\pi$ from $v_k$ to $v_i$ in $\mathcal{G}$ that does not pass through any node of $\mathcal{S} \cup Z$.
  Then the subwalk of $\mu$ between $v_k$ and $v_j$ can be concatenated with the directed path $\pi$
  into a walk between $v_i$ and $v_j$ that has the property, but has fewer colliders than $\mu$: a contradiction. 
  Therefore, $\mu$ is $\sigma$-open given $\mathcal{S} \cup Z$.
  Hence, $v_i$ and $v_j$ are $\sigma$-connected given $\mathcal{S} \cup Z$.

  3 $\implies$ 4 is trivial.

  4 $\implies$ 1:
  Suppose that 
  $v_i$ and $v_j$ are $\sigma$-connected given $Z = (\anc_{\mathcal{G}}(\{v_i,v_j\} \cup \mathcal{S}) ) \backslash \{v_i,v_j\}$. 
  Let $\pi$ be a path between $v_i$ and $\{v_j\}$ that is $\sigma$-open given $Z$. 
  We show that $\pi$ must be a $\sigma$-inducing path given $\mathcal{S}$. 
  First, all colliders on $\pi$ are in $\anc_\mathcal{G}(Z)$, and hence in $\anc_{\mathcal{G}}(\{v_i,v_j\} \cup \mathcal{S})$.
  Second, let $v_k$ be any non-endpoint non-collider on $\pi$.
  Then there must be a directed subpath of $\pi$ starting at $v_k$ that ends either at the first collider on $\pi$ next to $v_k$ or at an end node of $\pi$, and hence $v_k$ must be in $Z$. 
  Since $\pi$ is $\sigma$-open given $Z$, $v_k$ must be unblockable.
  Hence, all non-endpoint non-colliders on $\pi$ must be unblockable.
\end{proof}

\begin{lemma}\label{lemma12.2.5}
Let $\mathcal{G}=(\mathcal{V}^+,\mathcal{E},\mathcal{L})$ be a DMG with nodes $\mathcal{V}^+=\mathcal{V}\cup\mathcal{S}$ and let $v_i,v_j\in\mathcal{V}$ be distinct nodes. If there exists a $\sigma$-inducing path given $\mathcal{S}$ between $v_i$ and $v_j$ in $\mathcal{G}$, and all $\sigma$-inducing paths given $\mathcal{S}$ in $\mathcal{G}$ between $v_i$ and $v_j$ are out of $v_j$, then $v_j\in \anc_\mathcal{G}(\{v_i\}\cup\mathcal{S})$.
\end{lemma}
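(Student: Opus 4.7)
The plan is to prove the contrapositive: assume $v_j \notin \anc_\mathcal{G}(\{v_i\} \cup \mathcal{S})$ and exhibit a $\sigma$-inducing path between $v_i$ and $v_j$ that is into $v_j$. Starting from any $\sigma$-inducing path $\pi: v_0 = v_i \sus \cdots \sus v_n = v_j$, whose last edge is $v_{n-1} \leftarrow v_j$ by hypothesis, I identify the \emph{maximal directed suffix} at $v_j$: let $k \in \{1, \ldots, n\}$ be the largest integer such that $v_j \rightarrow v_{n-1} \rightarrow \cdots \rightarrow v_{n-k}$ is a chain of directed edges on $\pi$ pointing away from $v_j$. If $k = n$, then $v_j \in \anc_\mathcal{G}(v_i)$, an immediate contradiction. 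Otherwise $k < n$; maximality forces the edge between $v_{n-k-1}$ and $v_{n-k}$ to have an arrowhead at $v_{n-k}$, because in a DMG the only alternative $v_{n-k-1} \leftarrow v_{n-k}$ would extend the suffix. Consequently $v_{n-k}$ is a collider on $\pi$, and the $\sigma$-inducing property yields $v_{n-k} \in \anc_\mathcal{G}(\{v_i, v_j\} \cup \mathcal{S})$. Since the directed suffix gives $v_j \in \anc_\mathcal{G}(v_{n-k})$, the assumption $v_j \notin \anc_\mathcal{G}(\{v_i\} \cup \mathcal{S})$ rules out $v_{n-k} \in \anc_\mathcal{G}(\{v_i\} \cup \mathcal{S})$, leaving only $v_{n-k} \in \anc_\mathcal{G}(v_j)$. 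Thus $v_{n-k}$ lies in the strongly connected component $C := \text{Sc}_\mathcal{G}(v_j)$.

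Next I build the desired walk $\mu$ by concatenating the subpath $\pi[v_0, v_{n-k}]$ with a shortest directed path $v_{n-k} = u_0 \rightarrow u_1 \rightarrow \cdots \rightarrow u_r = v_j$ within $C$; such a path exists, and all its intermediate nodes lie in $C$, since $v_{n-k}, v_j \in C$. By construction $\mu$ ends into $v_j$. Its colliders are contained in those of $\pi$, hence lie in $\anc_\mathcal{G}(\{v_i, v_j\} \cup \mathcal{S})$; the previously colliding $v_{n-k}$ becomes a non-collider whose outgoing edge lands at $u_1 \in C = \text{Sc}_\mathcal{G}(v_{n-k})$; and each new intermediate node $u_1, \ldots, u_{r-1}$ is a directed non-collider inside $C$, hence unblockable. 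Therefore $\mu$ is a $\sigma$-inducing walk from $v_i$ to $v_j$ that is into $v_j$.

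Finally I convert $\mu$ to a path by iteratively shortcutting any repetition: if $u_l = v_m$ for some $0 \leq m < n-k$ and $1 \leq l < r$, replace the subwalk from $v_m$ up to $u_l$ by a single occurrence of $v_m$. The patched node $v_m = u_l$ lies in $C$, its new outgoing edge goes to $u_{l+1} \in C$, and any outgoing edge on $\pi$ retained at $v_m$ also lands in $C$ by the original unblockability of $v_m$ on $\pi$, so the patched node is a non-collider and unblockable. The edge case $m = 0$ immediately gives $v_i \in C$ and hence $v_j \in \anc_\mathcal{G}(v_i)$, contradicting the standing assumption. Iterating the shortcut yields a $\sigma$-inducing path from $v_i$ to $v_j$ into $v_j$, contradicting the hypothesis. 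The main subtlety is the shortcut verification — specifically, showing that cutting a repetition preserves both the ``into $v_j$'' orientation and the $\sigma$-inducing status at the patched node. This requires a case split on whether $v_m$ was originally a collider or non-collider on $\pi$ and on the orientation of its remaining incident $\pi$-edge; in each case, the combination of $v_m \in C$ with the original unblockability of $v_m$ on $\pi$ makes the unblockability requirement on the new walk automatic.
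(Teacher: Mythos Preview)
Your proof is correct and follows essentially the same strategy as the paper: identify the first collider encountered walking from $v_j$ along the $\sigma$-inducing path, observe that it must lie in $\text{Sc}_\mathcal{G}(v_j)$ under the standing assumption, and then splice in a directed path inside $\text{Sc}_\mathcal{G}(v_j)$ to produce a $\sigma$-inducing path that is into $v_j$. The differences are purely organizational. The paper argues directly (showing first that the penultimate node $v_k$ cannot lie in $\text{Sc}_\mathcal{G}(v_j)$, then concluding the first collider $v_l$ is an ancestor of $\{v_i\}\cup\mathcal{S}$), whereas you run the full contrapositive. For turning the spliced walk into a path, the paper takes the node $m$ on the original inducing path closest to $v_i$ that also lies on the directed path within $\text{Sc}_\mathcal{G}(v_j)$ and concatenates once, while you concatenate naively and iteratively shortcut repetitions; the paper's one-shot choice avoids the case analysis you sketch in your final paragraph, but both are valid.
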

\begin{proof}
Let $\mu$ be a $\sigma$-inducing path given $\mathcal{S}$ between $v_i$ and $v_j$ in $\mathcal{G}$. 
It must be of the form $v_i \cdots v_k \hut v_j$ (with possibly $v_k = v_i$). 
First we show that $v_k$ cannot be in $\text{Sc}_\mathcal{G}(v_j)$.
If $v_k \in \text{Sc}_\mathcal{G}(v_j)$, then let $\pi$ be a directed path in $\mathcal{G}$ from $v_k$ to $v_j$ that is entirely contained in $\text{Sc}_\mathcal{G}(v_j)$.
Let $m$ be the node on $\mu$ closest to $v_i$ that is also on $\pi$ (possibly $m = v_k$).
The subpath of $\pi$ between $v_j$ and $m$ can be concatenated with the subpath of $\mu$ between $m$ and $v_i$ into a walk between $v_j$ and $v_i$.
This must be a $\sigma$-inducing path given $\mathcal{S}$ between $v_i$ and $v_j$ that is into $v_j$ by construction: contradiction.
Hence $v_k$ cannot be in $\text{Sc}_\mathcal{G}(v_j)$.

If $\mu$ is a directed path all the way to $v_i$, then clearly, $v_j \in \anc_\mathcal{G}(\{v_i\} \cup \mathcal{S})$.
Otherwise, it must contain a collider.
Let $v_l$ be the collider on $\mu$ closest to $v_j$.
$v_l$ must be ancestor of $v_i$ or $v_j$ or $\mathcal{S}$.
In the first and third cases, clearly $v_j \in \anc_\mathcal{G}(\{v_i\} \cup \mathcal{S})$.
In the second case, all nodes on the subpath of $\mu$ between $v_j$ and $v_l$ must be in $\text{Sc}_\mathcal{G}(v_j)$, a contradiction.
\end{proof}

\begin{lemma}\label{lemma12.2.6}
Let ${\mathcal{G}}=(\mathcal{V}^+,\mathcal{E},\mathcal{L})$ be a DMG with nodes $\mathcal{V}^+=\mathcal{V}\cup{\mathcal{S}}$ and let $v_i,v_j\in\mathcal{V}$ be distinct nodes. If there exists a $\sigma$-inducing path given ${\mathcal{S}}$ between $v_i$ and $v_j$ in ${\mathcal{G}}$ into $v_j$, and $v_i\notin \anc_{\mathcal{G}}(\{v_j\}\cup{\mathcal{S}})$, then there exists a $\sigma$-inducing path given ${\mathcal{S}}$ between $v_i$ and $v_j$ in ${\mathcal{G}}$ that is both into $v_i$ and into $v_j$.
\end{lemma}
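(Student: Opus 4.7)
The plan is to handle the case where $\pi$ is already into $v_i$ trivially, and in the remaining case where $\pi$ is out of $v_i$, to construct a new $\sigma$-inducing path via a reversed directed detour through a strongly connected component. Write $\pi = v_i \tuh v_1 \sus \cdots \sus v_{n-1} \sus v_j$. First, I would show $\pi$ must contain a collider: otherwise every non-endpoint node on $\pi$ is a non-collider, and because $v_i \tuh v_1$ puts an arrowhead at $v_1$, the non-collider constraint (at most one arrowhead at a non-collider) forces the other edge at $v_1$ to be $v_1 \tuh v_2$, and iterating makes $\pi$ a directed chain $v_i \to v_1 \to \cdots \to v_j$, contradicting $v_i \notin \anc_\mathcal{G}(v_j)$. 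Let $v_k$ be the first collider on $\pi$. The same propagation for $v_1, \ldots, v_{k-1}$ yields a directed subpath $v_i \to v_1 \to \cdots \to v_{k-1} \to v_k$, so $v_i \in \anc_\mathcal{G}(v_k)$; moreover, unblockability of those intermediate non-colliders places $v_1, \ldots, v_k$ in a common strongly connected component.

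Because $v_k$ is a collider on the $\sigma$-inducing path $\pi$, $v_k \in \anc_\mathcal{G}(\{v_i, v_j\} \cup \mathcal{S})$. Combined with $v_i \in \anc_\mathcal{G}(v_k)$, each of the alternatives $v_k \in \anc_\mathcal{G}(v_j)$ and $v_k \in \anc_\mathcal{G}(\mathcal{S})$ would force $v_i \in \anc_\mathcal{G}(\{v_j\} \cup \mathcal{S})$, contradicting the hypothesis. Hence $v_k \in \anc_\mathcal{G}(v_i)$, giving $v_i \in \text{Sc}_\mathcal{G}(v_k)$. Fix a directed path $\tau \colon v_k \to w_1 \to \cdots \to w_M \to v_i$ in $\mathcal{G}$, which lies entirely inside $\text{Sc}_\mathcal{G}(v_i)$; in particular, no $w_l$ can coincide with $v_j$, since $v_j \in \text{Sc}_\mathcal{G}(v_i)$ would again yield $v_i \in \anc_\mathcal{G}(v_j)$.

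Finally, let $m$ be the node on $\pi$ (chosen from $\{v_k, v_{k+1}, \ldots, v_{n-1}\}$) that is closest to $v_j$ and also lies on $\tau$, defaulting to $m = v_k$ if no $w_l$ appears later on $\pi$. Form the walk $W$ by reversing the portion of $\tau$ from $v_i$ down to $m$ and concatenating it with the subpath of $\pi$ from $m$ to $v_j$. By the maximality of $m$, the reversed-$\tau$ segment and the $\pi$-segment share no other node, so $W$ is a path; by construction it begins with an arrowhead into $v_i$ and ends with the arrowhead into $v_j$ inherited from $\pi$. The interior $w_l$'s are unblockable non-colliders because all their edges on $W$ are directed and lie inside $\text{Sc}_\mathcal{G}(v_i)$, and the nodes of $\pi$ past $m$ keep their $\sigma$-inducing roles unchanged. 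The main obstacle is the bookkeeping at the junction $m$: its new left edge is a tail out of $m$ into a node of $\text{Sc}_\mathcal{G}(m)$, so $m$ is a non-collider on $W$, and a short case split on the type of the unchanged right edge $m \sus v_{m+1}$ confirms that all outgoing edges of $m$ on $W$ stay inside $\text{Sc}_\mathcal{G}(m)$; the only delicate subcase, in which the right edge is also outgoing at $m$, is resolved by noting that $m$ was then already a non-collider on $\pi$, so unblockability of $m$ on $\pi$ had already placed that outgoing direction within $\text{Sc}_\mathcal{G}(m)$.
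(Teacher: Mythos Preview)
Your proof is correct and follows essentially the same approach as the paper's own proof: locate the first collider $v_k$ on the outgoing-at-$v_i$ path, argue via the collider condition and the hypothesis $v_i \notin \anc_\mathcal{G}(\{v_j\}\cup\mathcal{S})$ that $v_k \in \text{Sc}_\mathcal{G}(v_i)$, take a directed path from $v_k$ back to $v_i$ inside this strongly connected component, and splice its tail onto the remainder of $\pi$ at the latest point of contact. Your write-up is in fact more careful than the paper's at the junction node $m$, where the paper simply asserts the result is a $\sigma$-inducing path; your case split on the orientation of the right edge at $m$ makes explicit why $m$ becomes an unblockable non-collider on the new path.
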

\begin{proof}
Let $\mu$ be a $\sigma$-inducing path given $\mathcal{S}$ between $v_i$ and $v_j$ in $\mathcal{G}$ into $v_j$.
  If $\mu$ is into $v_i$, we are done.
  Therefore, suppose it is of the form $v_i \tuh \cdots \suh v_j$. 
  It cannot be a directed path, since $v_i \notin \anc_\mathcal{G}(\{v_j\} \cup \mathcal{S})$.
  Therefore, there must be a collider $v_k$ on $\mu$ such that $\mu$ is of the form $v_i \tuh \dots \tuh v_k \hus \cdots \suh v_j$
  (with the subpath between $v_i$ and $v_k$ directed).
  Then $v_k \in \anc_\mathcal{G}(\{v_i\})$, and hence all nodes on $\mu$ between $v_i$ and $v_k$ must be in $\text{Sc}_\mathcal{G}(v_i)$.
  Let $\pi$ be a directed path in $\mathcal{G}$ from $v_k$ to $v_i$ that is entirely contained in $\text{Sc}_\mathcal{G}(v_i)$.
  Let $v_l$ be the node on $\mu$ closest to $v_j$ that is also on $\pi$ (possibly $v_l = v_k$).
  Then $v_l \ne v_j$, because otherwise $v_j \in \text{Sc}_\mathcal{G}(v_i)$, contradicting $v_i \notin \anc_\mathcal{G}(\{v_j\} \cup \mathcal{S})$.
  The non-trivial subpath of $\pi$ between $v_i$ and $v_l$ can be concatenated with the non-trivial subpath of $\mu$ between $v_l$ and $v_j$ into a walk between $v_i$ and $v_j$.
  This must be a $\sigma$-inducing path given $\mathcal{S}$ between $v_i$ and $v_j$ that is both into $v_i$ and into $v_j$.
\end{proof}

\begin{definition}[$\sigma$-Markov Equivalence Given $\mathcal{S}$] \label{sigmamarkov}
Two DMGs \( \mathcal{G}_1, \mathcal{G}_2 \) with the same node set \( \mathcal{V}^+ = \mathcal{V} \cup \mathcal{S} \) are said to be \emph{\(\sigma\)-Markov equivalent given \( \mathcal{S} \)} if, for any three subsets of nodes \( X, Y, Z \subseteq \mathcal{V} \), it holds that  
\[
X \text{ is } \sigma\text{-separated from } Y \text{ given } Z \cup \mathcal{S} \text{ in } \mathcal{G}_1
\]
if and only if  
\[
X \text{ is } \sigma\text{-separated from } Y \text{ given } Z \cup \mathcal{S} \text{ in } \mathcal{G}_2.
\]
\end{definition}

\section{Proofs}
\subsection{Proofs in Section~\ref{section:1}}
\lemmadef*
\begin{proof}
By the \( \sigma \)-completeness of \( \sigma \)-MAGs, \( a \) and \( c \) are adjacent in \( \mathcal{H} \). Suppose the edge between \( a \) and \( b \) is \( a \tuh b \). Since \( \mathcal{H} \) is ancestral and contains a path \( a \tuh b \tut c \), it follows that \( \mathcal{H} \) does not contain an edge of the form \( a \hus c \). Therefore, \( \mathcal{H} \) must contain either \( a \tuh c \) or \( a \tut c \). If \( \mathcal{H} \) contains the edge \( a \tut c \), then from the path \( b \tut c \tut a \), we conclude that \( \mathcal{H} \) should not contain the edge \( b \hus a \), which is a contradiction. Thus, the edge between \( a \) and \( c \) must be \( a \tuh c \) as well.

Now we suppose the edge between \( a \) and \( b \) is \( a \huh b \). 
\begin{enumerate}
\item If \( \mathcal{H} \) contains the edge \( a \tut c \), then from the path \( b \tut c \tut a \), it follows that \( \mathcal{H} \) would not contain an edge of the form \( a \suh b \), which is a contradiction.
\item If \( \mathcal{H} \) contains the edge \( a \tuh c \), then from the path \( a \tuh c \tut b \), it follows that \( \mathcal{H} \) would not contain an edge of the form \( b \suh a \), which is a contradiction.
\item If \( \mathcal{H} \) contains the edge \( a \hut c \), then from the path \( b \tut c \tuh a \), it follows that \( \mathcal{H} \) would not contain an edge of the form \( a \suh b \), which is a contradiction.
\end{enumerate}
Thus, we conclude that \( \mathcal{H} \) must contain the edge \( a \huh c \).

If \( \mathcal{H} \) also contains \( b \tut d \), then by the \( \sigma \)-maximality of \( \sigma \)-MAGs, \( c \) and \( d \) must be adjacent in \( \mathcal{H} \). Since \( \mathcal{H} \) contains both paths \( d \tut b \tut c \) and \( c \tut b \tut d \), it follows that \( \mathcal{H} \) cannot contain an edge of the form \( c \suh d \) or \( d \suh c \). Thus, the edge between \( c \) and \( d \) must be \( c \tut d \). Hence, we conclude that $\text{Nbh}_\mathcal{H}(b)$ is complete. Similarly, since \( \mathcal{H} \) also contains the triple \( a \suh c \tut b \), as we have shown, it follows by symmetry that the neighborhood of \( c \) must also be complete.
\end{proof}


\lemmaant*
\begin{proof}
Start with \( v_2 \). If the edge between \( v_1 \) and \( v_2 \) is directed, then clearly \( v_0 \in \anc_\mathcal{H}(v_2) \). If instead \( v_1 - v_2 \) is present, then by Lemma~\ref{oldDef}, \( \mathcal{H} \) also contains the directed edge \( v_0 \rightarrow v_2 \), given the triple \( v_0 \tuh v_1 \tut v_2 \). Now, suppose that for some \( 2 \leq k \leq n \), we have \( v_0 \in \anc_\mathcal{H}(v_k) \). If the edge between \( v_k \) and \( v_{k+1} \) is directed, then it directly follows that \( v_0 \in \anc_\mathcal{H}(v_{k+1}) \). Since \( v_0 \in \anc_\mathcal{H}(v_k) \), there exists a directed path from \( v_0 \) to \( v_k \) in \( \mathcal{H} \). Let the last edge on this path be \( a \rightarrow v_k \). If the edge between \( v_k \) and \( v_{k+1} \) is undirected, then from the triple \( a \tuh v_k \tut v_{k+1} \), it follows that \( a \tuh v_{k+1} \). Consequently, \( \mathcal{H} \) contains a directed path from \( v_0 \) to \( v_{k+1} \), so we conclude that \( v_0 \in \anc_\mathcal{H}(v_{k+1}) \). Thus, by induction, \( v_0 \in \anc_\mathcal{H}(v_n) \), completing the proof.
\end{proof}


\lemmaone*
\begin{proof}
We construct the DMG \( \mathcal{G} \) as follows. It has nodes \( \mathcal{V}^+ = \mathcal{V} \cup \mathcal{S} \), where  
\[
\mathcal{S} = \{s_{\{a,b\}} : a\tut b \in \mathcal{H}, \text{Nbh}_\mathcal{H}(a)  \text{ or } \text{Nbh}_\mathcal{H}(b) \text{ is incomplete}\}.
\]  

First, we include all (bi)directed edges \( a \suh b \) for \(a,b \in \mathcal{V} \) that are present in \( \mathcal{H} \) as edges in \( \mathcal{G} \). For undirected edges \( a\tut b \) in \( \mathcal{H} \), we treat them as follows: If the neighborhood of \( a \) or $b$ is incomplete, we introduce a new node \( s_{\{a, b\}} \) and add the edges \( a \rightarrow s_{\{a, b\}} \leftarrow b \). On the other hand, if both of the neighborhoods of \( a \) and $b$ are complete, we replace the undirected edge \( a - b \) with the directed edges \( a \rightarrow b \) and \( a \leftarrow b \). 

We need to show that \( \mathcal{H} \) represents \( \mathcal{G} \) given \( \mathcal{S} \). For that we need to show:
\begin{enumerate}
    \item Two distinct nodes \(a, b \in \mathcal{V}\) are adjacent in \(\mathcal{H}\) if and only if there exists a \(\sigma\)-inducing path between \(a\) and \(b\) given \(\mathcal{S}\) in \(\mathcal{G}\).
    \item If \(a \hus b\) in \(\mathcal{H}\), then \(a \notin \anc_\mathcal{G}(\{b\} \cup \mathcal{S})\).
    \item If \(a \tus b\) in \(\mathcal{H}\), then \(a \in \anc_\mathcal{G}(\{b\} \cup \mathcal{S})\).
\end{enumerate}

We now demonstrate that for \( a \hus b \in \mathcal{H} \), it holds that \( a \notin \anc_{\mathcal{G}}(\mathcal{S}) \). Assume, for the sake of contradiction, that \( a \in \anc_{\mathcal{G}}(\mathcal{S}) \), which implies the existence of a shortest directed path from $a$ to some $s\in\mathcal{S}$ in $\mathcal{G}$. Suppose this path has more than one edge, meaning it takes the form \( a \rightarrow \cdots \rightarrow c \rightarrow d \rightarrow s \), where $a$ and $c$ might be the same, and all nodes except $s$ on the path belong to \( \mathcal{V} \). By the construction of \( \mathcal{G} \), there exists a node \( e \in \mathcal{V} \) such that \( d \rightarrow s \leftarrow e \) with \( s = s_{\{d, e\}} \) in \( \mathcal{G} \), the edge \( d \tut e \) is present in \( \mathcal{H} \), and the neighborhood of \( d \) or \( e \) is incomplete. If \( c \rightarrow d \) is also present in \( \mathcal{H} \), then by Lemma~\ref{oldDef}, \( \text{Nbh}_\mathcal{H}(d) \) and \( \text{Nbh}_\mathcal{H}(e) \) must be complete, which is a contradiction. Thus, the edge between \( c \) and \( d \) must be undirected in \( \mathcal{H} \). Moreover, by the construction of \( \mathcal{G} \), we only have \( c \tuh d \) in \( \mathcal{G} \) given \( c \tut d \) in \( \mathcal{H} \) when the neighborhoods of \( c \) and \( d \) are complete. So, \( c \) and \( e \) are connected by an undirected edge in \( \mathcal{H} \), and there exists a node \( s_{\{c,e\}} \in \mathcal{S} \) such that \( \mathcal{G} \) contains \( c \rightarrow s_{\{c,e\}} \leftarrow e \) because \( \text{Nbh}_\mathcal{H}(e) \) is incomplete. This leads to a shorter directed path \( a \rightarrow \cdots \rightarrow c \rightarrow s_{\{c,e\}} \) from \( a \) to \( \mathcal{S} \), which is a contradiction. The only remaining possibility is that we have \( a \rightarrow s \) in \( \mathcal{G} \), where \( s \in \mathcal{S} \). By the construction of \( \mathcal{G} \), there exists a node \( f \in \mathcal{V} \) such that \( a \rightarrow s \leftarrow f \) in \( \mathcal{G} \) with \( s = s_{\{a, f\}} \), the edge \( a - f \) is present in \( \mathcal{H} \), and the neighborhood of \( a \) or \( f \) is incomplete. Combining this with the edge \( b \suh a \) in \( \mathcal{H} \), we obtain a triple \( b \suh a \tut f \) in \( \mathcal{H} \), which, by Lemma~\ref{oldDef}, implies that \( \text{Nbh}_\mathcal{H}(a) \) and \( \text{Nbh}_\mathcal{H}(f) \) must be complete, leading to a contradiction. Thus, we conclude that \( a \notin \anc_{\mathcal{G}}(\mathcal{S}) \).

Next, we will show that for \( a \hus b \in \mathcal{H} \), it holds that \( a \notin \anc_{\mathcal{G}}(b) \). Assume the contrary, that \( a \in \anc_\mathcal{G}(b) \). This implies the existence of a shortest directed path from \( a \) to \( b \) in \( \mathcal{G} \), which corresponds to an anterior path from \( a \) to \( b \) in \( \mathcal{H} \). However, \( \mathcal{H} \) also contains the edge \( b \suh a \), which contradicts the definition of \( \sigma \)-MAGs. Therefore, \( a \notin \anc_\mathcal{G}(b) \). Thus, we conclude that if \( a \hus b \in \mathcal{H} \), then \( a \notin \anc_\mathcal{G}(\{b\} \cup \mathcal{S}) \).

Now suppose \( a \tus b \) in \( \mathcal{H} \). Then either \( a \tut b \) in \( \mathcal{H} \) or \( a \tuh b \) in \( \mathcal{H} \). In the first case, by construction of \( \mathcal{G} \), \( a \) must be an ancestor of either \( b \) or \( \mathcal{S} \) in \( \mathcal{G} \). In the second case, by construction of \( \mathcal{G} \), \( a \) must be an ancestor of \( b \). Hence, \( a \in \anc_{\mathcal{G}}(\{b\} \cup \mathcal{S}) \).

Finally, we consider adjacency. If two distinct nodes \( v_0, v_n \in \mathcal{V} \) are adjacent in \( \mathcal{H} \), then \( v_0, v_n \) must also be adjacent in \( \mathcal{G} \), or there must be a triple \( v_0 \rightarrow s_{v_0, v_n} \leftarrow v_n \) in \( \mathcal{G} \). In both cases, there exists a \( \sigma \)-inducing path given $\mathcal{S}$ between \( v_0 \) and \( v_n \) in \( \mathcal{G} \). Next, suppose \( v_0, v_n \in \mathcal{V} \) are not adjacent in \( \mathcal{H} \). This implies that there should not be a \( \sigma \)-inducing path between \( v_0 \) and \( v_n \) given \( \mathcal{S} \) in \( \mathcal{G} \). We assume the opposite and let \( \pi \) be a shortest \( \sigma \)-inducing path given $\mathcal{S}$ with the smallest collider distance sum to \( \{v_0, v_n\} \cup \mathcal{S} \) between \( v_0 \) and \( v_n \) in \( \mathcal{G} \). Notice that there is no other \( \sigma \)-inducing path \( \mu \) between \( v_0 \) and \( v_n \) in \( \mathcal{G} \) given $\mathcal{S}$ such that \( \mu \) has fewer edges than \( \pi \), or \( \mu \) has the same number of edges but a smaller collider distance sum to \( \{v_0, v_n\} \cup \mathcal{S} \) than \( \pi \).

Suppose \( \pi \) contains a node from \( \mathcal{S} \). Then \( \pi \) must include a subpath of the form \( v_k \rightarrow s_{\{v_k,v_{k+1}\}} \leftarrow v_{k+1} \). In this case, \( v_k \) and \( v_{k+1} \) are non-colliders, and both are blockable. This implies that \( \pi \) reduces to \( v_0 \rightarrow s_{\{v_0,v_n\}} \leftarrow v_n \), and \( v_0 - v_n \) is an edge in \( \mathcal{H} \), which contradicts the assumption that \( v_0 \) and \( v_n \) are not adjacent in \( \mathcal{H} \). Hence, \( \pi \) must only contain nodes from \( \mathcal{V} \).

Suppose \( v_k \) is a non-endpoint non-collider on \( \pi \). W.O.L.G., assume \( v_{k-1} \sus v_k \rightarrow v_{k+1} \) is present on \( \pi \). By the definition of a \( \sigma \)-inducing path, we know that \( v_k \) and \( v_{k+1} \) belong to the same strongly connected component in \( \mathcal{G} \), which implies that there exists a directed path from \( v_{k+1} \) to \( v_k \) in \( \mathcal{G} \). If the edge \( v_k \rightarrow v_{k+1} \) is present in \( \mathcal{H} \), then by the second property we have established, $v_{k+1}\notin \anc_\mathcal{G}(v_k)$, which leads to a contradiction. Therefore, the undirected edge \( v_k - v_{k+1} \) must be present in \( \mathcal{H} \). 

If we have \( v_{k-1} \leftarrow v_k \rightarrow v_{k+1} \) on \( \pi \), then similarly, we obtain \( v_{k-1} - v_k \) in \( \mathcal{H} \). By the construction of \( \mathcal{G} \), \( \text{Nbh}_\mathcal{H}(\{v_k\}) \) is complete, implying that \( v_{k-1} \) and \( v_{k+1} \) are connected by an undirected edge in \( \mathcal{H} \) and their neighborhoods are both complete. Consequently, we can replace the triple \( v_{k-1} \leftarrow v_k \rightarrow v_{k+1} \) on \( \pi \) with \( v_{k-1} \rightarrow v_{k+1} \). The new path remains \( \sigma \)-inducing since \( v_{k+1} \) retains the same edge mark, and \( v_{k-1} \) now serves as an unblockable non-collider on the new path, regardless of whether it was a collider or a non-collider on \( \pi \). This is contradictory, as we obtain a shorter \( \sigma \)-inducing path than \( \pi \).

If \( v_{k-1} \suh v_k \) is present in \( \mathcal{H} \), then by the construction of \( \mathcal{G} \), we also have \( v_{k-1} \suh v_k \) in \( \mathcal{G} \), and by Lemma~\ref{oldDef}, \( \mathcal{G} \) contains the edge \( v_{k-1} \suh v_{k+1} \). Thus, we can replace the triple \( v_{k-1} \suh v_k \tuh v_{k+1} \) on \( \pi \) with \( v_{k-1} \suh v_{k+1} \), obtaining a shorter \( \sigma \)-inducing path than \( \pi \), which is contradictory. Now, consider the last case where \( v_{k-1} \rightarrow v_k \) on \( \pi \) and \( v_{k-1} \tut v_k \) in \( \mathcal{H} \). Similarly as for the case $v_{k-1}\hut v_k\tuh v_{k+1}$ on $\pi$, \( v_{k-1} \) and \( v_{k+1} \) are connected by an undirected edge in \( \mathcal{H} \) and their neighborhoods are both complete, so $\mathcal{G}$ contains the directed edge $v_{k-1}\tuh v_{k+1}$. Replacing \( v_{k-1} \rightarrow v_k \rightarrow v_{k+1} \) on \( \pi \) with \( v_{k-1} \rightarrow v_{k+1} \) yields a shorter \( \sigma \)-inducing path, which is contradictory. Thus, all non-end nodes on \( \pi \) must be colliders.

Given that \( \pi \) is a shortest \( \sigma \)-inducing path with the smallest collider distance to \( \{v_0, v_n\} \cup \mathcal{S} \), all edges except possibly for the two end-edges on it are also present in \( \mathcal{H} \), because they are bidirected. There are two cases:

\begin{enumerate}    
    \item Suppose \( \pi \) has more than one collider. Then the two end-edges must also be present in \( \mathcal{H} \); otherwise, W.L.O.G., assume \( v_0 \tut v_1 \huh v_2 \) appears in $\mathcal{H}$. By definition, this triple on $\pi$ can be replaced with \( v_0 \huh v_2 \), leading to a shorter \( \sigma \)-inducing path, contradicting minimality. Furthermore, each collider on \( \pi \) is not in \( \anc_\mathcal{G}(\mathcal{S}) \) (as they have arrowheads pointing toward themselves in \( \mathcal{H} \)) but is in \( \anc_\mathcal{G}(\{v_0, v_n\}) \). Thus, for any collider \( v_k \) in the triple \( v_{k-1} \suh v_k \hus v_{k+1} \) on \( \pi \), there exists a directed path from \( v_k \) to \( v_0 \) or \( v_n \) in \( \mathcal{G} \). W.L.O.G., suppose there exists a shortest directed path \( v_k \rightarrow q \rightarrow \cdots \rightarrow v_0 \) from $v_k$ to $v_0$ in \( \mathcal{G} \). The corresponding path in \( \mathcal{H} \) is \( v_k \tus q \tus \cdots \tus v_0 \). If \( v_k - q \) is present in \( \mathcal{H} \), then by definition, \( \mathcal{H} \) also contains the path \( v_{k-1}\suh q \hus v_{k+1} \). Replacing the subpath of \( \pi \) between \( v_{k-1} \) and \( v_{k+1} \) with this path results in a \( \sigma \)-inducing path with a smaller collider distance sum to \( \{v_i, v_j\} \cup \mathcal{S} \), which is contradictory. Thus, we have \( v_k \rightarrow q \) in \( \mathcal{H} \). Then, by Lemma~\ref{lemmaant}, there exists a directed path from \( v_k \) to \( v_0 \) in \( \mathcal{H} \). Notice that we also consider the case where there is a path consisting of a single edge \( v_k \rightarrow v_0 \) in \( \mathcal{G} \). If \( v_k \tut v_0 \) is present in \( \mathcal{H} \), then by Lemma~\ref{oldDef} $\mathcal{H}$ also contains the edge $v_0\hus v_{k+1}$, implying that the subpath of \( \pi \) from \( v_0 \) to \( v_{k+1} \) can be replaced by \( v_0 \hus v_{k+1} \), contradicting the minimality of \( \pi \). Hence, \( v_k \rightarrow v_0 \) must be present in \( \mathcal{H} \).

    \item Suppose \( \pi \) is \( v_0 \suh v_1 \hus v_2 \) with \( v_n = v_2 \). If either of the edges \( v_0 \suh v_1 \) or $v_1\hus v_2$ is present in \( \mathcal{H} \), then the other must also be present in $\mathcal{H}$, since if \( v_0 \tut v_1 \) or \( v_1 \tut v_2 \) is in \( \mathcal{H} \), Lemma~\ref{oldDef} ensures that \( v_0 \) and \( v_2 \) must be adjacent, which is contradictory. Moreover, if \( v_0 \tut v_1 \tut v_2 \) exists in \( \mathcal{H} \), then by the construction of \( \mathcal{G} \), \( v_0 \) and \( v_2 \) must be connected by an undirected edge in \( \mathcal{H} \), again leading to a contradiction. Therefore, we only need to consider the case where the triple \( v_0 \suh v_1 \hus v_2 \) is explicitly present in \( \mathcal{H} \). Additionally, by the second property, we have $v_1\notin\anc_\mathcal{G}(\{v_0,v_2\}\cup\mathcal{S})$, which contradicts the assumption that $\pi$ is $\sigma$-inducing given $\mathcal{S}$.
\end{enumerate}

Hence, $\pi$ must contain more than one collider, and all colliders on \( \pi \) are ancestors of \( \{v_0, v_n\} \) in \( \mathcal{H} \). Consequently, \( \pi \) corresponds to an inducing path between \( v_0 \) and \( v_n \) in \( \mathcal{H} \), contradicting the maximality of \( \mathcal{H} \).

Consequently, we conclude that \( \mathcal{H} \) represents \( \mathcal{G} \) given \( \mathcal{S} \).
\end{proof}

\lemmazero*
\begin{proof}
\begin{enumerate}
    \item Suppose \( \mathcal{H} \) contains a directed path \( a = v_0 \tus \cdots \tus v_n = b \). Note that for all \( k = 0, 1, \ldots, n-1 \), we have \( v_k \in \anc_{\mathcal{G}}(\{v_{k+1}\} \cup \mathcal{S}) \). By induction, it follows that \( a \in \anc_{\mathcal{G}}(\{b\} \cup \mathcal{S}) \).
    \item Since \( a \) and \( b \) are adjacent in \( \mathcal{H} \), and \( \mathcal{H} \) represents \( \mathcal{G} \) given \( \mathcal{S} \), there exists a \( \sigma \)-inducing path given \( \mathcal{S} \) between \( a \) and \( b \) in \( \mathcal{G} \). Suppose, for contradiction, that all such \( \sigma \)-inducing paths given \( \mathcal{S} \) between \( a \) and \( b \) in \( \mathcal{G} \) are out of \( b \). Then, by Lemma~\ref{lemma12.2.5}, it would follow that \( b \in \anc_{\mathcal{G}}(\{a\} \cup \mathcal{S}) \), contradicting the orientation \( a \suh b \) in \( \mathcal{H} \). Therefore, there must exist a \( \sigma \)-inducing path given \( \mathcal{S} \) between \( a \) and \( b \) in \( \mathcal{G} \) that is into \( b \).
    \item Similarly, there exists a \( \sigma \)-inducing path given \( \mathcal{S} \) between \( a \) and \( b \) in \( \mathcal{G} \). Applying Lemma~\ref{lemma12.2.6}, if \( a \huh b \) is present in \( \mathcal{H} \) so that $a\notin \anc_\mathcal{G}(\{b\}\cup \mathcal{S})$, then there exists a \( \sigma \)-inducing path given \( \mathcal{S} \) in \( \mathcal{G} \) between \( a \) and \( b \) that is into both \( a \) and \( b \).
\end{enumerate}
\end{proof}

\lemmatwo*
\begin{proof}
By the assumption that \( \mathcal{H} \) represents \( \mathcal{G} \) given \( \mathcal{S} \), we know that between any two distinct nodes in \( \mathcal{H} \), there is at most one edge, and no node is adjacent to itself.

Now we show that $\mathcal{H}$ is ancestral. Suppose \( \mathcal{H} \) contains an anterior path from \( v_0 \) to \( v_n \), namely \( v_0 \tus \cdots \tus v_n \). By Lemma~\ref{lemma0}, \( v_0 \in \anc_\mathcal{G}(\{v_n\} \cup \mathcal{S}) \). If \( \mathcal{H} \) also contains an edge \( v_n \suh v_0 \), it would imply that \( v_0 \notin \anc_\mathcal{G}(\{v_n\} \cup \mathcal{S}) \), which is a contradiction. Therefore, such edges cannot exist, meaning that \( \mathcal{H} \) is ancestral.

We now continue to show that \( \mathcal{H} \) is maximal, meaning that there is no inducing path between any two distinct non-adjacent nodes. Suppose, for contradiction, that there exists an inducing path \( \pi \) between two distinct non-adjacent nodes \( v_0, v_n \in \mathcal{V} \). For every edge \( v_{k} \sus v_{k+1} \) on \( \pi \), where \( k = 0, \dots, n-1 \), there exists a \( \sigma \)-inducing path \( \mu_k \) between \( v_{k} \) and \( v_{k+1} \) given \( \mathcal{S} \) in \( \mathcal{G} \). By Lemma~\ref{lemma0}, these \( \sigma \)-inducing paths can be chosen to be into $v_k$ if the edge is \( v_{k} \hus v_{k+1} \), into \( v_{k+1} \) if the edge is \( v_{k} \suh v_{k+1} \), and both into \( v_{k} \) and \( v_{k+1} \) if the edge is \( v_{k} \huh v_{k+1} \). Now, concatenate all \( \mu_k \) in the order of the edges in \( \pi \) to form a walk \( \mu \) in \( \mathcal{G} \) between \( v_0 \) and \( v_n \). By definition, every non-endpoint node on \( \pi \) is a collider, and by the construction of \( \mu \), these nodes remain colliders on \( \mu \). Since all colliders on \( \pi \) belong to \( \anc_\mathcal{H}(\{v_0, v_n\}) \), they must also be in \( \anc_\mathcal{G}(\{v_0, v_n\} \cup \mathcal{S}) \) in \( \mathcal{G} \). Similarly, all non-endpoint colliders on any \( \mu_k \) are in \( \anc_\mathcal{G}(\{v_{k}, v_{k+1}\} \cup \mathcal{S}) \) and, therefore, in \( \anc_\mathcal{G}(\{v_0, v_n\} \cup \mathcal{S}) \). Thus, all colliders on \( \mu \) are in \( \anc_\mathcal{G}(\{v_0, v_n\} \cup \mathcal{S}) \). Additionally, all non-endpoint non-colliders on each \( \mu_k \) are unblockable, meaning all non-endpoint non-colliders on \( \mu \) are also unblockable. Hence, \( \mu \) is a \( \sigma \)-inducing walk given \( \mathcal{S} \) in \( \mathcal{G} \). Consequently, by Proposition~\ref{proposition12.2.3}, there must exist a \( \sigma \)-inducing path given \( \mathcal{S} \) in \( \mathcal{G} \) between \( v_0 \) and \( v_n \). Since \( \mathcal{H} \) represents \( \mathcal{G} \) given \( \mathcal{S} \), it follows that \( v_0 \) and \( v_n \) must be adjacent in \( \mathcal{H} \), contradicting our assumption.

The last step is to show that $\mathcal{H}$ is $\sigma$-complete. Suppose \( \mathcal{H} \) contains a triple of the form \( a\suh b\tut c \). Since \( b \in \anc_\mathcal{G}(\{c\} \cup \mathcal{S}) \) and \( b \notin \anc_\mathcal{G}(\{a\} \cup \mathcal{S}) \), we must have \( b \in \anc_\mathcal{G}(c) \). Moreover, we also know that \( c \in \anc_\mathcal{G}(\{b\} \cup \mathcal{S}) \). Since \( b \in \anc_\mathcal{G}(c) \) and \( b \notin \anc_\mathcal{G}(\mathcal{S}) \), it follows that \( c \in \anc_\mathcal{G}(b) \). Thus, \( b \) and \( c \) are in the same strongly connected component of \( \mathcal{G} \). Next, we extend a \( \sigma \)-inducing path given \( \mathcal{S} \) in \( \mathcal{G} \) between \( a \) and \( b \) that is into \( b \) (which exists by Lemma~\ref{lemma0}) by appending the directed path from \( b \) to \( c \) within \( \text{Sc}_{\mathcal{G}}(b) \), thereby forming a \( \sigma \)-inducing walk given \( \mathcal{S} \) between \( a \) and \( c \). By Proposition~\ref{proposition12.2.3}, there exists a \( \sigma \)-inducing path given \( \mathcal{S} \) between \( a \) and \( c \), which implies that \( a \) and \( c \) are adjacent. Moreover, if there exists a node \( d \in \mathcal{V} \) such that \( b - d \) is present in \( \mathcal{H} \), then by similar reasoning, from the triple \( a \suh  b - d \), we can infer that \( d \in \text{Sc}_\mathcal{G}(b) \), so \( c \) and \( d \) are in the same strongly connected component in \( \mathcal{G} \). Thus, there exists a directed path from \( c \) to \( d \) in \( \mathcal{G} \), which is also a \( \sigma \)-inducing path given \( \mathcal{S} \), implying that \( c \) and \( d \) are adjacent. Therefore, we conclude that $\mathcal{H}$ is $\sigma$-complete.
\end{proof}

\thmone*
\begin{proof}
Obviously obtained by Lemma~\ref{lemma1} and Lemma~\ref{lemma2}.
\end{proof}

\subsection{Proofs in Section~\ref{section:2}}
\walkpath*
\begin{proof}
$1\implies 2$: Suppose $\pi$ is a $Z$-$m$-open walk between $v_0=a$ and $v_n=b$ in $\mathcal{H}$. If a node $v$ appears more than once on $\pi$, let $v_i = v$ be its first occurrence and $v_j = v$ its last occurrence on $\pi$. Consider removing the subwalk between $v_i$ and $v_j$ by replacing the subwalk
\[
v_{i-1} \sus v_i \sus v_{i+1} \sus \cdots \sus v_{j-1} \sus v_j \sus v_{j+1}
\]
(where $v_{i-1}$ and $v_{j+1}$ may not exist if $v_i$ or $v_j$ is an endpoint) with the shorter path $v_{i-1} \sus v \sus v_{j+1}$.

\begin{enumerate}
    \item Suppose one of $v_i$ and $v_j$ is an endpoint of $\pi$; without loss of generality, let $v_i = v_0 = v$ be the starting node of $\pi$. Then the subwalk
    \[
    v_0 \sus v_1 \sus \cdots \sus v_{j-1} \sus v_j \sus v_{j+1}
    \]
    contains a repeated node $v = v_0 = v_j$. We can simplify this by replacing the entire subwalk from \( v_0 \) to \( v_j \) with the final edge of the subwalk, namely:
    \[
    v_0 = v_j \sus v_{j+1}.
    \]
    This replacement preserves the edge marks at $v_{j+1}$. Moreover, since $v_0$ is a non-collider and $v \notin Z$, the resulting walk remains $Z$-$m$-open.

    \item Assume $v_i$ and $v_j$ are non-endpoint nodes on $\pi$.
    \begin{enumerate}
        \item If both \( v_i \) and \( v_j \) are colliders on \( \pi \), then we can simplify the walk by replacing the subwalk between \( v_{i-1} \) and \( v_{j+1} \) with the shorter walk
        \[
        v_{i-1} \suh v_i = v_j \hus v_{j+1}.
        \]
        The resulting walk remains \( Z \)-\( m \)-open since the repeated node \( v \) is still a collider in the new path and, by assumption, is in \( \anc_\mathcal{H}(Z) \).

        \item If \( v_i \) and \( v_j \) are both non-colliders on \( \pi \), then we can attempt a similar simplification using the shorter walk \( v_{i-1} \sus v_i = v_j \sus v_{j+1} \). If the repeated node \( v \) remains a non-collider on the new path and \( v \notin Z \), then the new walk appears to be \( Z \)-\( m \)-open. However, we must account for the possibility that the shorter walk takes the form \( v_{i-1} \tut v_i = v_j \hus v_{j+1} \) or \( v_{i-1} \suh v_i = v_j \tut v_{j+1} \), which could block the path due to the edge marks. In such cases, a more careful analysis is needed to ensure the resulting walk remains \( Z \)-\( m \)-open. Without loss of generality, suppose the first case holds, i.e., \( v_{i-1} \tut v_i = v_j \hus v_{j+1} \) exists. Then the original walk \( \pi \) must contain the subwalk
        \[
        v_{k-1} \hut v_k \tut \cdots \tut v_{i-1} \tut v_i \tus \cdots \hut v_j \hus v_{j+1},
        \]
        where \( v_{k-1} \) may not exist if \( v_k = v_0 \) (i.e., the walk starts at \( v_k \)), and possibly \( v_k = v_{i-1} \).
        By Lemma~\ref{oldDef}, since \( \mathcal{H} \) contains the undirected walk \( v_k \tut \cdots \tut v_i = v_j \), it must also contain the edge \( v_k \hus v_{j+1} \). Therefore, we can simplify \( \pi \) by replacing the subwalk between \( v_k \) and \( v_{j+1} \) with the edge \( v_k \hus v_{j+1} \). This results in a shorter walk that remains \( Z \)-\( m \)-open, because \( v_k \), being a non-collider on both the original and new walks, is not in \( Z \), and the edge marks at other nodes are preserved.

        Now consider another special case in which the repeated node \( v \) becomes a collider on the new path, if we replace the subwalk of \( \pi \) between \( v_{i-1} \) and \( v_{j+1} \) with \( v_{i-1} \sus v_i = v_j \sus v_{j+1} \). In this case, we must have:
        \[
        v_{i-1} \suh v_i \tuh v_{i+1} \quad \text{and} \quad v_{j-1} \hut v_j \hus v_{j+1}.
        \]
        Then, the original walk \( \pi \) must contain a directed subwalk from \( v_i \) to some collider \( v_l \) for some \( i < l < j \) such that \( v_l \in \anc_{\mathcal{H}}(Z) \). Consequently, \( v_i \) (and thus \( v_j \)) is also an ancestor of \( Z \). Therefore, the repeated node \( v \) satisfies the collider condition for \( m \)-openness, and the new path remains \( Z \)-\( m \)-open.
                
        \item If \( v_i \) and \( v_j \) have different collider statuses on \( \pi \), without loss of generality, assume \( v_i \) is a non-collider and \( v_j \) is a collider on \( \pi \). Then we have either
        \[
        v_{i-1} \sus v_i \tus v_{i+1} \quad \text{or} \quad v_{i-1} \sut v_i \sus v_{i+1},
        \]
        and
        \[
        v_{j-1} \suh v_j \hus v_{j+1}.
        \]
        We attempt to simplify the walk by replacing the subwalk of \( \pi \) between \( v_{i-1} \) and \( v_{j+1} \) with the shorter walk $v_{i-1} \sus v_i = v_j \hus v_{j+1}$. This replacement preserves \( Z \)-\( m \)-openness because the repeated node \( v \) (i.e., \( v_i = v_j \)) was a non-collider at one occurrence and a collider at the other, satisfying both conditions: \( v \notin Z \) and \( v \in \anc_\mathcal{H}(Z) \). However, if \( \mathcal{H} \) contains the path
        \[
        v_{i-1} \tut v_i = v_j \hus v_{j+1},
        \]
        then the new walk would be blocked. In this case, similarly to the earlier scenario where both \( v_i \) and \( v_j \) are non-colliders, there exists a node \( v_k \) such that the subwalk of \( \pi \) from \( v_k \) to \( v_i \) is undirected:
        \[
        v_k \tut \cdots \tut v_i = v_j,
        \]
        and by Lemma~\ref{oldDef}, \( \mathcal{H} \) must contain the edge \( v_k \hus v_{j+1} \). Therefore, we can replace the entire subwalk of $\pi$ from \( v_k \) to \( v_{j+1} \) with the single edge \( v_k \hus v_{j+1} \), preserving the \( Z \)-\( m \)-openness of the walk.
    \end{enumerate}
\end{enumerate}

This replacement procedure can be iteratively applied until no node occurs more than once on the walk. And each replacement preserves $Z$-$m$-openness, so the final path remains $Z$-$m$-open. The resulting walk is then a $Z$-$m$-open \emph{path} between \( v_0 \) and \( v_n \).

$2 \implies 1$ is trivial since paths are walks.
\end{proof}

\lemmaseventeena*
\begin{proof}
W.L.O.G., assume $Z\cap\{a,b\}=\emptyset$. Suppose $\pi$ is in the form $a=v_0\sus\cdots\sus v_n=b$, and let \( v_k \) be an arbitrary node on the path \(\pi\). We will prove \( v_k \in \anc_\mathcal{G}(\{v_0, v_n\} \cup Z \cup \mathcal{S}) \) based on its position and properties.

If \( v_k \) is an endpoint, then \( v_k \in \{v_0, v_n\} \), which is straightforward.

If \( v_k \) is a collider, it follows that \( v_k \in \anc_\mathcal{H}(Z) \). Consequently, \( v_k \in \anc_\mathcal{G}(Z \cup \mathcal{S}) \).

Now consider the case where \( v_k \) is a non-collider. Without loss of generality, assume the subpath around \( v_k \) takes the form 
\[
v_0\sus \cdots\sus  v_{k-1}\sus v_k\tus v_{k+1}\sus \cdots\sus  v_n.
\]
If there is no edge of the form \(\hus\) on the subpath of \(\pi\) between \( v_k \) and \( v_n \), then \( v_k \in \anc_\mathcal{G}(\{v_n\} \cup \mathcal{S}) \).

Suppose there exists such an edge, with the corresponding node \( v_l \) in the configuration \( v_{l-1}\tus v_l\hus v_{l+1} \) on the subpath of \(\pi\) between \( v_k \) and \( v_n \). In this case, \( v_k \in \anc_\mathcal{G}(\{v_l\} \cup \mathcal{S}) \). Since \(\pi\) is \( Z \)-$m$-open, we have \( v_{l-1}\rightarrow v_l\hus v_{l+1} \). Then \( v_l \) is a collider, so \( v_l \in \anc_\mathcal{H}(Z) \), implying \( v_k \in \anc_\mathcal{G}(Z \cup \mathcal{S}) \).

From these cases, we conclude that every node on the path \(\pi\) belongs to \( \anc_\mathcal{G}(\{v_0, v_n\} \cup Z \cup \mathcal{S}) \).
\end{proof}

\lemmaseventeenb*
\begin{proof}
W.L.O.G., assume $Z\cap\{a,b\}=\emptyset$. Let $\pi$ be a $Z$-$m$-open path between $a$ and $b$ in $\mathcal{H}$ as follows:
$$
a=v_0\sus v_1\sus\cdots\sus v_{n-1}\sus v_n=b.
$$
For each $i = 1, \ldots, n$, let $\mu_i$ be a $\sigma$-inducing path given $\mathcal{S}$ in $\mathcal{G}$ between $v_{i-1}$ and $v_i$, where $\mu_i$ is into $v_{i-1}$ if $v_{i-1} \hus  v_i$ on $\pi$, and into $v_i$ if $v_{i-1} \suh v_i$ on $\pi$ (Lemma~\ref{lemma0} guarantees this). Concatenating all paths $(\mu_i)_{i=1,\ldots,n}$ gives a walk $\mu$. 

We aim to show that there exists a walk in $\mathcal{G}$ between $v_0$ and $v_n$ that satisfies the following properties:
\begin{enumerate}
    \item All colliders on the walk are in $\anc_\mathcal{G}(\{v_0, v_n\} \cup Z \cup \mathcal{S})$.
    \item All non-colliders on the walk are either not in $Z \cup \mathcal{S}$ or are unblockable.
\end{enumerate}
Such a walk exists because $\mu$ satisfies these properties, as we now verify.

By Lemma~\ref{lemma17a}, $v_i \in \anc_\mathcal{G}(\{v_0, v_n\} \cup Z \cup \mathcal{S})$ for all $i = 1, \ldots, n$. This holds in particular for all $v_i$ that are colliders on $\mu$. Furthermore, every non-endpoint collider on $\mu_i$ is in $\anc_\mathcal{G}(\{v_{i-1}, v_i\} \cup \mathcal{S})$, and hence also in $\anc_\mathcal{G}(\{v_0, v_n\} \cup Z \cup \mathcal{S})$. For non-colliders, observe that all non-endpoint non-colliders on $\mu_i$ are unblockable. Now consider $v_i$ that are non-colliders on $\mu$. By construction, such $v_i$ are also non-colliders on $\pi$. Since $\pi$ is $Z$-$m$-open, we have $v_i \notin Z$, and obviously $v_i \notin \mathcal{S}$. Therefore, all non-colliders are either not in $Z \cup \mathcal{S}$ or are unblockable.

Let $\nu$ be a walk satisfying the above properties, with the minimal number of colliders, and such that $v_0$ and $v_n$ appear only once on $\nu$. We claim that all colliders on $\nu$ must be in $\anc_\mathcal{G}(Z \cup \mathcal{S})$.

Suppose, for contradiction, that there exists a collider $v_k$ on $\nu$ such that $v_k \notin \anc_\mathcal{G}(Z \cup \mathcal{S})$ but $v_k \in \anc_\mathcal{G}(\{v_0, v_n\})$. Then there must exist a directed path from $v_k$ to $v_0$ that does not pass through $v_n$, or a directed path from $v_k$ to $v_n$ that does not pass through $v_0$. Without loss of generality, assume the former. Notice that every node on the directed path from $v_k$ to $v_0$ is not in $Z \cup \mathcal{S}$. Replacing the subpath of $\nu$ from $v_k$ to $v_0$ with this directed path would result in a walk with fewer colliders between $v_0$ and $v_n$ that still satisfies the conditions, contradicting the minimality of $\nu$.

Thus, all colliders on $\nu$ must be in $\anc_\mathcal{G}(Z \cup \mathcal{S})$. Consequently, $\nu$ is a $\sigma$-open walk given $Z \cup \mathcal{S}$ in $\mathcal{G}$ between $v_0$ and $v_n$. Therefore, by Proposition~\ref{proposition3.3.6}, there must exist a $Z \cup \mathcal{S}$-$\sigma$-open path between $a$ and $b$ in $\mathcal{G}$.
\end{proof}

\lemmaeighteen*
\begin{proof}
W.L.O.G., assume $Z\cap\{a,b\}=\emptyset$. This proof follows the approach used in proving Lemma 18 from \citep{spirtes1996polynomial} and incorporates the segment-based $\sigma$-separation introduced in \citep{forre2017markov} (Definition~\ref{segmentsigma}).

To begin, given the strongly connected components of $\mathcal{G}$, we pick a $\sigma$-open path $\pi$ conditioned on $Z \cup \mathcal{S}$, uniquely expressed in segment form as:
\[
\sigma_0 \sus \sigma_1 \sus \cdots \sus \sigma_{n-1} \sus \sigma_n,
\]
where $\sigma_{0,l} = a$ and $\sigma_{n,r} = b$. By the definition of segment-based $\sigma$-separation:
\begin{enumerate}
    \item $a, b \notin Z \cup \mathcal{S}$.
    \item For all non-collider segments, their endpoints corresponding to outgoing directed edges are not in $Z \cup \mathcal{S}$.
    \item For all collider segments, their nodes intersect non-trivially with $\anc_\mathcal{G}(Z \cup \mathcal{S})$.
\end{enumerate}

We construct a sequence of nodes $Q_0$ based on the segments of $\pi$. Initialize $Q_0(1) = a$. For each segment $\sigma_i$: If $\sigma_i$ is a non-collider segment, include its endpoints corresponding to outgoing directed edges in $Q_0$. If $\sigma_i$ is a collider segment and its intersection with $\anc_\mathcal{G}(\mathcal{S})$ is empty, add an arbitrary endpoint from its ends to $Q_0$. Finally, add \( b \) to \( Q_0 \) if it is not already included, and set \( Q_0(m) = b \).

We now show that for $i = 1, \dots, m - 1$, the nodes $Q_0(i)$ and $Q_0(i+1)$ are adjacent in $\mathcal{H}$. By construction, the subpath of $\pi$ between $Q_0(i)$ and $Q_0(i+1)$ includes only unblockable non-colliders or colliders that are ancestors of $\mathcal{S}$. Such a path is $\sigma$-inducing given $\mathcal{S}$, which guarantees that $Q_0(i)$ and $Q_0(i+1)$ are adjacent in $\mathcal{H}$. Consequently, $Q_0$ forms a path $\pi_0$ from $a$ to $b$ in $\mathcal{H}$:
$$
a = Q_0(1)\sus\cdots\sus Q_0(m) = b.
$$
Next, we show that if $Q_0(i)$ originates from a non-collider segment on $\pi$, it remains a non-collider on $\pi_0$. Without loss of generality, assume $Q_0(i)$ is the right endpoint of a non-collider segment, with an outgoing directed edge to the right. In this case, $Q_0(i)$ is either an ancestor of the right endpoint of the next non-collider segment or a collider segment's endpoint, which belongs to $\anc_\mathcal{G}(Z \cup \mathcal{S})$. Thus, $Q_0(i)$ and $Q_0(i+1)$ are connected by an edge $Q_0(i) \tus  Q_0(i+1)$ in $\mathcal{H}$. Moreover, if \( Q_0(i) \notin \anc_\mathcal{G}(\mathcal{S}) \), meaning that \( Q_0(i) \) has a directed edge into the segment of \( \pi \) containing \( Q_0(i+1) \), then \( Q_0(i+1) \) cannot be an ancestor of \( Q_0(i) \), as they lie in different strongly connected components. Similarly, suppose \( Q_0(i) \) is the left endpoint of a non-collider segment on \( \pi \), with an outgoing directed edge \( \hut \) to the left. Then \( Q_0(i-1) \sut Q_0(i) \) holds in \( \mathcal{H} \). Moreover, if \( Q_0(i) \notin \mathcal{S} \), it follows that \( Q_0(i-1) \notin \anc_\mathcal{G}(Q_0(i)) \).

For nodes $Q_0(i)$ originating from collider segments, we cannot guarantee that $Q_0(i)$ will remain a collider on $\pi_0$. To resolve this, we employ the following algorithm to remove problematic nodes from the sequence $Q_0$:

\begin{algorithm}
\begin{algorithmic}[1]
\STATE $j \leftarrow 0$
\REPEAT
    \STATE $I\leftarrow$ \{$0<i<m$: $Q_j(i)\in\anc_\mathcal{G}(\{Q_j(i-1),Q_j(i+1)\})$ and $Q_j(i)$ is from a collider segment on $\pi$\}
    \IF{$I\neq\emptyset$} 
        \STATE form sequence $Q_{j+1}$ from $Q_j$ by removing some $Q_j(i)$ with $i\in I$
        \STATE $j \leftarrow j + 1$, $m\leftarrow m-1$
    \ENDIF
\UNTIL{$I=\emptyset$}
\STATE $k\leftarrow j$
\end{algorithmic}
\end{algorithm}

We now show that, in each intermediate sequence \( Q_j \), every pair of consecutive nodes are adjacent in \( \mathcal{H} \), thus forming a valid path \( \pi_j \) in \( \mathcal{H} \). This property has already been shown for \( Q_0 \). Now, assume the property holds for \( Q_j \), and consider the \((j+1)\)-th step in which the node \( Q_j(i) \) is removed. In this case, we know that \( Q_j(i) \in \anc_\mathcal{G}(\{Q_j(i-1), Q_j(i+1)\}) \). The subpath of \( \pi \) between \( Q_j(i-1) \) and \( Q_j(i+1) \) is \(\sigma\)-inducing given \( \mathcal{S} \), since \( Q_j(i) \) is either a collider in \( \anc_\mathcal{G}(\{Q_j(i-1), Q_j(i+1)\}) \), or an unblockable non-collider on \( \pi \), and all other nodes on this subpath are either unblockable non-colliders or colliders in \( \anc_\mathcal{G}(\{Q_j(i-1), Q_j(i+1)\} \cup \mathcal{S}) \). Consequently, we conclude that \( Q_{j+1}(i-1) \equiv Q_j(i-1) \) and \( Q_{j+1}(i) \equiv Q_j(i+1) \) are adjacent in \( \mathcal{H} \).

We now show that if \( Q_k(i) \) comes from a non-collider segment on \( \pi \), it remains a non-collider on \( \pi_k \). Moreover, if \( Q_k(i) \notin \anc_\mathcal{G}(\mathcal{S}) \), then \( Q_k(i+1) \notin \anc_\mathcal{G}(Q_k(i)) \) if it is the right endpoint with edge \( \tuh \), and \( Q_k(i-1) \notin \anc_\mathcal{G}(Q_k(i)) \) if it is the left endpoint with edge \( \hut \). This has been proved for \( Q_0 \). Suppose the result holds for \( Q_j \). Then, if \( Q_j(i) \) is from a non-collider segment on \( \pi \), it remains a non-collider on \( \pi_j \). Furthermore, \( Q_j(i) \) retains the same edge marks on \( \pi_{j+1} \), unless \( Q_j(i)\tus Q_j(i+1) \) is present on \( \pi_j \) and \( Q_j(i+1) \) is removed at the \( (j+1) \)-th step, or \( Q_j(i-1)\sut Q_j(i) \) and \( Q_j(i-1) \) is removed at the \( (j+1) \)-th step. Without loss of generality, we assume the former case. Since \( Q_j(i)\tus Q_j(i+1) \) is in \( \mathcal{H} \), $Q_j(i)\in\anc_\mathcal{G}(\{Q_j(i+1)\}\cup\mathcal{S})$. If \( Q_j(i) \) is an ancestor of \( \mathcal{S} \), it is obviously a non-collider on \( \pi_{j+1} \). Therefore, we need only consider the case where \( Q_j(i) \notin \anc_\mathcal{G}(\mathcal{S}) \) but \( Q_j(i) \in \anc_\mathcal{G}(Q_j(i+1)) \). Since \( Q_j(i+1) \) is removed at the \( (j+1) \)-th step, we know that \( Q_j(i+1) \) is an ancestor of either \( Q_j(i) \) or \( Q_j(i+2) \). Given that \( Q_j(i+1) \notin \anc_\mathcal{G}( Q_j(i) ) \), we conclude that \( Q_j(i+1) \in \anc_\mathcal{G}( Q_j(i+2) ) \). Thus, \( Q_{j+1}(i) \equiv Q_j(i) \) is also an ancestor of \( Q_{j+1}(i+1) \equiv Q_{j}(i+2) \), meaning \( Q_{j+1}(i) \) is a non-collider on \( \pi_{j+1} \). Additionally, if \( Q_{j+1}(i) \notin \anc_\mathcal{G}(\mathcal{S}) \), we have \( Q_{j+1}(i+1) \notin \anc_\mathcal{G}( Q_{j+1}(i) ) \). Otherwise, this would imply that \( Q_j(i+1) \in \anc_\mathcal{G}( Q_j(i) ) \), which leads to a contradiction.

Given that all nodes from non-collider segments on \( \pi \) remain non-colliders on \( \pi_k \) and lie outside \( Z \cup \mathcal{S} \), we now consider nodes on \( \pi_k \) originating from collider segments. For any such node \( Q_k(i) \), we have \( Q_k(i) \notin \anc_\mathcal{G}(\{Q_k(i-1), Q_k(i+1)\}) \); otherwise, it would have been removed by the algorithm. Moreover, since all such \( Q_k(i) \) lie in collider segments on \( \pi \), we have \( Q_k(i) \in \anc_\mathcal{G}(Z) \setminus \anc_\mathcal{G}(\mathcal{S}) \). Therefore, \( Q_k(i) \) must be a collider on \( \pi_k \), and in particular, satisfies \( Q_k(i) \in \ant_\mathcal{H}(Z) \).

So far, we have shown that all non-colliders on \( \pi_k \) are not in \( Z \). To prove that \( \pi_k \) is \( Z \)-$m$-open, we only need to check all potentially present subpaths of \( \pi_k \) in the form of \( Q_k(i-1) \suh Q_k(i) \tut Q_k(i+1) \) or \( Q_k(i-1) \tut Q_k(i) \hus Q_k(i+1) \). Suppose it contained a subpath of the form \( Q_k(i-1) \suh Q_k(i) \tut Q_k(i+1) \). Then $Q_k(i),Q_k(i+1)$ must be in the same strongly connected component in $\mathcal{G}$, and both are not in $\anc_\mathcal{G}(\mathcal{S})$. Since \( Q_k(i) \) lies in a non-collider segment on \( \pi \), but we have shown that if \( Q_k(i) \notin \anc_\mathcal{G}(\mathcal{S}) \), then \( Q_k(i+1) \notin \anc_\mathcal{G}(Q_k(i)) \), this leads to a contradiction. One can show in a similar way that it cannot contain a subpath \( Q_k(i-1) \tut Q_k(i) \hus Q_k(i+1) \) either.

We now have shown the existence of a path $\pi_k$ in $\mathcal{H}$ between $a$ and $b$ that almost qualifies for being $m$-open given $Z$, except that some of its colliders may not be in $\anc_\mathcal{H}(Z)$ (yet all colliders are in $\ant_\mathcal{H}(Z)$. By Lemma~\ref{oldDef}, there exist nodes in \( \mathcal{H} \) that lie in \( \anc_\mathcal{H}(Z) \), allowing us to replace the problematic colliders on \( \pi_k \) with these nodes while preserving their collider status on the path. As a result, we obtain an \( m \)-open path \( \pi_k' \) between \( a \) and \( b \) in \( \mathcal{H} \).
\end{proof}

\thmtwo*
\begin{proof}
Obviously obtained by Lemma~\ref{lemma17b} and Lemma~\ref{lemma18}.
\end{proof}

\inducingPropertyOne*
\begin{proof}
$1 \implies 2$ is trivial.

$2 \implies 3$: Assume there exists an inducing walk between \( a \) and \( b \) in \( \mathcal{H} \). Let \( Z \subseteq  \mathcal{V} \setminus \{a,b\} \). Consider all walks in \( \mathcal{H} \) between \( a \) and \( b \) such that all colliders on the walk lie in \( \anc_{\mathcal{H}}(\{a,b\} \cup Z) \), all non-endpoint non-colliders are not in \( Z \), and there is no arrowhead into an undirected edge along the walk. Such walks exist, since the inducing walk is one of them. Let \( \mu \) be such a walk with a minimal number of colliders. We claim that all colliders on \( \mu \) must lie in \( \anc_{\mathcal{H}}(Z) \). Suppose, for contradiction, that there exists a collider \( c \) on \( \mu \) that is not an ancestor of \( Z \). By assumption, \( c \) must be an ancestor of either \( a \) or \( b \). Without loss of generality, assume \( c \in \anc_{\mathcal{H}}(a) \). Therefore, there exists a directed walk \( \pi \) from \( c \) to \( a \) in \( \mathcal{H} \) that does not pass through any node in \( Z \). Concatenating this walk with the subwalk of \( \mu \) between \( c \) and \( b \) yields another walk between \( a \) and \( b \) with the same properties but fewer colliders, contradicting the minimality of \( \mu \). Hence, all colliders on \( \mu \) must lie in \( \anc_{\mathcal{H}}(Z) \), implying that \( \mu \) is \( m \)-open given \( Z \). By Proposition~\ref{walk eq path}, this ensures the existence of a \( Z \)-\( m \)-open path between \( a \) and \( b \), and thus \( a \) and \( b \) are \( m \)-connected given \( Z \).

$3 \implies 4$ is trivial.

$4 \implies 1$: Suppose that \( a \) and \( b \) are \( m \)-connected given \( Z = \anc_{\mathcal{H}}(\{a,b\})  \setminus \{a,b\} \). Let \( \pi \) be a path between \( a \) and \( b \) that is \( m \)-open given \( Z \). We claim that \( \pi \) must be an inducing path. First, all colliders on \( \pi \) are in \( \anc_{\mathcal{H}}(Z) \), and hence must be in \( \anc_{\mathcal{H}}(\{a,b\}) \). Second, all non-endpoint nodes on \( \pi \) must be colliders. Suppose, for contradiction, that there exists a non-collider \( c \) on \( \pi \). Then there must be a directed subpath of \( \pi \) starting at \( c \) and ending either at the first collider after \( c \) or at one of the endpoints \( a \) or \( b \). This implies that \( c \in \anc_{\mathcal{H}}(\{a,b\}) \), and hence \( c \in Z \). But this contradicts the assumption that \( \pi \) is \( m \)-open given \( Z \), as it would contain a non-collider in \( Z \). Therefore, all non-endpoint nodes on \( \pi \) are colliders in \( \anc_{\mathcal{H}}(\{a,b\}) \), and thus \( \pi \) is an inducing path.
\end{proof}

\inducingPropertyTwo*
\begin{proof}
Let \( \mu \) be an inducing path between \( a \) and \( b \) in \( \mathcal{H} \). If \( \mu \) contains only the two endpoints \( a \) and \( b \), then \( \mu \) must be of the form \( a \sut b \), which implies that \( b \in \ant_{\mathcal{H}}(a) \). If \( \mu \) contains more than two nodes, then it must be of the form \( a \sus \cdots \sus c \hut b \), where \( c \) is a collider lying in \( \anc_{\mathcal{H}}(\{a,b\}) \). Since \( \mathcal{H} \) is ancestral, \( c \) cannot be an ancestor of \( b \); hence, it must be that \( c \in \anc_{\mathcal{H}}(a) \), which implies \( b \in \anc_{\mathcal{H}}(a) \).
\end{proof}

\inducingPropertyThree*
\begin{proof}
Let \( \mu \) be an inducing path between \( a \) and \( b \) in \( \mathcal{H} \) that is into \( b \). If \( \mu \) contains only the two endpoints \( a \) and \( b \), then it must be of the form \( a \suh b \). Since \( a \notin \anc_{\mathcal{H}}(b) \), \( \mu \) must be of the form \( a \huh b \). If \( \mu \) contains more than two nodes, then it must be of the form \( a \suh c \huh \cdots \huh b \), where \( c \) is a collider that lies in \( \anc_{\mathcal{H}}(\{a, b\}) \). Since \( \mathcal{H} \) is ancestral, \( c \) cannot be an ancestor of \( a \). Moreover, if \( \mu \) is not into \( a \), then \( c \) cannot be an ancestor of \( b \) either, because \( a \notin \anc_{\mathcal{H}}(b) \), which contradicts the assumption that \( \mu \) is an inducing path. Therefore, we conclude that \( \mu \) is into both \( a \) and \( b \).
\end{proof}

\subsection{Proofs in Section~\ref{section:3}}
\lemmasix*
\begin{proof}
Suppose $Z\subseteq\mathcal{V}\setminus\{a,c\}$ is a subset of nodes such that $a$ and $c$ are $m$-separated given $Z$. By the definition of discriminating paths, $v_0$ is a collider on $\pi$ and has a directed edge into $c$. Then, ${\mathcal{H}}$ contains a path:
\[
a \suh v_0 \tuh c.
\]
Since $v_0$ is a non-collider on this path, $Z$ must contain $v_0$ to $m$-block the path. Furthermore, ${\mathcal{H}}$ contains another path:
\[
a \suh v_0 \huh v_1 \tuh c.
\]
Since $v_0 \in Z$, we have $v_0 \in \anc_{\mathcal{H}}(Z)$. Moreover, $v_1$ is a non-collider on this path, implying that $Z$ must also contain $v_1$. By induction, all nodes $v_k$ for $k=0,\dots,n$ belong to $Z$. Consequently, all nodes on $\pi$, except for $b$, do not $m$-block $\pi$, so $b$ must $m$-block $\pi$. Thus, we conclude:
\begin{enumerate}
    \item If $b$ is a \emph{collider} on $\pi$, then $b \notin Z$.
    \item If $b$ is a \emph{non-collider} on $\pi$, then $b \in Z$.
\end{enumerate}
\end{proof}

\thmthreea*
\begin{proof}
Suppose that $\mathcal{H}_1$ and $\mathcal{H}_2$ do not satisfy Condition~\ref{condition}. Assume that $a$ and $b$ are adjacent in $\mathcal{H}_1$ but not in $\mathcal{H}_2$. Since the edge between $a$ and $b$ in $\mathcal{H}_1$ is an inducing path, by Proposition~\ref{inducing property one}, it follows that $a \overset{m}{\underset{\mathcal{H}_1}{\not\perp}} b \mid Z$ for $Z = \anc_{\mathcal{H}_1}(\{a,b\}) \setminus \{a,b\}$. As $\mathcal{H}_1$ and $\mathcal{H}_2$ are $m$-Markov equivalent, we also have $a \overset{m}{\underset{\mathcal{H}_2}{\not\perp}} b \mid Z$ for the same set $Z$. Applying Proposition~\ref{inducing property one} again, this implies the existence of an inducing path between $a$ and $b$ in $\mathcal{H}_2$, contradicting the maximality of $\mathcal{H}_2$. Therefore, $\mathcal{H}_1$ and $\mathcal{H}_2$ must have the same adjacencies.

Now assume that \( v_k \) is an unshielded collider of the form \( v_{k-1} \suh v_k \hus v_{k+1} \) in \( \mathcal{H}_1 \), but not in \( \mathcal{H}_2 \). Since \( v_{k-1} \) and \( v_{k+1} \) are non-adjacent in \( \mathcal{H}_2 \), there exists no inducing path between them. By Proposition~\ref{inducing property one}, we have \(v_{k-1} \overset{m}{\underset{\mathcal{H}_2}{\perp}} v_{k+1} \mid Z,\) where \( Z = \anc_{\mathcal{H}_2}(\{v_{k-1}, v_{k+1}\}) \setminus \{v_{k-1}, v_{k+1}\} \). Since \( \mathcal{H}_1 \) and \( \mathcal{H}_2 \) are \( m \)-Markov equivalent, it follows that \(v_{k-1} \overset{m}{\underset{\mathcal{H}_1}{\perp}} v_{k+1} \mid Z\) holds for the same set \( Z \). In \( \mathcal{H}_2 \), the path between \( v_{k-1} \) and \( v_{k+1} \) passes through \( v_k \) as a non-collider (either \( v_{k-1} \sus v_k \tus v_{k+1} \) or \( v_{k-1} \sut v_k \sus v_{k+1} \)), which implies that \( v_k \in Z \). However, in \( \mathcal{H}_1 \), \( v_k \) is a collider on the path \( v_{k-1} \suh v_k \hus v_{k+1} \), which implies \( v_k \notin Z \). This contradiction shows that \( \mathcal{H}_1 \) and \( \mathcal{H}_2 \) must have the same unshielded colliders.

Finally, assume \( \pi \) is a discriminating path between \( a \) and \( c \) for a node \( b \) in \( \mathcal{H}_1 \), and let \( \pi' \) be the corresponding path in \( \mathcal{H}_2 \), which is also a discriminating path for \( b \). By similar reasoning, we have $a \overset{m}{\underset{\mathcal{H}_1,\mathcal{H}_2}{\not\perp}} c \mid Z$, for \( Z = \anc_{\mathcal{H}_1}(\{a, c\}) \setminus \{a, c\} \), since \( a \) and \( c \) are not adjacent. If \( b \) is a collider on \( \pi \) but not on \( \pi' \), then by Lemma~\ref{lemma6}, it follows that \( b \notin Z \) in \( \mathcal{H}_1 \), whereas \( b \in Z \) in \( \mathcal{H}_2 \). This contradiction implies that \( b \) is a collider on \( \pi \) if and only if it is a collider on \( \pi' \).
\end{proof}


\lemmanine*
\begin{proof}
This proof is inspired by the Lemma $9$ in \citep{spirtes1996polynomial}, with modifications to adapt it to \( \sigma \)-MAGs.

If \( v_j \) is a covered node on \( \pi \), then \( \mathcal{H} \) must contain one of the six subgraphs described in Lemma~\ref{lemma8}. In particular, if \( v_{j-1} \rightarrow v_{j+1} \), then \( \pi \) contains the subgraph Figure~\ref{lemma9.1}. Similarly, if \( v_{j-1} \hut v_{j+1} \), then \( \pi \) contains the subpath Figure~\ref{lemma9.2}. Since these two cases are symmetric, we assume W.L.O.G. the former.

\begin{figure}[t]
\centering
\subfloat[One possible case when \( v_{j-1} \rightarrow v_{j+1} \) \label{lemma9.1}]{
\begin{tikzpicture}
    \begin{scope}
      \node[ndout] (-2) at (-3,0) {$v_{j-2}$};
      \node[ndout] (-1) at (-1,0) {$v_{j-1}$};
      \node[ndout] (0) at (1,0) {$v_j$};
      \node[ndout] (1) at (3,0) {$v_{j+1}$};
      \draw[suh] (-2) edge (-1);
      \draw[hus] (-1) edge (0);
      \draw[suh] (0) edge (1);
      \draw[tuh, bend left = 40] (-1) edge (1);
    \end{scope}
  \end{tikzpicture}
}\\

\subfloat[The other possible case when \( v_{j-1} \hut v_{j+1} \) \label{lemma9.2}]{
\begin{tikzpicture}
    \begin{scope}
      \node[ndout] (-1) at (-3,0) {$v_{j-1}$};
      \node[ndout] (0) at (-1,0) {$v_{j}$};
      \node[ndout] (1) at (1,0) {$v_{j+1}$};
      \node[ndout] (2) at (3,0) {$v_{j+2}$};
      \draw[hus] (-1) edge (0);
      \draw[suh] (0) edge (1);
      \draw[hus] (1) edge (2);
      \draw[hut, bend left = 40] (-1) edge (1);
    \end{scope}
  \end{tikzpicture}
}
\caption{Illustration for Proposition~\ref{lemma9}.}
\end{figure}
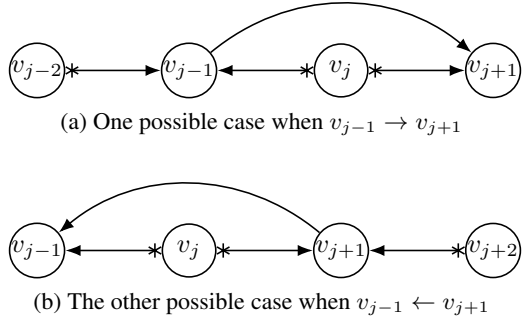

We aim to find a node \( v_k \) with $k\leq j-1$ such that the subpath of \( \pi \) between \( v_k \) and \( v_{j+1} \) consists of at least \( n \geq 2 \) edges, is into \( v_k \), all non-endpoint nodes on the subpath (except the endpoints and $v_j$) are colliders, and every node on the subpath (excluding \( v_j \) and \( v_{j+1} \)) has a directed edge into \( v_{j+1} \) in \( \mathcal{H} \). Such nodes do exist, as the subpath of \( \pi \) between \( v_{j-1} \) and \( v_{j+1} \) satisfies these conditions for \( n = 2 \). We will show that the subpath of \( \pi \) between \( v_{k-1} \) and \( v_{j+1} \) is either a discriminating path for \( v_j \), or a subpath with \( n+1 \) edges that satisfies the required conditions, in which \( q_{k-1} \) is a collider on \( \pi \) and \( \mathcal{H} \) contains the edge \( q_{k-1} \tuh q_{j+1} \).

By Lemma~\ref{lemma7}, \( v_k \) and \( v_{j+1} \) are adjacent in \( \mathcal{H} \) but not on \( \pi \), and we have \( v_k \rightarrow v_{j+1} \), so there must be an edge \( v_{k-1} \suh v_k \) on \( \pi \). Notice that all non-endpoint nodes between \( v_{k-1} \) and \( v_j \) are colliders on \( \pi \) and have directed edges into \( v_{j+1} \). If the subpath of \( \pi \) between \( v_{k-1} \) and \( v_{j+1} \) is not a discriminating path for \( v_j \), then \( v_{k-1} \) and \( v_{j+1} \) must be adjacent. Applying Lemma~\ref{lemma7} again, the edge between \( v_{k-1} \) and \( v_{j+1} \) must be directed. If we have \( v_{k-1} \leftarrow v_{j+1} \) in \( \mathcal{H} \), then a directed cycle or an almost directed cycle would exist, such as \( v_k \rightarrow v_{j+1} \rightarrow v_{k-1} \suh v_k \), which leads to a contradiction. Therefore, we conclude that \( v_{k-1} \rightarrow v_{j+1} \), and there is an edge \( v_{k-2} \suh v_{k-1} \huh  v_k \) on \( \pi \).

If the subpath of \( \pi \) between \( v_{k-1} \) and \( v_{j+1} \) is not a discriminating path for \( v_j \), then the subpath between \( v_{k-2} \) and \( v_{j+1} \) might be a discriminating path for \( v_j \). If it is not, we move to the next node on the left. Notice that if the subpath of \( \pi \) between \( v_1 \) and \( v_{j+1} \) satisfies the conditions above, then \( v_0 \) is not adjacent to \( v_{j+1} \). Otherwise, by Lemma~\ref{lemma7}, the edge between \( v_0 \) and \( v_{j+1} \) must be \( v_0 \leftarrow v_{j+1} \), which would create a directed or almost directed cycle \( v_1 \rightarrow v_{j+1} \rightarrow v_0 \suh v_1 \) in \( \mathcal{H} \). Since \( \pi \) is of finite length, it follows that there must exist a node \( v_i \) between \( v_0 \) and \( v_j \) such that the subpath of \( \pi \) between \( v_i \) and \( v_{j+1} \) is a discriminating path for \( v_j \).

We will show that this discriminating path is unique. Let \( \pi_i \) denote this path. No subpath of \( \pi_i \) can be a discriminating path for \( v_j \) since all nodes on \( \pi_i \), except its endpoints, are adjacent to \( v_{j+1} \). Moreover, no subpath of \( \pi \) that contains \( \pi_i \) can be a discriminating path for \( v_j \) since \( v_i \) is not adjacent to \( v_{j+1} \). Therefore, we conclude that \( \pi_i \) is unique.
\end{proof}

\lemmaseven*
\begin{proof}
This proof is inspired by the Lemma $7$ in \citep{spirtes1996polynomial}, with modifications to adapt it to \( \sigma \)-MAGs.

Since $\pi$ is a shortest $m$-open path between $v_0$ and $v_n$ given $Z$, it follows that $\pi'$ is $m$-blocked by $Z$. Therefore, one of the following must hold:
\begin{enumerate}
    \item There exists a subpath of the form:
    \begin{itemize}
        \item \( v_{i-1} \suh v_i - v_j \), or
        \item \( v_{i-1} - v_i \hus  v_j \), or
        \item \( v_i - v_j \hus  v_{j+1} \), or
        \item \( v_i \suh v_j - v_{j+1} \).
    \end{itemize}
    \item One of \( v_i \) or \( v_j \) is either a non-collider on \( \pi' \) that lies in \( Z \), or a collider on \( \pi' \) that does not belong to \( \anc_H(Z) \).
\end{enumerate}

Assume one of the above subpaths is present on \( \pi' \). Suppose we have \( v_{i-1} \suh v_i - v_j \) on \( \pi' \). If \( v_j \) is an endpoint of \( \pi' \), then we can substitute \( v_{i-1} \suh v_i - v_j \) with \( v_{i-1} \suh v_j \) and obtain a shorter \( Z \)-$m$-open path, leading to a contradiction. If we have \( v_j \hus  v_{j+1} \) on \( \pi' \), then we can substitute \( v_{i-1} \suh v_i - v_j \hus  v_{j+1} \) with \( v_{i-1} \suh v_i \hus  v_{j+1} \), which results in a \( Z \)-$m$-open path because \( v_i \) is either a collider or has a directed subpath into a collider on \( \pi \). If we have \( v_j \tus  v_{j+1} \) on \( \pi' \), then we can substitute \( v_{i-1} \suh v_i - v_j \tus  v_{j+1} \) with \( v_{i-1} \suh v_j \tus  v_{j+1} \), which results in a shorter \( Z \)-$m$-open path because \( v_j \) is a non-collider on $\pi$ and \( \pi' \). The case \( v_i - v_j \hus  v_{j+1} \) is analogous. Suppose we have \( v_{i-1} - v_i \hus  v_j \) on \( \pi' \). Since there exists a node \( v_k \) on \( \pi' \) such that \( v_{k-1} \leftarrow v_k - \cdots - v_i \hus  v_j \) (where \( v_{k-1} \) may not appear), we can substitute \( v_k - \cdots - v_i \hus  v_j \) with \( v_k \hus  v_j \), which produces a shorter \( Z \)-$m$-open path since \( v_k \) is a non-collider on \( \pi \). The case \( v_i \suh v_j - v_{j+1} \) is similar. Hence, one of \( v_i \) and \( v_j \) must block \( \pi' \).

So the above subpaths cannot exist on \( \pi' \). If \( v_i \) is a non-collider on \( \pi \) but not on \( \pi' \), then we have \( v_{i-1} \suh v_i \rightarrow v_{i+1} \) on \( \pi \) and \( v_i \hus  v_j \) on $\pi'$. If there is no collider on the subpath between \( v_i \) and \( v_j \) on \( \pi \), then \( v_i \) is an ancestor of \( v_j \), which leads to a directed cycle or an almost directed cycle in \( \mathcal{H} \) since we have \( v_i \hus  v_j \). It follows that there must be a collider on the subpath, and \( v_i \) is an ancestor of the first collider on the subpath, implying \( v_i \in \anc_\mathcal{H} (Z) \). Hence, \( v_i \) does not block \( \pi' \). 

Similarly, if \( v_j \) is a non-collider on \( \pi \) but not on \( \pi' \), then \( v_j \) does not block \( \pi' \). Therefore, either \( v_i \) is a collider on \( \pi \) but not on \( \pi' \), or \( v_j \) is a collider on \( \pi \) but not on \( \pi' \). Hence, we have \( v_{i-1}\suh v_i \hus  v_{i+1} \) on \( \pi \) and \( v_i \rightarrow v_j \) in \( \mathcal{H} \), or we have \( v_{j-1} \suh v_j\hus v_{j+1} \) on \( \pi \) and \( v_i \leftarrow v_j \) in \( \mathcal{H} \).
\end{proof}

\lemmaeight*
\begin{proof}
Directly applying Lemma~\ref{lemma7}, and noting that \( \mathcal{H} \) does not contain a directed cycle or an almost directed cycle, we obtain the remaining six cases. 
\end{proof}


\lemmaten*
\begin{proof}
This proof is inspired by the Lemma 10 in \citep{spirtes1996polynomial}, with modifications to adapt it to \( \sigma \)-MAGs.

Suppose, for the sake of contradiction, that \( v_b \) is a covered node on the discriminating subpath of \( \pi \) between \( v_i \) and \( v_k \) for \( v_j \) (denoted by \( \pi_j \), where \( v_j \) is adjacent to \( v_k \)), and that \( v_j \) is a covered node on the discriminating subpath of \( \pi \) between \( v_a \) and \( v_c \) for \( v_b \) (denoted by \( \pi_b \), where \( v_b \) is adjacent to \( v_c \)).

W.L.O.G., assume that $a < b = c - 1$. Since $v_b$ and $v_j$ are two distinct non-endpoint nodes on $\pi_j$ and $\pi_b$, respectively, we must have the ordering $k+1 = j < i$. Otherwise, a contradiction arises: namely, $i < b < j$ and $a < j < b$ cannot simultaneously hold. Therefore, the paths $\pi_b$ and $\pi_j$ are illustrated in Figures~\ref{lemma10.1} and \ref{lemma10.2}.

\begin{figure}[t]
    \centering
    \subfloat[$v_j$ on the path $\pi_b$\label{lemma10.1}]{
    \begin{tikzpicture}
            \node[ndout] (va) at (-4,0) {$v_a$};
            \node (dots1) at (-2.5,0) {$\cdots$};
            \node[ndout] (vj) at (-1,0) {$v_j$};
            \node (dots2) at (0.5,0) {$\cdots$};
            \node[ndout] (vb) at (2,0) {$v_b$};
            \node[ndout] (vc) at (3.5,0) {$v_c$ };
            \draw[suh] (va) edge (dots1);
            \draw[huh] (dots1) edge (vj);
            \draw[huh] (vj) edge (dots2);
            \draw[hus] (dots2) edge (vb);
            \draw[sus] (vb) edge (vc);
            \draw[tuh, bend left=20] (vj) edge (vc);
    \end{tikzpicture}
    }\\

    \subfloat[$v_b$ on the path $\pi_j$\label{lemma10.2}]{
    \begin{tikzpicture}
            \node[ndout] (vk) at (-4,0) {$v_k$};
            \node[ndout] (vj) at (-2.5,0) {$v_j$};
            \node (dots1) at (-1,0) {$\cdots$};
            \node[ndout] (vb) at (0.5,0) {$v_b$};
            \node (dots2) at (2,0) {$\cdots$};
            \node[ndout] (vi) at (3.5,0) {$v_i$};
            \draw[sus] (vk) edge (vj);
            \draw[suh] (vj) edge (dots1);
            \draw[huh] (dots1) edge (vb);
            \draw[huh] (vb) edge (dots2);
            \draw[hus] (dots2) edge (vi);
            \draw[hut, bend left = 20] (vk) edge (vb);
    \end{tikzpicture}
    }\\

    \subfloat[Subpath of $\pi$ between $v_a$ ($v_k$) and $v_c$ ($v_i$)\label{lemma10.3}]{
    \begin{tikzpicture}
        \begin{scope} 
            \node[ndout, minimum size=1cm] (va) at (-4,0) {$v_a$ ($v_k$)};
            \node[ndout, minimum size=1cm] (vj) at (-2,0) {$v_j$};
            \node (dots) at (0,0) {$\cdots$};
            \node[ndout, minimum size=1cm] (vb) at (2,0) {$v_b$};
            \node[ndout, minimum size=1cm] (vc) at (4,0) {$v_c$ ($v_i$)};
            \draw[suh] (va) edge (vj);
            \draw[huh] (vj) edge (dots);
            \draw[huh] (dots) edge (vb);
            \draw[hus] (vb) edge (vc);
            \draw[tuh, bend left = 20] (vj) edge (vc);
            \draw[hut, bend right = 20] (va) edge (vb);
        \end{scope}
    \end{tikzpicture}
    }
    \caption{Illustrations for Lemma~\ref{lemma10}.}
\end{figure}
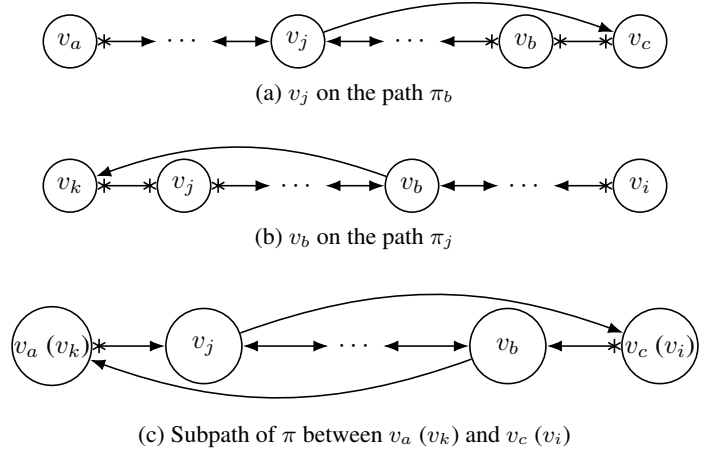

Since \( v_b \) is a non-endpoint node on \( \pi_j \) and distinct from \( v_j \), and since \( v_b \) and \( v_c \) are adjacent on \( \pi \), it follows that \( v_c \) also lies on \( \pi_j \). Consequently, we must have \( v_c \neq v_k \). Moreover, if \( v_c = v_j \), then \( v_j \) would be the endpoint of \( \pi_b \), $v_k$, without being a covered node, contradicting the assumption. Thus, we conclude that \( v_c \neq v_j \).  

Now, suppose that \( v_c = v_i \). Since \( v_j \) must be a non-endpoint node between \( v_a \) and \( v_b \) on \( \pi_b \) (where \( v_j \neq v_a \) and \( v_j \neq v_b \)), and since \( v_k \) is adjacent to \( v_j \), it follows that \( v_k \) lies between \( v_a \) and \( v_b \) on \( \pi_b \) ($v_k\neq v_b$). Moreover, because all nodes between \( v_a \) and \( v_b \), except for \( v_a \), are adjacent to \( v_c = v_i \), we must have \( v_k = v_a \); otherwise, \( v_k \) would be adjacent to \( v_i \), contradicting the assumption that \( \pi_j \) is a discriminating path. Thus, we have \( v_c = v_i \) and \( v_k = v_a \) simultaneously. Then the subpath of $\pi$ between $v_a$ ($v_k$) and $v_c$ ($v_i$) is in the form of Figure~\ref{lemma10.3}.

Consequently, all nodes between \( v_a \) and \( v_c \) on \( \pi \) are colliders that belong to the ancestors of \( \{v_a, v_c\} \). This implies the existence of an inducing path between \( v_a \) and \( v_c \) in \( \mathcal{H} \), contradicting the maximality of \( \sigma \)-MAGs.  

Since \( v_c \) lies on \( \pi_j \) but is distinct from \( v_i, v_j, \) and \( v_k \), it follows that \( v_c \rightarrow v_k \) in \( \mathcal{H} \). Similarly, we obtain \( v_k \rightarrow v_c \) in \( \mathcal{H} \), leading to a directed cycle in \( \mathcal{H} \), which is a contradiction.
\end{proof}

\lemmaeleven*
\begin{proof}
This proof follows from Lemma 11 in \citep{spirtes1996polynomial}, which we reproduce here for the reader's convenience.

Suppose, for the sake of contradiction, that there exists such a triple satisfying all conditions. Let \( \pi_j \) and \( \pi_k \) be the discriminating subpaths of \( \pi \) for \( v_j \) and \( v_k \), respectively. Since \( v_i \) is a covered node on \( \pi_j \), every node between \( v_i \) and \( v_j \) must also lie on \( \pi_j \). Consequently, \( v_k \) is also on \( \pi_j \). Since both \( v_i \) and \( v_j \) are non-endpoint nodes on \( \pi_j \), it follows that \( v_k \) is also a non-endpoint node on \( \pi_j \). Thus, \( v_{k-1} \) and \( v_{k+1} \), which are adjacent to \( v_k \) on $\pi$, must both be on \( \pi_j \), meaning that \( v_k \) remains a covered node on \( \pi_j \). By hypothesis, \( v_j \) is a covered node on \( \pi_k \), which contradicts Lemma~\ref{lemma10}.  
\end{proof}

\lemmatwelve*
\begin{proof}
This proof follows from Lemma 12 in \citep{spirtes1996polynomial}, which we reproduce here for the reader's convenience.

Suppose, for the sake of contradiction, that there exists such a quadruple satisfying all conditions. Let \( \pi_j \) and \( \pi_l \) be the discriminating subpaths of \( \pi \) for \( v_j \) and \( v_l \), respectively. Since \( v_l \) lies between \( v_i \) and \( v_j \), and \( v_i \) is on \( \pi_j \), it follows that \( v_l \) is also a non-endpoint node on \( \pi_j \). Moreover, since \( v_l \) is a covered node on \( \pi \), both of the nodes adjacent to \( v_l \) must be on \( \pi_j \), implying that \( v_l \) is a covered node on \( \pi_j \). Similarly, \( v_j \) is a covered node on \( \pi_l \), contradicting Lemma~\ref{lemma10}.
\end{proof}


\lemmathirteen*
\begin{proof}
This proof follows the same structure as Lemma 13 in \citep{spirtes1996polynomial}, with modifications to adapt it to the framework of \( \sigma \)-MAGs.

Suppose, for the sake of contradiction, that there exists such a sequence of nodes satisfying all conditions. Without loss of generality, assume that \( v_0 \) is to the right of \( v_n \) on \( \pi \). Let \( k \) be the largest index such that \( v_k \) is to the right of \( v_0 \) on \( \pi \), if such a node exists; otherwise, let \( k = 0 \). Now, we will show that \( v_{k+1} \) ($k<n$) is to the left of \( v_n \). If \( k = 0 \), then every other node in the sequence lies to the left of \( v_0 \) on \( \pi \), so \( v_1 \) is to the left of \( v_0 \). By Lemma~\ref{lemma10}, we know that \( v_1 \neq v_n \). Furthermore, by Lemma~\ref{lemma11}, \( v_1 \) is not between \( v_0 \) and \( v_n \), which implies that \( v_1 = v_{k+1} \) is on the left side of \( v_n \) on \( \pi \). If \( k \neq 0 \), then we must have \( v_{k+1} \neq v_n \), as otherwise, this would contradict Lemma~\ref{lemma11}. Moreover, applying Lemma~\ref{lemma12}, we conclude that \( v_{k+1} \) is not between \( v_0 \) and \( v_n \). Hence, \( v_{k+1} \) lies to the left of \( v_n \) on \( \pi \).  

Now, we will show that there exists a node \( v_l \) with \( l \geq k+1 \) on \( \pi \) such that \( v_l \) is to the right of \( v_k \), leading to a contradiction since \( k \) is the largest index for which \( v_k \) is to the right of \( v_0 \). By Lemma~\ref{lemma11}, \( v_{k+2} \) is not between \( v_{k+1} \) and \( v_k \) on \( \pi \), implying that \( v_{k+2} \neq v_n \) (\( k+2 \leq n \), as we have already established that \( v_{k+1} \neq v_n \)). If \( v_{k+2} \) is to the right of \( v_k \), then we are done. Otherwise, consider the case where \( v_{k+2} \) is to the left of \( v_{k+1} \) on \( \pi \). Since \( v_n \) is to the right of \( v_{k+1} \) and $k+2<m$, there must exist a node \( v_l \) with \( l \geq k+3 \) such that \( v_{l-1} \) is to the left of \( v_{k+1} \) and \( v_l \) is to the right of \( v_{k+1} \) on \( \pi \). By Lemma~\ref{lemma12}, \( v_l \) is not between \( v_{k+1} \) and \( v_k \), implying that \( v_l \neq v_n \). Thus, \( v_l \) is to the right of \( v_k \) on \( \pi \), leading to a contradiction. 
\end{proof}


\lemmafifteen*
\begin{proof}
This proof is inspired by the Lemma 15 in \citep{spirtes1996polynomial}, with modifications to adapt it to \( \sigma \)-MAGs.

By definition, in \( \mathcal{H}_1 \), \( a \) is not adjacent to \( c \), \( v_0 \) is a collider on \( \pi \), and \( v_0 \) is an unshielded non-collider on \( a \suh v_0 \rightarrow c \). Since \( \mathcal{H}_1 \) and \( \mathcal{H}_2 \) have the same adjacencies, it follows that in \( \mathcal{H}_2 \), \( a \) is also not adjacent to \( c \). Additionally, by hypothesis, \( v_0 \) remains a collider on \( \pi' \), and \( v_0 \) is an unshielded non-collider on \( a \suh v_0 \tus  c \), as \( \mathcal{H}_1 \) and \( \mathcal{H}_2 \) share the same unshielded colliders. The edge between \( v_0 \) and \( c \) cannot be undirected; otherwise, \( a \) and \( c \) would be adjacent by definition. Thus, we conclude that \( v_0 \rightarrow c \) in \( \mathcal{H}_2 \).  

Now, suppose that for \( 0 \leq i \leq j-1 \) with \(1\leq j \leq n \), we have \( v_i \rightarrow c \) in \( \mathcal{H}_2 \), and by hypothesis, \( v_i \) is a collider on \( \pi' \). Let \( \pi_j' \) be the concatenation of the subpath of \( \pi' \) between \( a \) and \( v_j \) and the edge between \( v_j \) and \( c \) in \( \mathcal{H}_2 \) (since \( \mathcal{H}_1 \) and \( \mathcal{H}_2 \) have the same adjacencies, \( v_j \) and \( c \) must be adjacent). Every node on \( \pi_j' \) between \( v_0 \) and \( v_{j-1} \) is a collider and a parent of \( c \) in \( \mathcal{H}_2 \), implying that \( \pi_j' \) is a discriminating path for \( v_j \) in \( \mathcal{H}_2 \). Let \( \pi_j \) be the corresponding path in \( \mathcal{H}_1 \); then \( \pi_j \) is also a discriminating path for \( v_j \), and \( v_j \) is a non-collider on \( \pi_j \) (since we have \( v_j \rightarrow c \) on \( \pi_j \)). Hence, \( v_j \) must also be a non-collider on \( \pi_j' \) because \( v_j \) is a collider on \( \pi_j' \) in \( \mathcal{H}_2 \) if and only if it is a collider on \( \pi_j \) in \( \mathcal{H}_1 \). It follows that \( v_j \tus  c \) holds in \( \mathcal{H}_2 \), given that \( v_j \) is a collider on \( \pi' \) by hypothesis. Moreover, \( v_j - c \) is not possible; otherwise, we would have \( v_{j-1} \huh  c \), which is contradictory. Therefore, we conclude that \( v_j \rightarrow c \) in \( \mathcal{H}_2 \).  

By induction, all nodes $v_0,\ldots,v_n$ have a directed edge into \( c \) in $\mathcal{H}_2$. Given that every node on \( \pi' \), except for the endpoints and \( b \), is a collider, it follows that \( \pi' \) is a discriminating path for \( b \) in \( \mathcal{H}_2 \).
\end{proof}


\lemmasixteen*
\begin{proof}
This proof is inspired by the Lemma 16 in \citep{spirtes1996polynomial}, with modifications to adapt it to \( \sigma \)-MAGs.

If \( v_k \) is not a covered node on \( \pi \), then since \( \mathcal{H}_1 \) and \( \mathcal{H}_2 \) have the same unshielded colliders, \( v_k \) is a collider on \( \pi \) if and only if \( v_k \) is a collider on \( \pi' \).  

Suppose \( v_k \) is a covered node on \( \pi \). By Proposition~\ref{lemma9}, \( \pi \) contains a unique discriminating subpath for \( v_k \). Let \( \pi_k \) denote the discriminating subpath of \( \pi \) for \( v_k \).  

Now, suppose \( v_k \) is a zero-order covered node on \( \pi \). Then all nodes on \( \pi_k \), except for the endpoints and \( v_k \), are unshielded colliders in \( \mathcal{H}_1 \). Since \( \mathcal{H}_1 \) and \( \mathcal{H}_2 \) have the same unshielded colliders, all unshielded colliders on \( \pi_k \) remain unshielded colliders on \( \pi_k' \), the corresponding path to \( \pi_k \) in \( \mathcal{H}_2 \). Hence, by Lemma~\ref{lemma15}, \( \pi_k' \) is a discriminating path in \( \mathcal{H}_2 \). It follows that \( v_k \) is a collider on \( \pi \) if and only if \( v_k \) is a collider on \( \pi' \), by the third hypothesis of Theorem~\ref{thm3}.  

Now, suppose that for \( 0 \leq i < j \), the \( i^{\text{th}} \)-order covered nodes on \( \pi \) are oriented in the same way as on \( \pi' \). Suppose $v_k$ is a $j^\text{th}$-order covered node on $\pi$. By the induction hypothesis, all colliders on \( \pi_k \), possibly except for \( v_k \), are either not covered or are covered nodes of order less than \( j \). Consequently, these nodes remain colliders on \( \pi_k' \). Hence, by Lemma~\ref{lemma15}, \( \pi_k' \) is also a discriminating path for \( v_k \) in \( \mathcal{H}_2 \). It follows that \( v_k \) is a collider on \( \pi \) if and only if \( v_k \) is a collider on \( \pi' \).
\end{proof}


\lemmanineteen*
\begin{proof}
This proof is inspired by the Lemma 19 in \citep{spirtes1996polynomial}, with modifications to adapt it to \( \sigma \)-MAGs.

The edge between \( a \) and \( c \) cannot be undirected; otherwise, \( \mathcal{H} \) would also contain an edge \( b \rightarrow a \), which is contradictory. If the edge between \( a \) and \( c \) were \( a \leftarrow c \), the structure \( b \rightarrow c \rightarrow a \suh b \) would form a directed cycle or an almost directed cycle, which is also contradictory. Thus, we conclude that \( a \suh c \) must exist in \( \mathcal{H} \).  

If \( a \sus  c \) has a different edge mark at \( a \) than \( a \suh b \), then there are two possible cases:  
\begin{enumerate}  
    \item \( a \huh  b \) and \( a \rightarrow c \),  
    \item \( a \rightarrow b \) and \( a \huh  c \).  
\end{enumerate}  
The second case leads to an almost cycle \( a \rightarrow b \rightarrow c \huh  a \), which is contradictory. Therefore, we consider only the first case.
\end{proof}


{\renewcommand\footnote[1]{}\lemmatwenty*}
\begin{proof}
This proof follows the same structure as Lemma 20 in \citep{spirtes1996polynomial}, with modifications to adapt it to the framework of \( \sigma \)-MAGs.

If \( v_i \) is a collider on \( \pi \), then by Lemma~\ref{lemma16}, both \( \mathcal{H}_1 \) and \( \mathcal{H}_2 \) contain the structure \( v_{i-1} \suh v_i \hus  v_{i+1} \). If \( v_{i-1} \) and \( q \) are not adjacent in \( \mathcal{H}_1 \), then \( v_i \) is an unshielded non-collider on \( v_{i-1} \suh v_i \rightarrow q \) in \( \mathcal{H}_1 \), implying that \( v_i \) is also an unshielded non-collider on \( v_{i-1} \suh v_i \tus  q \) in \( \mathcal{H}_2 \). Since the edge between \( v_i \) and \( q \) cannot be undirected in \( \mathcal{H}_2 \) (otherwise, \( v_{i-1} \) and \( q \) would be adjacent in \( \mathcal{H}_2 \)), it follows that we must have \( v_i \rightarrow q \) in \( \mathcal{H}_2 \). Similarly, if \( v_{i+1} \) and \( q \) are not adjacent in \( \mathcal{H}_1 \), then we conclude that \( v_i \rightarrow q \) in \( \mathcal{H}_2 \) as well. Now, suppose that both \( v_{i-1} \) and \( v_{i+1} \) are adjacent to \( q \).  

There exists a node \( u \) on \( \pi \) between \( v_0 \) and \( v_{i-1} \) that satisfies at least one of the following conditions:  
\begin{enumerate}  
    \item[(i)] \( u \) is not adjacent to \( q \).  
    \item[(ii)] The edge between \( u \) and \( q \) is not into \( q \).  
    \item[(iii)] \( u \) has the same collider/non-collider status on \( \pi \) and on the concatenation of the subpath of \( \pi \) between \( v_0 \) and \( u \) and the edge between \( u \) and \( q \).  
\end{enumerate}  
Such a node must exist since \( v_0 \) satisfies either condition (i) if \( v_0 \) is not adjacent to \( q \) in \( \mathcal{H}_1 \), or condition (iii) if \( v_0 \) is adjacent to \( q \) in \( \mathcal{H}_1 \). Let \( v_{i-m_1} \) (where \( m_1 \geq 1 \)) be the closest such node to the left of \( v_i \). Similarly, there exists a node on \( \pi \) between \( v_{i+1} \) and \( v_n \) that satisfies at least one of conditions (i) or (ii), or the following condition:  
\begin{enumerate}  
    \item[(iii')] \( u \) has the same collider/non-collider status on \( \pi \) and on the concatenation of the subpath of \( \pi \) between \( v_n \) and \( u \) and the edge between \( u \) and \( q \).  
\end{enumerate}  
Let \( v_{i+m_2} \) (where \( m_2 \geq 1 \)) be the closest such node to the right of \( v_i \).  

We aim to show that every non-endpoint node between $v_{i - m_1}$ and $v_i$ on $\pi$ is a collider and has a directed edge into $q$. If $m_1 = 1$, this holds trivially since there are no non-endpoint nodes between \( v_{i - m_1} \) and \( v_i \). Thus, we assume \( m_1 \geq 2 \). We will show that for \( 1 \leq k \leq m_1 - 1 \), $v_{i-k}$ is a collider on $\pi$ and there exists an edge \( v_{i - k} \tuh q \) in \( \mathcal{H}_1 \). Since \( v_{i - 1} \) lies between \( v_{i - m_1} \) and \( v_i \) but does not satisfy the above conditions, $\mathcal{H}_1$ contains the edge $v_{i-1}\suh q$, and $v_{i-1}$ must have a different collider/non-collider status on $\pi$ compared to the concatenation of the subpath of $\pi$ from $v_0$ to $v_{i-1}$ together with the edge between $v_{i-1}$ and $q$. This implies that $\mathcal{H}_1$ also contains the edge $v_{i-2} \suh v_{i-1}$, and that the edge between \( v_{i - 1} \) and \( v_i \) has a different edge mark at \( v_{i - 1} \) compared to the edge between \( v_{i - 1} \) and \( q \) in \( \mathcal{H}_1 \). By Lemma~\ref{lemma19}, it follows that \( v_{i - 1} \tuh q \) and \( v_{i - 1} \huh v_i \). Thus, $v_{i-1}$ is a collider on $\pi$ and has a directed edge into $q$. This settles the case when $m_1 = 2$. Now assume $m_1 \geq 3$. Suppose that for \( 1 \leq l \leq k - 1 \), $v_{i-l}$ is a collider on $\pi$ and there exists an edge \( v_{i - l} \tuh q \) in \( \mathcal{H}_1 \). Since \( v_{i - l} \) lies between \( v_{i - m_1} \) and \( v_i \), we have \( v_{i - k} \suh q \), and \( v_{i - k} \) exhibits a different collider/non-collider status on \( \pi \) compared to the concatenation of the subpath of \( \pi \) from \( v_0 \) to \( v_{i - k} \) and the edge between \( v_{i - k} \) and \( q \). This implies that $\mathcal{H}_1$ contains the edge $v_{i - k - 1} \suh v_{i - k}$, and that the edge between \( v_{i - k} \) and \( v_{i - k + 1} \) has a different edge mark than the edge between \( v_{i - k} \) and \( q \) in \( \mathcal{H}_1 \). By Lemma~\ref{lemma19}, it follows that \( v_{i - k} \tuh q \) and \( v_{i - k - 1} \suh v_{i - k} \huh v_{i - k + 1} \). Hence, every non-endpoint node between \( v_{i - m_1} \) and \( v_i \) (if any) is a collider on \( \pi \) and has a directed edge into $q$. Similarly, every non-endpoint node between \( v_i \) and \( v_{i + m_2} \) is a collider on \( \pi \) and has a directed edge into \( q \).

Suppose \( v_{i-m_1} \) is adjacent to \( q \), and we have \( v_{i-m_1+1} \rightarrow q \) and \( v_{i-m_1} \suh v_{i-m_1+1} \) by induction. Then, by Lemma~\ref{lemma19}, it follows that \( v_{i-m_1} \suh q \). By hypothesis, \( v_{i-m_1} \) retains the same collider/non-collider status on \( \pi \) and on the concatenation of the subpath of \( \pi \) between \( v_0 \) and \( v_{i-m_1} \), along with the edge between \( v_{i-m_1} \) and \( q \) in \( \mathcal{H}_1 \). Similarly, if \( v_{i+m_2} \) is adjacent to \( q \), then \( v_{i+m_2} \) has the same collider/non-collider status on \( \pi \) and on the concatenation of the subpath of \( \pi \) between \( v_{i+m_2} \) and \( v_n \), along with the edge between \( v_{i+m_2} \) and \( q \). Then, denote the concatenation of the subpath of \( \pi \) between \( v_0 \) and \( v_{i-m_1} \), the edge between \( v_{i-m_1} \) and \( q \), the edge between \( q \) and \( v_{i+m_2} \), and the subpath of \( \pi \) between \( v_{i+m_2} \) and \( v_n \) by \( \mu \). Notice that there does exist subpaths of the form \( v_{i-m_1-1}-v_{i-m_1}\huh  q \) or \( q\huh  v_{i+m_2}-v_{i+m_2+1} \) on \( \mu \), since otherwise \( \mathcal{H}_1 \) contains an almost directed cycle \( v_{i-m_1}\rightarrow v_{i-m_1+1}\rightarrow q\huh  v_{i-m_1} \) or \( v_{i+m_2}\rightarrow v_{i+m_2+1}\rightarrow q\huh  v_{i+m_2} \). Thus, \( \mu \) is $m$-open given \( Z \) and is shorter than \( \pi \), which is a contradiction. The only exception occurs when \( m_1 = m_2 = 1 \), in which case the concatenated path has the same length as \( \pi \), but a smaller sum of distances from colliders to \( Z \), which is also a contradiction. It follows that at least one of \( v_{i-m_1} \) or \( v_{i+m_2} \) is not adjacent to \( q \).

W.L.O.G., suppose \( v_{i-m_1} \) is not adjacent to \( q \). Since we assume that \( v_{i-1} \) is adjacent to \( v_i \) in \( \mathcal{H}_1 \), it follows that \( m_1 \geq 2 \). Denote by \( \pi_i \) the concatenation of the subpath of \( \pi \) between \( v_{i-m_1} \) and \( v_i \) along with the edge \( v_i \rightarrow q \) in \( \mathcal{H}_1 \), and let \( \pi_i' \) be the corresponding path in \( \mathcal{H}_2 \). By definition, \( \pi_i \) is a discriminating path for \( v_i \) in \( \mathcal{H}_1 \), and \( v_i \) is a non-collider on this path. By Lemma~\ref{lemma16}, all colliders on \( \pi \) remain colliders on \( \pi' \), the corresponding path to \( \pi \) in \( \mathcal{H}_2 \). Furthermore, by Lemma~\ref{lemma15}, \( \pi_i' \) is a discriminating path for \( v_i \). Hence, \( v_i \) is a non-collider on \( \pi_i' \) by assumption. Since \( \mathcal{H}_2 \) contains the edges \( v_{i-1} \huh v_i \) and \( v_{i-1} \rightarrow q \), we cannot have \( v_i \tut q \) in \( \mathcal{H}_2 \). Therefore, we conclude that \( v_i \rightarrow q \) in \( \mathcal{H}_2 \).

\end{proof}

\lemmatwentyone*
\begin{proof}
This proof is inspired by the Lemma 21 in \citep{spirtes1996polynomial}, with modifications to adapt it to \( \sigma \)-MAGs.

Suppose $\mu$ is in the form \( v_k = u_0 \tuh u_1 \tuh \cdots \tuh u_m = q \). Let \( \pi' \) and \( \mu' \) be the corresponding paths to \( \pi \) and \( \mu \), respectively, in \( \mathcal{H}_2 \). By Lemma~\ref{lemma20}, the first edge on \( \mu' \) is directed and points out of \( u_0 \). Suppose there exists a subpath of \( \mu' \) in the form \( u_i \rightarrow u_{i+1} - \cdots - u_j \hus u_{j+1} \), with $j\geq i + 1$. By Lemma~\ref{oldDef}, this implies that \( \mathcal{H}_2 \) contains a triple \( u_i \rightarrow u_j \hus u_{j+1} \). Since \( \mu \) contains the edge \( u_j \rightarrow u_{j+1} \), and \( \mathcal{H}_1 \) and \( \mathcal{H}_2 \) share the same unshielded colliders, it follows that \( u_i \) and \( u_{j+1} \) must be adjacent in \( \mathcal{H}_2 \), and consequently, also adjacent in \( \mathcal{H}_1 \). Notice that \( \mu \) contains a subpath \( u_i \rightarrow \cdots \rightarrow u_j \rightarrow u_{j+1} \). If the edge between \( u_i \) and \( u_{j+1} \) were \( u_i \hus  u_{j+1} \), this would lead to a directed or almost directed cycle, which is a contradiction. If the edge were undirected, i.e., \( u_i - u_{j+1} \), then by Lemma~\ref{oldDef}, it would also contain the edge \( u_j \rightarrow u_i \), creating a directed cycle, which is again a contradiction. If the edge were \( u_i \rightarrow u_{j+1} \), replacing the subpath of \( \pi \) between \( u_i \) and \( u_{j+1} \) with \( u_i \rightarrow u_{j+1} \) would yield a shorter directed path from \( v_k \) to \( q \), which contradicts the assumption that \( \pi \) is the shortest path. Therefore, no such subpath exists in \( \mu' \), and we conclude that \( \mu' \) must be in the form \( u_0 \rightarrow u_1 \tus \cdots \tus u_m \). By Lemma~\ref{lemmaant}, this implies that \( v_k \) is an ancestor of \( q \) in \( \mathcal{H}_2 \).
\end{proof}

\thmthreeb*
\begin{proof}
Suppose $\mathcal{H}_1, \mathcal{H}_2$ satisfy Condition~\ref{condition}. For any subsets of the nodes $X, Y, Z \subseteq \mathcal{V}$, if we have $X \overset{m}{\underset{\mathcal{H}_1}{\not\perp}} Y \mid Z$ in $\mathcal{H}_1$, then there exists a shortest $m$-open path $\pi$ given $Z$ with the smallest collider distance sum to $Z$ from a node $v_0 \in X$ to a node $v_n \in Y$ in $\mathcal{H}_1$. Let $\pi'$ be the corresponding path in $\mathcal{H}_2$. By Lemma~\ref{lemma16}, we know that all colliders on $\pi$ remain colliders on $\pi'$. Furthermore, by Lemma~\ref{lemma21}, these colliders are still ancestors of $Z$ in $\mathcal{H}_2$. Additionally, by Lemma~\ref{lemma16}, all non-colliders on $\pi$ remain non-colliders on $\pi'$, and they do not belong to $Z$ since $\pi$ is $Z$-$m$-open. 

We now verify whether $\pi'$ contains a subpath of the form \( v_i \suh v_{i+1} \tut v_{i+2} \) or \( v_i \tut v_{i+1} \hus v_{i+2} \). Without loss of generality, assume the former case holds. By Lemma~\ref{oldDef}, $\mathcal{H}_2$ also contains an edge of the form $v_i \suh v_{i+2}$. Since $\mathcal{H}_1$ and $\mathcal{H}_2$ share the same adjacencies, $v_{i+1}$ is a covered node on $\pi$. Applying Lemma~\ref{lemma7} and Lemma~\ref{lemma16}, we deduce that $v_{i+2}$ is a non-collider on $\pi$, and $v_i$ is a collider on $\pi$ with a directed edge $v_i \rightarrow v_{i+2}$. By Lemma~\ref{lemma16}, $v_i$ must also be a collider on $\pi'$, implying the existence of the edges:
\[
v_i \huh v_{i+1} \quad \text{on } \pi', \quad \text{and} \quad v_i \huh v_{i+2} \quad \text{in } \mathcal{H}_2.
\]
Furthermore, by Proposition~\ref{lemma9}, there exists a node $v_j$ with $0 \leq j < i$ such that the subpath $\mu$ of $\pi$ between $v_j$ and $v_{i+2}$ is a discriminating path for $v_{i+1}$. Applying Lemma~\ref{lemma15}, the corresponding path $\mu'$ in $\mathcal{H}_2$ is also a discriminating path for $v_{i+1}$. However, this contradicts the presence of the edge $v_i \huh  v_{i+2}$ in $\mathcal{H}_2$. Thus, $\pi'$ cannot contain such subpaths, and it follows that $\pi'$ is $m$-open given $Z$ in $\mathcal{H}_2$. The case where \( X \overset{m}{\underset{\mathcal{H}_2}{\not\perp}} Y \mid Z \) in $\mathcal{H}_2$ is analogous. We therefore conclude that $\mathcal{H}_1$ and $\mathcal{H}_2$ are $m$-Markov equivalent.
\end{proof}

\thmthree*
\begin{proof}
Obviously obtained by Lemma~\ref{thm3a} and Lemma~\ref{thm3b}.
\end{proof}

\thmfour*
\begin{proof}
Obviously obtained by Theorem~\ref{thm2} and Theorem~\ref{thm3}.
\end{proof}

\end{document}